\newtheorem{theorem}{Theorem}[section]
\newtheorem{corollary}[theorem]{Corollary}
\newtheorem{proposition}[theorem]{Proposition}
\newtheorem{lemma}[theorem]{Lemma}
\newtheorem{remark}[theorem]{Remark}
\theoremstyle{definition}
\newtheorem{definition}[theorem]{Definition}
\definecolor{verylightgray}{rgb}{0.95,0.95,0.95}
\date{\today}
\title{Measuring 3D tree imbalance of plant models using graph-theoretical approaches}
\author{
Sophie J. Kersting \\
  Institute of Mathematics and Computer Science \\ 
  University of Greifswald, Germany\\
  \texttt{sophie.kersting@uni-greifswald.de} \\
   \And A. Luise K{\"u}hn \\
  Institute of Mathematics and Computer Science \\ 
  University of Greifswald, Germany,\\
  Department of Psychiatry and Psychotherapy\\
  University Medicine Greifswald, Germany\\
  \texttt{luise.kuehn@uni-greifswald.de}\\
   \And Mareike Fischer \\
  Institute of Mathematics and Computer Science \\ 
  University of Greifswald, Germany\\
  \texttt{email@mareikefischer.de} }
\begin{document}
\maketitle

\begin{abstract}
Imbalance in the 3D structure of plants can be an important indicator of insufficient light or nutrient supply, as well as excessive wind, (formerly present) physical barriers, neighbor or storm damage. It can also be a simple means to detect certain illnesses, since some diseases like the apple proliferation disease, an infection with the barley yellow dwarf virus or plant canker can cause abnormal growth, like \enquote{witches' brooms} or burls, resulting in a deviating 3D plant architecture. However, quantifying imbalance of plant growth is not an easy task, and it requires a mathematically sound 3D model of plants to which imbalance indices can be applied. Current models of plants are often based on stacked cylinders or voxel matrices and do not allow for measuring the degree of 3D imbalance in the branching structure of the whole plant. 

On the other hand, various imbalance indices are readily available for so-called graph-theoretical trees and are frequently used in areas like phylogenetics and computer science. While only some basic ideas of these indices can be transferred to the 3D setting, graph-theoretical trees are a logical foundation for 3D plant models that allow for elegant and natural imbalance measures.

In this manuscript, our aim is thus threefold: We first present a new graph-theoretical 3D model of plants and discuss desirable properties of imbalance measures in the 3D setting. We then introduce and analyze eight different 3D imbalance indices and their properties. Thirdly, we illustrate all our findings using a data set of 63 bush beans. Moreover, we implemented all our indices in the publicly available \textsf{R}-software package \textsf{treeDbalance} accompanying this manuscript. Using this software package, all presented 3D imbalance indices can be computed in linear time (depending on the size of the 3D plant model), and the package also provides an implementation of the algorithm to obtain a perfectly balanced version of a given 3D plant model (also in linear time).
\end{abstract}

\keywords{tree imbalance \and balance index \and 3D \and plant architecture \and plant shape parameters} 

\section{Introduction}

Plant growth is generally known to be influenced by many factors. In particular, there are many factors that can cause uneven growth and thus an imbalanced 3D structure of the plant, e.g., etiolation, excessive wind \cite{jackson_finite_2019, schelhaas_introducing_2007}, or heterogeneous soil conditions. Moreover, diseases like the apple proliferation disease \cite{rid_apple_2016} or an infection with the barley yellow dwarf virus \cite{baltenberger_reactions_1987} can cause altered growth, e.g., \enquote{witches' brooms}, and thus a deviating branching structure. Similarly, burls can be caused by plant canker. Furthermore, the 3D structure can be disturbed when neighboring trees compete for resources or damage each other \cite{kunz_neighbour_2019, lau_quantifying_2018}. Since the plant architecture has a direct influence on the plant's biophysical processes and capabilities \cite{lau_quantifying_2018}, it is of the utmost interest to analyze the 3D plant structure as an indicator of its environmental parameters. 

Research on generating fitting 3D models for actual plants and their analyses have gained considerable interest in the literature in recent years \cite{lau_quantifying_2018, burt_rapid_2013, demol_volumetric_2022, calders_realistic_2018, calders_nondestructive_2015, brede_non_2019, balandier_simwal_2000}. Therefore, there already exists a large variety of concepts on how to model 3D roots or plants and their growth \cite{dunbabin_modelling_2013,dupuy_root_2010,prusinkiewicz_algorithmic_2012}. For example, there are 3D voxel matrix based models \cite{mulia_reconciling_2010, dunbabin_modelling_2002}, quantitative structure models (QSM) \cite{raumonen_fast_2013, hackenberg_simpletree_2015, hackenberg_simpleforest_2021}, a SIMWAL model for a walnut tree \cite{balandier_simwal_2000}, root models of OpenSimRoot \cite{postma_opensimroot_2017}, and Lindenmayer-system based models \cite{prusinkiewicz_algorithmic_2012}. In these models, for instance, several values for the characterization of the size and shape of the crown of a tree are used (e.g., length, width, surface area, compactness, displacement), also in relation to tree height or other aspects of the tree (e.g., the ratio of crown-width-to-tree-height or crown-displacement-to-tree-height) \cite{kunz_neighbour_2019}. For roots, it is common to measure, e.g., the total root length, branching angles, root depth, ratio depth-to-width, and the distribution of root length in various growth medium layers to describe their  architecture \cite{fang_3d_2009}. However, these established measurements of 3D plant models do \emph{not} take into account the graph-theoretical tree shape of the tree, i.e., its branching structure. As a result, they are unsuitable to identify atypical branch growth.

Therefore, we introduce a new format for 3D models which, compared to the ones known from the literature, allows us to analyze the 3D branching structure of the whole plant, and to benefit from results from other research areas, such as phylogenetics or computer science, to efficiently study those structures. Furthermore, we establish a range of desirable properties for 3D imbalance measurements and ultimately introduce and analyze eight 3D imbalance indices which fulfill these. Additionally, we provide an understanding of their mathematical properties with regard to their application, e.g., robustness to deviations in the 3D model and computation time. A data set of bush beans is our contribution to building a freely accessible library of explicitly modeled plants and trees -- one of the common goals in the study of plant architecture \cite{calders_realistic_2018}. It also serves as a suitable application to illustrate all theoretical concepts introduced in the present manuscript.

Our approach to measuring 3D imbalance is inspired by the physics of a classic (crib) mobile or pendulum, i.e., in graph-theoretical terms it is measured if the centroid of a subtree follows the growth direction of its incoming edge. It is not a completely new idea to consider the angle at which a new root starts to grow from the old root \cite{morris_shaping_2017, landl_new_2017, postma_opensimroot_2017, dupuy_root_2010, dunbabin_modelling_2013}, but whereas old methods only take into account the old and the new growing root sections (also mostly only for the main root segments \cite{fang_3d_2009}), our approach also looks at the entire pending subtree and can therefore also measure imbalance that occurs further down towards the tips of the branching structure and influences the whole substructure. Our measures take into account the \textit{internal} imbalance of a structure, which means investigating how a branch grew with respect to the part that had already grown before. Additionally, the basic idea of our measures is also suitable to assess the \textit{external} imbalance, i.e., how imbalanced the plant grows with respect to the horizontal plane or a vertical axis. For the bean data set, we show that it is important to incorporate both the internal and external imbalance as shape differentiating factors to distinguish various growth patterns.

While common notions of node imbalance of established (im)balance indices for rooted non-3D trees in phylogenetics, e.g., the Colless index \cite{colless_review_1982, shao_tree_1990}, the Mean $I'$ index \cite{fusco_new_1995, purvis_evaluating_2002}, the symmetry nodes index \cite{kersting_measuring_2021} or Rogers $J$ \cite{rogers_central_1996}, require branching (or even binary) vertices to measure the degree of symmetry in a tree, the advantage of the 3D imbalance statistics presented in this manuscript is that they are also applicable to trees with a large portion -- or even only -- nodes with in- and out-degree 1. As such, they can also take into account how single branches, which are approximated by path graphs, run through 3D space. This factor made it possible to analyze the 3D bean models described in Section \ref{sec:bean_data}, as they mostly share the same non-3D tree shape when ignoring nodes of in- and out-degree 1 and only differ in the 3D shape and position of their path-graph-like subsections.

\begin{figure}[htbp]
	\centering
    \begin{tikzpicture}[scale=1, > = stealth]
    \draw[rounded corners, draw=black!40]  (1.5,0.1) |- (1.9,5.5) |- cycle;
    \node[align=left, rotate=90] at (1.7,3) {\textcolor{black!40}{\scriptsize INTRODUCTION}};
    \draw[rounded corners, draw=black!40]  (2.1,5.1) |- (5.9,5.5) |- cycle;
    \node[align=left] at (4,5.3) {\textcolor{black!40}{\scriptsize Section 2}};
    \draw[rounded corners]  (2.1,0.1) |- (3.4,4.9) |- cycle;
    \node[align=left, anchor=north west] at (2.15,4.8) {\scriptsize DATA\\ \scriptsize SET};
    \node[] at (2.7,1.5) {\includegraphics[scale=0.1]{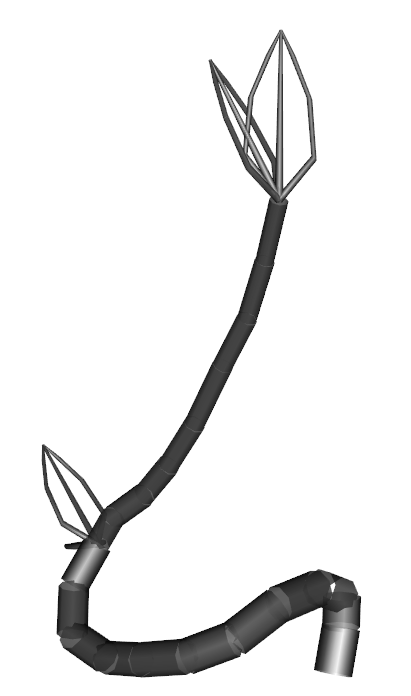}};
    \draw[rounded corners]  (3.6,0.1) |- (5.9,4.9) |- cycle;
    \node[align=left, anchor=north west] at (3.65,4.8) {\scriptsize MATH.\\ \scriptsize NOTATIONS \& \\ \scriptsize DEFINITIONS};
    \node at (4.7,1.5) {\begin{tikzpicture}[scale=0.5]
	\tikzset{std/.style = {shape=circle, draw, fill=white, minimum size = 0.2cm, scale=0.6}}
    \tikzset{dot/.style = {shape=circle, draw, fill=black, minimum size = 0.2cm, scale=0.1}}
	\node[dot] (10) at (3.9,3.5) {$x$};
	\node[dot] (1) at (2.5,3.9) {$q$};
	\node[dot] (2) at (4.8,4.2) {$r$};
	\node[dot] (3) at (3.9,3) {$s$};
	\node[dot] (4) at (5,2) {$v$};
	\node[dot] (5) at (2.6,2) {$u$};
	\node[dot] (6) at (4.1,1) {$t$};
	\node[dot] (7) at (4,0) {$\rho$};
	
	\path[-, line width=1.5mm] (7) edge node {} (6);
	\path[-, line width=0.7mm] (6) edge node {} (5);
	\path[-, line width=0.8mm] (6) edge node {} (4);
	\path[-, line width=0.8mm] (6) edge node {} (3);
	\path[-, line width=0.5mm] (3) edge node {} (2);
	\path[-, line width=0.5mm] (3) edge node {} (1);
	\path[-, line width=0.5mm] (3) edge node {} (10);
	
	\node[std] (10) at (3.9,3.6) {};
	\node[std] (1) at (2.5,3.9) {};
	\node[std] (2) at (4.8,4.2) {};
	\node[std] (3) at (3.9,3) {};
	\node[std] (4) at (5,2) {};
	\node[std] (5) at (2.6,2) {};
	\node[std] (6) at (4.1,1) {};
	\node[std] (7) at (4,0) {};
    	\end{tikzpicture}};
    \draw[rounded corners, draw=black!40]  (6.1,5.1) |- (8.3,5.5) |- cycle;
    \node[align=left] at (7.2,5.3) {\textcolor{black!40}{\scriptsize Section 3}};
    \draw[rounded corners]  (6.1,0.1) |- (8.3,4.9) |- cycle;
    \node[align=left, anchor=north west] at (6.15,4.8) {\scriptsize MATH.\\ \scriptsize APPROACH \\ \scriptsize TO 3D \\ \scriptsize IMBALANCE};
    \node[rotate=90] at (7.2,1.5) {\begin{tikzpicture}[scale=0.4]
	\tikzset{std/.style = {shape=circle, draw, minimum size = 0.2cm, scale=0.6}}
	\tikzset{dot/.style = {shape=circle, draw, fill=black, minimum size = 0.2cm, scale=0.5}}
	\tikzset{smalldot/.style = {shape=circle, draw, black, fill=black, minimum size = 0.2cm, scale=0.3}}
	\node[smalldot] (p) at (-1,0) {};
	\node[std, fill=white] (v) at (0,0) {};
	\node[smalldot] (a) at (1,0.7) {};
	\node[smalldot] (b) at (2,1) {};
	\node[smalldot] (c) at (1.3,2.5) {};
	\node[smalldot] (d) at (3,2) {};
	\node[smalldot] (e) at (2.3,0.9) {};
	\node[smalldot] (f) at (1.4,3.1) {};
	\node[smalldot] (g) at (2.2,2.8) {};
	
	\path[-, line width=0.8mm] (p) edge node {} (v);
	\path[-, line width=0.8mm] (v) edge node {} (a);
	\path[-, line width=0.8mm] (a) edge node {} (b);
	\path[-, line width=0.8mm] (a) edge node {} (c);
	\path[-, line width=0.8mm] (b) edge node {} (d);
	\path[-, line width=0.8mm] (b) edge node {} (e);
	\path[-, line width=0.8mm] (c) edge node {} (f);
	\path[-, line width=0.8mm] (c) edge node {} (g);
	
	\node[dot, gray] (c) at (1.8,1.8) {};
    \draw[-,thin, gray] (-2,0)--(4.2,0) coordinate (startangle);
    \draw[-,thin, gray] (-1,-1)--(3,3) coordinate (endangle);
	\pic [draw, gray, <->, dotted, angle eccentricity=1.2, scale=3.2, line width=0.5mm] {angle = startangle--v--endangle};
    	\end{tikzpicture}};
    \draw[rounded corners, draw=black!40]  (8.5,5.1) |- (10.7,5.5) |- cycle;
    \node[align=left] at (9.6,5.3) {\textcolor{black!40}{\scriptsize Section 4}};
    \draw[rounded corners]  (8.5,0.1) |- (10.7,4.9) |- cycle;
    \node[align=left, anchor=north west] at (8.55,4.8) {\scriptsize RESULT:\\ \scriptsize 3D \\ \scriptsize IMBALANCE \\ \scriptsize INDICES};
    \node[] at (9.6,1.5) {\includegraphics[scale=0.1]{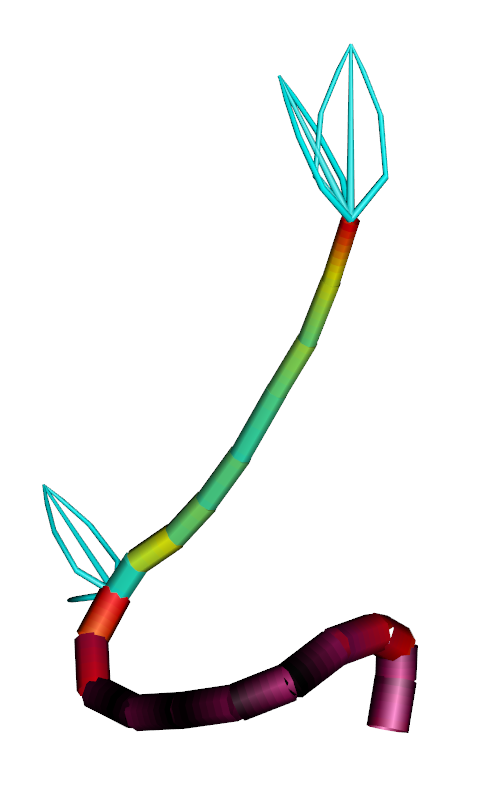}};
    \draw[rounded corners, draw=black!40]  (10.9,5.1) |- (12.7,5.5) |- cycle;
    \node[align=left] at (11.8,5.3) {\textcolor{black!40}{\scriptsize Section 5}};
    \draw[rounded corners]  (10.9,0.1) |- (12.7,4.9) |- cycle;
    \node[align=left, anchor=north west] at (10.95,4.8) {\scriptsize GUIDE \\ \scriptsize FOR \\ \scriptsize USERS};
    \node[] at (11.8,1.8) {\includegraphics[scale=0.07]{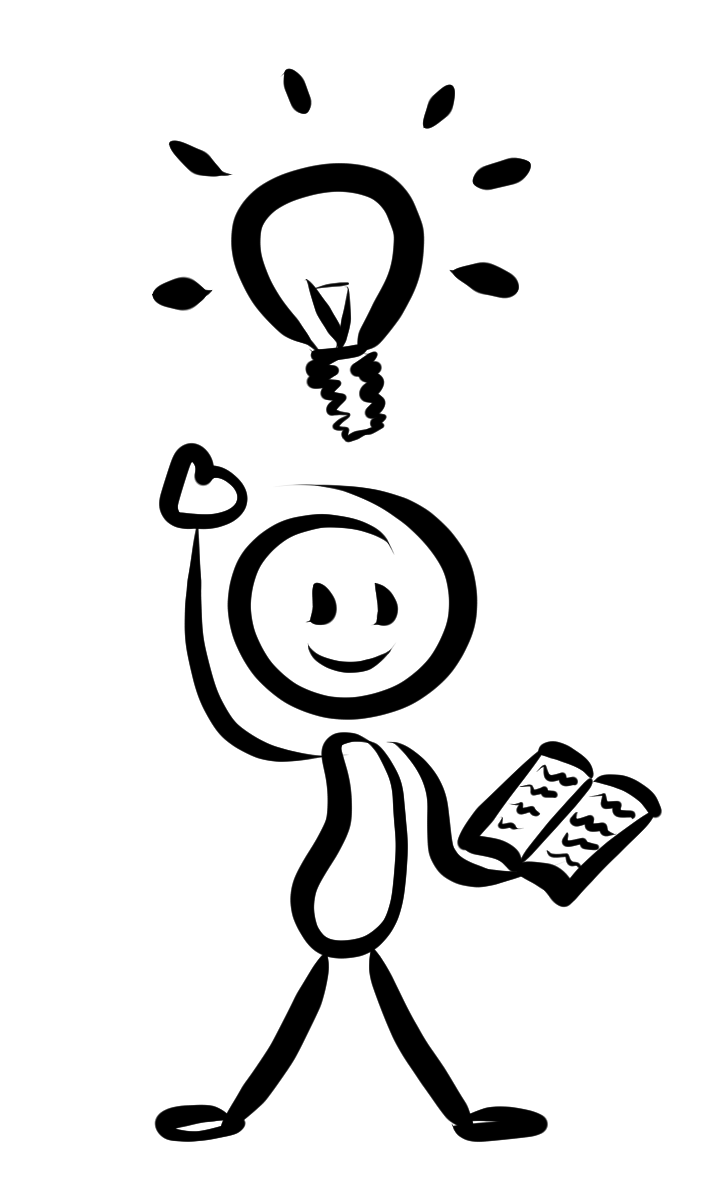}};
    \draw[rounded corners, draw=black!40]  (12.9,2.8) |- (13.9,5.5) |- cycle;
    \node[align=center, rotate=90] at (13.4,4.1) {\textcolor{black!40}{\scriptsize CONCLUSION}};
    \draw[rounded corners, draw=black!40]  (12.9,0.1) |- (13.9,2.6) |- cycle;
    \node[align=center, rotate=90] at (13.4,1.3) {\textcolor{black!40}{\scriptsize DISCUSSION} \\ \textcolor{black!40}{\scriptsize \& OUTLOOK}};
    \draw[rounded corners, draw=black!40]  (14.1,5.1) |- (15.9,5.5) |- cycle;
    \node[align=left] at (15,5.25) {\textcolor{black!40}{\scriptsize Appendices}};
    \draw[rounded corners]  (14.1,2) |- (15.9,4.9) |- cycle;
    \node[align=left, anchor=north west] at (14.15,4.8) {\scriptsize SOFTWARE \\ \scriptsize PACKAGE \\ \scriptsize \textsf{treeDbalance}};
    \node[align=left, anchor=north, circle, fill=gray, scale=1.2] at (15,3.2) {\textcolor{white}{\texttt{R}}};
    \draw[rounded corners]  (14.1,0.1) |- (15.9,1.8) |- cycle;
    \node[align=left, anchor=north west] at (14.15,1.7) {\scriptsize MATH. \\ \scriptsize PROPERTIES \\ \scriptsize \& PROOFS};
    \end{tikzpicture}
\caption{Structure of the manuscript.}
\label{fig:sections}
\end{figure}

The structure of our manuscript is as follows (see Figure \ref{fig:sections}): First, a data set of 63 3D bean models is introduced, which will be used throughout all sections to illustrate the reasoning and applicability of the mentioned concepts (Section \ref{sec:bean_data}). We then have a look at some necessary definitions and notations (Section \ref{sec:prelim_math}). Afterwards, new approaches to measuring 3D imbalance are introduced as well as several criteria that a good 3D imbalance index should possess (Section \ref{sec:theory_foundation}). Then, we present several 3D imbalance indices that can summarize internal tree imbalance in a single value (Section \ref{sec:integral_imbal_inds}). In Section \ref{sec:comp_indices} and \ref{sec:math_imbal_inds} the properties of these indices including their disagreement ratios, recursiveness, computation time as well as extremal values and trees are analyzed. We then discuss some general guidelines for the application of the 3D imbalance indices and give a concrete example, in which they are applied to the above-mentioned data set of beans to differentiate their imbalance patterns (Section \ref{sec:application}). The accompanying open and freely available software package \textsf{treeDbalance} written in the programming language \textsf{R} and the implemented functions are discussed in detail in Appendix \ref{sec:software}, followed by Appendix \ref{sec:appendix_suppl}, which contains further properties and proofs.

\section{Data and preliminaries} \label{sec:prelim}

The following section introduces our data set and the mathematical concepts and definitions underlying all subsequent methods described in the present manuscript.

\subsection{Data} \label{sec:bean_data}

In this section, we introduce a data set of 63 bush beans, which will be used throughout the following sections to illustrate the presented concepts (in particular the 3D models and their imbalance indices) and give impressions of how the measures could be applied in practice. The goal is not mainly to obtain specific results about this data set in particular, but to see in what ways one can, for example, recognize and differentiate different imbalance patterns or how one can represent the imbalance of a tree. To acquire such a plant data set which would on the one hand give us different and clearly visible imbalance patterns and which would on the other hand not be too complex in its branching structure, we grew 66 bush beans (Saxa, Phaseolus vulgaris var. nanus) over the course of 17 days from planting the seeds to the final 3D measurement (63 of these germinated, ID 8, 29, and 60 did not grow). The plants were watered approximately every other day as needed. Some plants were horizontally rotated randomly every day to induce a wavy stem growth. Others, namely the beans with IDs 18, 25, 28, 37, and 64, were placed under caps which only had a hole to the side as soon as they reached a height of 2-3 cm to create a proper 3D imbalance as they had to grow sideways first and then forward to reach the light (see Figure \ref{fig:example3Dbeans} on the left). 

At the end of the time period, all plants were photographed from the front and from one side in front of a coordinate grid (the photos have been made publicly available at  \url{http://www.mareikefischer.de/SupplementaryMaterial/3Dimbalance_bean_data.zip}). Since the leaves and stems of the beans do not heavily overlap, these two photos were sufficient to create 3D tree models of each bean. The stems were approximated with connected cylinders (more details on the actual mathematical model of so-called \emph{rooted 3D trees} can be found in Section \ref{sec:prelim_math}). The start and end points of the cylinders were marked on both photos at the same height ($z$-coordinate). As the photos are at a right angle, the photo from the front was used to determine the $x$-coordinate and the photo from the side for the $y$-coordinate of each start and end point. Moreover, the diameter of the cylinders was measured as well. The leaves were also approximated by one cylinder, although with a fixed diameter of 0.1 cm because the leaf area and weight could not be determined from the photos. Thus, bigger leaves might be slightly underweighted with respect to other parts of the plant model.

\begin{figure}[htbp]
	\centering
    \includegraphics[width=0.33\textwidth]{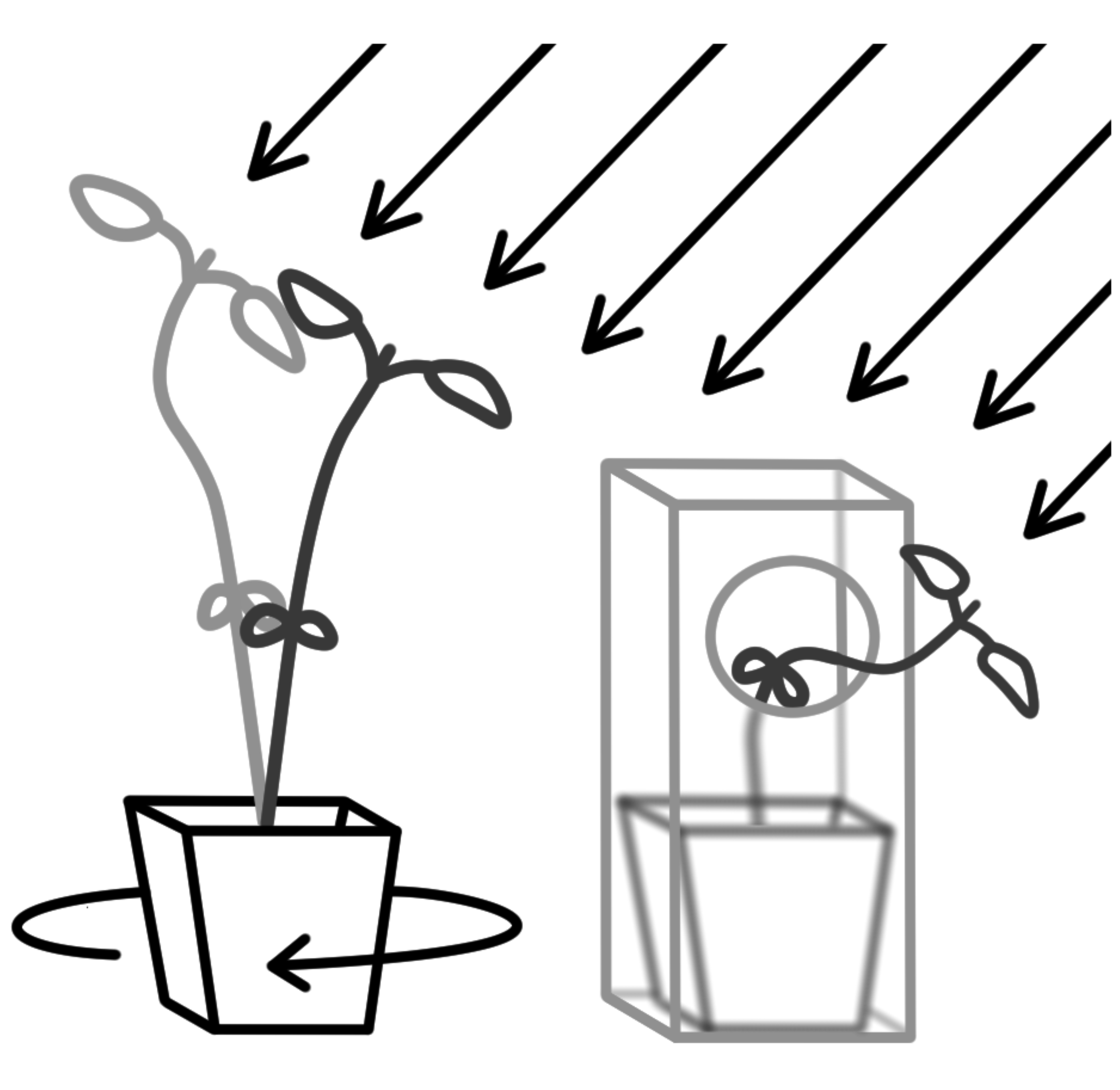}
    \includegraphics[width=0.62\textwidth]{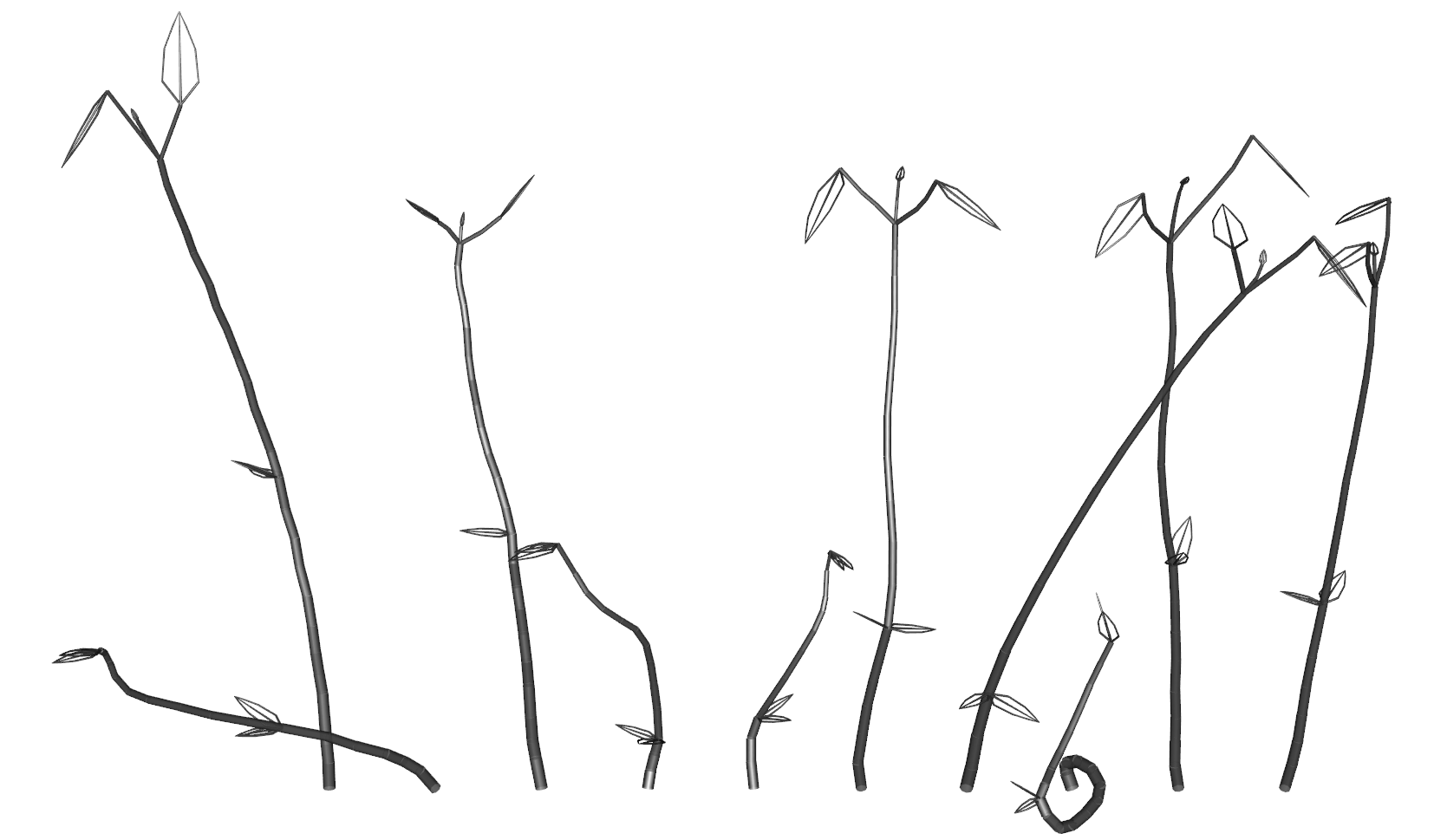}
	\caption{Left: Sketch showing the horizontal rotation of plants to create a wavy stem growth and the usage of cardboard caps with holes to enforce a sideways Z-shaped growth pattern. Right: Visualizations of several 3D models of bush beans (ordered by ID: 10, 11, 19, 28, 37, 46, 54, 59, 62, and 63).}
	\label{fig:example3Dbeans}
\end{figure}

All these 3D models were digitized and the complete data set is included within the \textsf{R} package \textsf{treeDbalance}, which was developed alongside this project. In Appendix \ref{sec:software} it is explained how the data format is structured and how to access and analyze these and other 3D models. Several examples can be seen in Figure \ref{fig:example3Dbeans} on the right. The image was created using the \texttt{addPhylo3D} function of the \textsf{treeDbalance} package, which allows us to display leaves with a standardized leaf shape instead of merely a cylinder. The experiment yielded 3D models with different imbalance patterns: beans that grew rather straight like ID 63, wavy like IDs 19, 64, and 62, or in a curve like IDs 10 and 54. By chance, we could also observe some naturally special growing individuals like the late germinating beans with ID 11 and the \enquote{looping} 59. The beans with IDs 28 and 37 in the middle of the figure are some of the beans that were placed under the caps and therefore grew less and in a more crooked way. All 63 germinated plants reached an average height of 19.35 cm (max. 28.2 cm, min. 4.35 cm) with an average total stem/branch length of 31.60 cm (max. 49.59 cm, min. 5.74 cm) including leaves and 23.23 cm (max. 36.03 cm, min. 4.52 cm) excluding leaves. 

\subsection{Notation and definitions} \label{sec:prelim_math}

Next, we provide an overview of all needed mathematical concepts and definitions that build the foundation of all new concepts in this manuscript. Here, the first of our main goals is addressed as well: the introduction of a good 3D model (or format) for plants, namely rooted 3D trees.

\paragraph{Rooted (2D and 3D) trees}
To begin with, note that a \emph{rooted tree} (also known as arborescence) is a non-empty connected directed graph $T$ with finite \emph{vertex} set $V(T)$ and \emph{edge} set $E(T)\subseteq V(T)^2$, containing precisely one vertex, called the \emph{root} and denoted $\rho$, such that all edges are directed away from the root. The degree of a vertex is the number of incident edges, consisting of the incoming edges, yielding the \emph{in-degree}, and the edges leaving the vertex, yielding the \emph{out-degree}. The root is the only vertex with in-degree 0. Moreover, there are no vertices with in-degree $>1$. The \emph{leaf} set, which is denoted by $V_L(T)$, is the set of all nodes in $T$ that have out-degree 0, and $n$ is used to refer to its cardinality, i.e., $n=\vert V_L(T)\vert $. Note that this graph-theoretical leaf set does not necessarily only contain (the tips/endpoints of) the actual organic leaves of the tree, but also the tips of branches and twigs. An edge leading to a leaf is called a \emph{pending edge}. The set of \emph{inner vertices}, denoted by $\mathring{V}(T)$, is the set of all vertices with out-degree $\geq 1$, i.e., $\mathring{V}(T)=V(T)\setminus V_L(T)$, and $m$ refers to its cardinality. Note that $\vert V(T)\vert= 1$ is the only case where the root is the only vertex in the tree and thus, in particular, a leaf. For $\vert V(T)\vert \geq 2$, the root is an inner vertex. 

Now, a \emph{rooted 2D or 3D tree} $\mathsf{T}=(T,w)$ is a pair consisting of a rooted tree $T=(V,E)$, the \emph{topology}, in which $V(T)$ is a subset of $\mathbb{R}^2$ or $\mathbb{R}^3$, i.e., each vertex $v$ is a distinct point in the two or three dimensional space\footnote{To be mathematically precise, please note that elements of $\mathbb{R}^2$ or $\mathbb{R}^3$ will be treated as points as well as vectors depending on the context.}, combined with a \emph{weight function} $w$ that assigns a weight $w(e)>0$ to each edge $e \in E(T)$. Depending on the application, the weight of an edge $e$ is typically its volume or its physical heaviness, i.e., volume multiplied by density. Just as the edge weights are required to be $>0$, $\ell(e)>0$ is also required as the vertices are distinct.
With the \emph{width} of an edge $e$ we refer to the ratio $w(e)/\ell(e)$.
For simplification purposes $V(\mathsf{T})=V(T)$ and $E(\mathsf{T})=E(T)$ is used, and whenever there is no ambiguity $E$ and $V$ are used instead of $E(\mathsf{T})$ and $V(\mathsf{T})$. $\Upsilon$ denotes the set of all 2D or 3D trees.  

Figure \ref{fig:example2DTree} shows an example 3D tree. Note that the topology of a rooted 2D or 3D tree -- compared to most publications on trees, e.g., in mathematical phylogenetics -- allows for vertices of in-degree and out-degree 1, which is important as the course of a branch or root will most likely not be straight but can be approximated by multiple edges linked together by vertices of in- and out-degree 1. The root might also have only an out-degree of 1.

\begin{figure}[htbp]
	\centering
    \begin{tikzpicture}[scale=1, > = stealth]
	\centering
    \draw[help lines, color=gray!60, dashed] (-0.5,-0.5) grid (6.9,3.7);
    \draw[->,ultra thick] (-0.5,0)--(7,0) node[right]{$x_1$};
    \draw[->,ultra thick] (0,-0.5)--(0,3.8) node[above]{$x_3$};
    \draw[->,ultra thick] (-0.5,-0.5)--(0.5,0.5) node[above]{\hspace{0.5cm}$x_2$};
	\tikzset{std/.style = {shape=circle, draw, fill=white, minimum size = 0.9cm, scale=0.85}}
	\tikzset{dot/.style = {shape=circle, draw, fill=black, minimum size = 0.2cm, scale=0.5}}
	\node[dot] (1) at (1,3) {$q$};
	\node[dot] (2) at (3,3) {$r$};
	\node[dot] (3) at (5,3) {$s$};
	\node[dot] (4) at (5,2) {$v$};
	\node[dot] (5) at (2,2) {$u$};
	\node[dot] (6) at (4,1) {$t$};
	\node[dot] (7) at (4,0) {$\rho$};
	
	\path[-, line width=6mm, right] (7) edge node {$0.09\pi$} (6);
	\path[-, line width=4mm, below] (6) edge node {\hspace{-0.9cm}$0.04\sqrt{5}\pi$} (5);
	\path[-, line width=2mm, below] (6) edge node {\hspace{1.3cm}$0.01\sqrt{2}\pi$} (4);
	\path[-, line width=2mm, right] (4) edge node {$0.01\pi$} (3);
	\path[-, line width=2mm, below] (5) edge node {\hspace{1.4cm}$0.01\sqrt{2}\pi$} (2);
	\path[-, line width=2mm, below] (5) edge node {\hspace{-1.3cm}$0.01\sqrt{2}\pi$} (1);
	
	\node[std] (1) at (1,3) {$q$};
	\node[std] (2) at (3,3) {$r$};
	\node[std] (3) at (5,3) {$s$};
	\node[std] (4) at (5,2) {$v$};
	\node[std] (5) at (2,2) {$u$};
	\node[std] (6) at (4,1) {$t$};
	\node[std] (7) at (4,0) {$\rho$};
	\node[align=left] at (11.5,1.9) {$V(\mathsf{T})=\left\{q=\begin{pmatrix}1\\0\\3\end{pmatrix}, r=\begin{pmatrix}3\\0\\3\end{pmatrix}, s=\begin{pmatrix}5\\0\\3\end{pmatrix}, t=\begin{pmatrix}4\\0\\1\end{pmatrix},\right.$ \\ 
	\phantom{$V(\mathsf{T})=\left\{\right.$}$\left. u=\begin{pmatrix}2\\0\\2\end{pmatrix}, v=\begin{pmatrix}5\\0\\2\end{pmatrix}, \rho=\begin{pmatrix}4\\0\\0\end{pmatrix}\right\}$ \vspace{0.3cm}  \\ $E(\mathsf{T})=\{(\rho,t),(t,v),(v,s),(t,u),(u,q),(u,r)\}$};
    \end{tikzpicture}
	\caption{\textbf{Conceptual diagram of a rooted 3D tree.} Example of a 3D tree $\mathsf{T}$ with its vertex and edge sets. It corresponds to a 2D tree as the second coordinate is constant. The edges represent cylinders with radii 0.1, 0.2, or 0.3, respectively, as indicated by their line width. The edges are marked with their edge weights; in this case, their volume, i.e., their squared radius multiplied by the product of $\pi$ and their length.}
	\label{fig:example2DTree}
\end{figure}
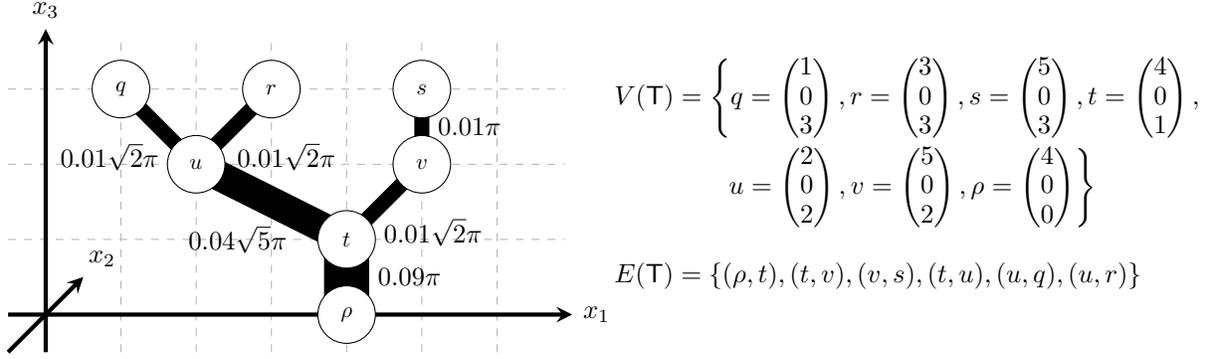

Given a rooted tree $T$, we call $[v_1,v_2,...,v_t]$ with $t \in \mathbb{N}_{\geq1}$ and with $v_1,v_2,...,v_t \ \in V(T)$ a \emph{(directed) path} if $(v_i,v_{i+1})\in E(T)$ for all $i=1,...,t-1$; its path length is the sum of the lengths of its edges $\displaystyle\sum_{i=1}^{t-1}\ell(v_i,v_{i+1})$. Note that, since only (directed) rooted trees are considered, this rules out multiple use of nodes or edges. 
Whenever there exists a path from a vertex $u$ to a vertex $v$ in $T$, $u$ is an \emph{ancestor} of $v$ and $v$ is a \emph{descendant} of $u$. If, additionally, $u$ and $v$ are connected by an edge, i.e., $(u,v)\in E(T)$, $u$ is called the parent of $v$ -- denoted by $p(v)$ -- and $v$ the child of $u$. To keep everything more concise we will sometimes use $e_v$ to refer to the incoming edge $(p(v),v)$ of a node $v$. Note that we can thus describe the complete set of $E(T)$ with $\left\{ e_v \vert v \in V(T)\setminus\{\rho\} \right\}$ as each node except the root has a unique parent.
The number of edges of the unique path from the root to a node $v\in V$ is called its depth, denoted $\delta(v)$, i.e., the root itself has depth 0, its children depth 1, and so forth. An \emph{edge subdivision} of a given edge $e=(p,v)\in E(\mathsf{T})$ in a rooted 3D tree $\mathsf{T}$ is a process in which a new node $s_x=v+(p-v)\cdot x$ with $x\in (0,1)$ is introduced on this edge, the edge $e=(p,v)$ is deleted and instead the edges $e_1=(p,s_x)$ and $e_2=(s_x,v)$ with weights $w(e_1)=w(e)\cdot(1-x)$ and $w(e_2)=w(e)\cdot x$ are added, i.e., $e$, $e_1$, and $e_2$ all have the same width $w(e)/\ell(e)=w(e_1)/\ell(e_1)=w(e_2)/\ell(e_2)$. All other nodes and edges -- including their weights -- remain untouched.

Given a rooted tree $T$ and a vertex $v\in V(T)$, we denote by $T_v$ the \emph{pending subtree} of $T$ rooted in $v$, i.e., $T_v$ consists of $v$, all of its descendants and all edges connecting them. In particular, if $v=\rho$ we have $T_\rho=T$, which means that $T$ is a pending subtree of itself. For a rooted tree $T$ which consists of more than one vertex, the pending subtrees $T_1,\ldots,T_k$ rooted at the children $v_1,\ldots,v_k$ of $\rho$ (with $k \in \mathbb{N}_{\geq 1}$) is called \emph{the maximal pending subtrees} of $T$. Similarly, we can define (maximal) pending subtrees for rooted 2D or 3D trees $\mathsf{T}=(T,w)$, where $\mathsf{T}_v=(T_v, w\vert _{E(T_v)})$ consists of the topology $T_v$ and the weight function restricted to the edges in $T_v$, such that $\mathsf{T}_v$ is again a 2D or 3D tree.

Finally, we have to introduce two distinct tree shapes: A tree that consists only of the root and otherwise only leaves is called a \emph{rooted star tree}, i.e., it does not contain any other interior nodes and all leaves are direct children of the root. A \emph{rooted path graph}, on the other hand, is a tree that consists of a series of nodes strung together. If it consists of more than one single node, it specifically contains $m\geq 1$ interior nodes and one leaf $l$, i.e., $V=\{\rho=v_1,...,v_m,v_{m+1}=l\}$, and edges which connect each pair of consecutive nodes, i.e., $E=\{(v_i,v_{i+1})\vert i=1,...,m\}$.

\paragraph{Weight, centroid, distance, angle}

Before introducing several approaches to measuring imbalance in a rooted 3D tree, a few more concepts have to be defined, namely the weight and centroid of a tree, the distance between two points or a point and a line, and the angle between two lines.

Let $\mathsf{T}$ be a rooted 2D or 3D tree with its edge set $E(\mathsf{T})$. Then, the \emph{(total edge) weight} of $\mathsf{T}$ is the summarized weight of all its edges, i.e., $w(\mathsf{T})=\sum\limits_{e\in E(\mathsf{T})} w(e)$, and analogously the \emph{(total edge) length} of $\mathsf{T}$ is $\ell(\mathsf{T})=\sum\limits_{e\in E(\mathsf{T})} \ell(e)$. Moreover, the \emph{centroid} of $\mathsf{T}$ can be calculated from the weighted centroids of its edges, i.e.,
\[\mathcal{C}(\mathsf{T})=\frac{\sum\limits_{e\in E(\mathsf{T})} \mathcal{C}(e)\cdot w(e)}{\sum\limits_{e\in E(\mathsf{T})} w(e)}, \]
where the centroid of an edge $e=(u,v)$ is its center, computed as $\mathcal{C}(e)=\frac{1}{2}\cdot(u+v)$.

Recall that the Euclidean \emph{norm} of a vector $v \in \mathbb{R}^2$ or $\mathbb{R}^3$ is $\vert v\vert =\sqrt{v_1^2+v_2^2}$ or $\vert v\vert =\sqrt{v_1^2+v_2^2+v_3^2}$, respectively. The Euclidean \emph{distance} between two points $u$ and $v \in \mathbb{R}^2$ or $\mathbb{R}^3$ is $d(u,v)=\vert u-v\vert$, i.e., the length of an edge $e=(p,v)$ is given by the distance $d(p,v)$. The distance between a point $x$ and a line $g_{v,w}(\lambda)=v+\lambda (w-v)$ with $\lambda \in \mathbb{R}$ going through the points $v$ and $w$ can be computed as 
\[ d(x,g_{v,w})=\frac{\vert (x-v)\times(w-v)\vert }{\vert w-v\vert }\] 

The angle between two vectors $a$ and $b$ is computed as $\angle (a,b) =\arccos\left(\frac{\langle a,b\rangle}{\vert a\vert \cdot \vert b\vert }\right) \in [0,\pi]$, where $\langle a,b\rangle=a_1b_1+a_2b_2$ for $a,b \in \mathbb{R}^2$ or $\langle a,b\rangle=a_1b_1+a_2b_2+a_3b_3$ for $a,b \in \mathbb{R}^3$ denotes the scalar product.

The curve integral of a continuous function $f:\mathbb{R}^k\to \mathbb{R}$ along a piece-wise smooth curve $\gamma \coloneqq [a,b]\to \mathbb {R}^{k}$ is defined as
\[\int \limits _{\gamma }\!f(s)\,\mathrm {d} s:=\int \limits _{a}^{b}\!f(\gamma (t))\,\vert\gamma'(t)\vert\,\mathrm{d}t.\]

\section{Theory: The foundation for measuring 3D imbalance} \label{sec:theory_foundation}

This section deals with measuring the 3D imbalance of a single node in a tree. Afterwards, several requirements for measuring 3D tree imbalance are presented and discussed.

First, however, it is necessary to clarify what the term \enquote{(3D) imbalance} describes in the context of this manuscript. We distinguish between \textit{external} and \textit{internal} 3D imbalance. The former describes how imbalanced the plant has grown with respect to the horizontal plane or a vertical axis, i.e., how far the whole plant leans to the side, whilst the latter characterizes how irregular, crooked, and twisted all plant parts have grown. While measuring external imbalance is a relatively simple task (see Remark \ref{rem:root_edge} and Section \ref{sec:application} for more details), quantifying internal imbalance is a more complex issue and therefore also the main focus of this manuscript.
 
3D plant models provide information about actual volume or weight, which allows the calculation of centers of mass, their positions relative to other structures, and thus how a branch has grown relative to the part that grew before it. The approach to measuring internal 3D imbalance presented in the present manuscript is inspired by the physics of a classic (crib) mobile or pendulum, as the degree will be measured to which the center of mass of younger plant parts deviates or swings aside from the direction in which older connected plant parts grew.

\subsection{2D and 3D node imbalance statistics}\label{sec:node_imbal_stats}

We now introduce four 3D node imbalance statistics which quantify the degree of imbalance of a single node in a 2D or 3D tree. A (3D) node imbalance statistic $i$ is a function that, for a given rooted (3D) tree $\mathsf{T}$, assigns (3D) node imbalance values $i(v)\in \mathbb{R}$ to the nodes $v\in V(\mathsf{T})$ (in our case: all nodes except the root).
In Section \ref{sec:imbal_inds} these node statistics are used to create indices that measure the imbalance of a complete tree. However, it is important to explain the foundational ideas first. All four node imbalance statistics assign higher values with increasing degree of imbalance. They are based on a very intuitive understanding of node balance: To what extent does the pending subtree lie in line with its predecessors or, in other words, how far does the centroid of this pending subtree sway off to the side? For a first impression of this idea consult Figure \ref{fig:balance_approaches}. The first two approaches simply rely on the angle to describe such imbalance:

\begin{definition} \label{def:3D_CA}
The \emph{centroid angle} $\mathcal{A}_{\mathsf{T},e_v}: V(\mathsf{T})\setminus\{\rho\} \to [0,\pi]$ of a node $v \neq \rho$ in a rooted 2D or 3D tree $\mathsf{T}=(T,w)$ with regard to its incoming edge $e_v=(p(v),v)$ is defined as $\mathcal{A}_{\mathsf{T},e_v}(v) \coloneqq 0$ if $\mathcal{C}(\mathsf{T}_v)=v$ and otherwise as 
\[\mathcal{A}_{\mathsf{T},e_v}(v) \coloneqq \angle\left(\overrightarrow{v,\mathcal{C}(\mathsf{T}_v)}, \overrightarrow{p(v),v}\right) = \angle(\mathcal{C}(\mathsf{T}_v)-v, v-p(v)).\]
\end{definition}

For now, the node imbalance value of a vertex $v$ will be calculated with regard to its incoming edge $e_v$, which justifies the use of $\mathcal{A}_{\mathsf{T}}$ instead of $\mathcal{A}_{\mathsf{T},e_v}$. Please note that other reference edges will be used later on. Moreover, whenever there is no ambiguity, the notation $\mathcal{A}$ is used instead of $\mathcal{A}_{\mathsf{T}}$. The same applies to the other three 3D node imbalance statistics.

\begin{definition} \label{def:3D_mCA}
The \emph{minimal centroid angle} $\alpha_{\mathsf{T},e_v}: V(\mathsf{T})\setminus\{\rho\} \to [0,\frac{\pi}{2}]$ of a node $v \neq \rho$ in a rooted 2D or 3D tree $\mathsf{T}=(T,w)$ with regard to its incoming edge $e_v=(p(v),v)$ is defined as $\alpha_{\mathsf{T},e_v}(v) \coloneqq 0$ if $\mathcal{C}(\mathsf{T}_v)=v$ and otherwise as
\begin{align*}
\alpha_{\mathsf{T},e_v}(v) \coloneqq& \min \left\{\angle \left(\overrightarrow{v,\mathcal{C}(\mathsf{T}_v)}, \overrightarrow{p(v),v}\right), \angle\left(\overrightarrow{v,\mathcal{C}(\mathsf{T}_v)}, \overrightarrow{v,p(v)}\right) \right\} \\
=& \begin{cases} \mathcal{A}_{\mathsf{T},e_v}(v) & \text{if $0\leq \mathcal{A}_{\mathsf{T},e_v}(v)\leq \frac{\pi}{2}$} \\ \pi-\mathcal{A}_{\mathsf{T},e_v}(v) & \text{if $\frac{\pi}{2} < \mathcal{A}_{\mathsf{T},e_v}(v)\leq \pi.$} \end{cases}
\end{align*}
\end{definition}

The next two definitions use the relative distance of how far the pending subtree sways off to the side to quantify the degree of imbalance:

\begin{definition} \label{def:3D_relCD}
The \emph{relative centroid distance} $\mu_{\mathsf{T},e_v}: V(\mathsf{T})\setminus\{\rho\} \to [0,1]$ of a node $v \neq \rho$ in a rooted 2D or 3D tree $\mathsf{T}=(T,w)$ with regard to its incoming edge $e_v=(p(v),v)$ is defined as $\mu_{\mathsf{T},e_v}(v) \coloneqq 0$ if $\mathcal{C}(\mathsf{T}_v)=v$ and otherwise as
\[\mu_{\mathsf{T},e_v}(v) \coloneqq \frac{d(\mathcal{C}(\mathsf{T}_v), g_{v,p(v)})}{d(\mathcal{C}(\mathsf{T}_v),v)},\]
where $g_{v,p(v)}$ is the line $v+\lambda (p(v)-v)$ with $\lambda \in \mathbb{R}$ going through $v$ and $p(v)$.
\end{definition}

\begin{definition} \label{def:3D_exrelCD}
The \emph{expanded relative centroid distance} $\mathcal{M}_{\mathsf{T},e_v}: V(\mathsf{T})\setminus\{\rho\} \to [0,2]$ of a node $v \neq \rho$ in a rooted 2D or 3D tree $\mathsf{T}=(T,w)$ with regard to its incoming edge $e_v=(p(v),v)$ is defined as $\mathcal{M}_{\mathsf{T},e_v}(v) \coloneqq 0$ if $\mathcal{C}(\mathsf{T}_v)=v$ and otherwise as
\[\mathcal{M}_{\mathsf{T},e_v}(v) \coloneqq \begin{cases} \mu_{\mathsf{T},e_v}(v) & \text{if $0\leq \mathcal{A}_{\mathsf{T},e_v}(v)\leq \frac{\pi}{2}$} \\ 2-\mu_{\mathsf{T},e_v}(v) & \text{if $\frac{\pi}{2} < \mathcal{A}_{\mathsf{T},e_v}(v)\leq \pi$.} \end{cases}\]
\end{definition}

Note that all of these measurements consider any vertex $v$ as balanced if the centroid of $T_v$ has the same coordinates as $v$, which is, in particular, true for leaves.

\begin{figure}[ht]
	\centering
	\begin{tikzpicture}[scale=0.8, > = stealth]
	\centering
	\tikzset{std/.style = {shape=circle, draw, gray, fill=gray, minimum size = 0.2cm, scale=0.6}}
	\tikzset{dot/.style = {shape=circle, draw, fill=black, minimum size = 0.2cm, scale=0.5}}
	\tikzset{smalldot/.style = {shape=circle, draw, gray, fill=gray, minimum size = 0.1cm, scale=0.1}}
	\node[align=left, anchor=west] at (-2,6.8) {$a)$ \enquote{directly behind}};
    \draw[-,thin] (0,0.5)--(0,6.3);
	\node[std,label={[gray]left:{$p(v)$}}] (p) at (0,5.5) {};
	\node[std,label={[gray]left:{$v$}}] (v) at (0,3) {};
	\node[dot] (c) at (0,2) {};
	\node[smalldot] (a) at (-1,1.5) {};
	\node[smalldot] (b) at (1,1.5) {};
	
	\path[-, gray, line width=0.8mm] (p) edge node {} (v);
	\path[-, gray, line width=0.8mm] (v) edge node {} (a);
	\path[-, gray, line width=0.8mm] (v) edge node {} (b);
	\path[-, gray, line width=0.8mm] (a) edge node {} (b);
	
	\node[align=left, anchor=west] at (2.5,6.8) {$b)$ \enquote{behind}};
    \draw[-,thin] (3,4)--(6,1) coordinate (endangle);
    \draw[-,thin] (4,6.3)--(4,0.5) coordinate (startangle);
	\node[std,label={[gray]left:{$p(v)$}}] (p) at (4,5.5) {};
	\node[std,label={[gray]left:{$v$}}] (v) at (4,3) {};
	\node[dot] (c) at (4+0.7071,3-0.7071) {};
	\node[smalldot] (a) at (4.25,1.25) {};
	\node[smalldot] (b) at (5.75,2.75) {};
	
	\path[-, gray, line width=0.8mm] (p) edge node {} (v);
	\path[-, gray, line width=0.8mm] (v) edge node {} (a);
	\path[-, gray, line width=0.8mm] (v) edge node {} (b);
	\path[-, gray, line width=0.8mm] (a) edge node {} (b);
    \draw[<->, line width=0.5mm] (4+0.7071,3-0.7071)--(4,3-0.7071);
    \draw[<->, line width=0.5mm] (4,3)--(4+0.7071,3-0.707);
	\pic [draw, <->, "$\mathcal{A}=\alpha$", dotted, angle eccentricity=1.2, scale=3.5, line width=0.5mm] {angle = startangle--v--endangle};
	\node[align=left, anchor=west] at (7,6.8) {$c)$};
    \draw[-,thin] (6.5,3)--(10.5,3) coordinate (endangle);
    \draw[-,thin] (8,6.3)--(8,0.5) coordinate (startangle);
	\node[std,label={[gray]left:{$p(v)$}}] (p) at (8,5.5) {};
	\node[std,label={[gray]left:{$v$}}] (v) at (8,3) {};
	\node[dot] (c) at (9,3) {};
	\node[smalldot] (a) at (9.5,2) {};
	\node[smalldot] (b) at (9.5,4) {};
	
	\path[-, gray, line width=0.8mm] (p) edge node {} (v);
	\path[-, gray, line width=0.8mm] (v) edge node {} (a);
	\path[-, gray, line width=0.8mm] (v) edge node {} (b);
	\path[-, gray, line width=0.8mm] (a) edge node {} (b);
    \draw[<->, line width=0.5mm] (9,3)--(8,3);
	\pic [draw, <->, "$\mathcal{A}=\alpha$", dotted, angle eccentricity=1.3, scale=3.2, line width=0.5mm] {angle = startangle--v--endangle};
	
	\node[align=left, anchor=west] at (10.5,6.8) {$d)$ \enquote{in front}};
    \draw[-,thin] (11,2)--(14,5) coordinate (endangle);
    \draw[-,thin] (12,6.3) coordinate (startangle2) --(12,0.5) coordinate (startangle);
	\node[std,label={[gray]left:{$p(v)$}}] (p) at (12,5.5) {};
	\node[std,label={[gray]left:{$v$}}] (v) at (12,3) {};
	\node[dot] (c) at (12+0.7071,3+0.7071) {};
	\node[smalldot] (a) at (13.75,3.25) {};
	\node[smalldot] (b) at (12.25,4.75) {};
	
	\path[-, gray, line width=0.8mm] (p) edge node {} (v);
	\path[-, gray, line width=0.8mm] (v) edge node {} (a);
	\path[-, gray, line width=0.8mm] (v) edge node {} (b);
	\path[-, gray, line width=0.8mm] (a) edge node {} (b);
    \draw[<->, line width=0.5mm] (12+0.7071,3+0.7071)--(12,3+0.7071);
    \draw[<->, line width=0.5mm] (12,3)--(12+0.7071,3+0.7071);
	\pic [draw, <->, "$\mathcal{A}$", dotted, angle eccentricity=1.2, scale=3.2, line width=0.5mm] {angle = startangle--v--endangle};
	\pic [draw, <->, "\qquad $\alpha = \pi -\mathcal{A}$", dotted, angle eccentricity=1.2, scale=3.2, line width=0.5mm] {angle = endangle--v--startangle2};
	
	\node[align=left, anchor=west] at (14,6.8) {$e)$ \enquote{directly in front}};
    \draw[-,thin] (16,6.3) coordinate (endangle)--(16,0.5)  coordinate (startangle);
	\node[std,label={[gray]left:{$p(v)$}}] (p) at (16,5.5) {};
	\node[std,label={[gray]left:{$v$}}] (v) at (16,3) {};
	\node[smalldot] (a) at (17,4.5) {};
	\node[smalldot] (b) at (15,4.5) {};
	
	\path[-, gray, line width=0.8mm] (p) edge node {} (v);
	\path[-, gray, line width=0.8mm] (v) edge node {} (a);
	\path[-, gray, line width=0.8mm] (v) edge node {} (b);
	\path[-, gray, line width=0.8mm] (a) edge node {} (b);
	\node[dot] (c) at (16,4) {};
	\pic [draw, <->, "$\mathcal{A}$", dotted, angle eccentricity=1.2, scale=3.2, line width=0.5mm] {angle = startangle--v--endangle};
    \end{tikzpicture}
	\caption{\textbf{Conceptual diagram of the pendulum-inspired node imbalance measurements.} Several examples for the relative position of $p(v)$, $v$, and $\mathsf{T}_v$ (shown as a triangle with a bold black dot marking its centroid). The thin lines indicate $g_{v,p(v)}$ as well as $g_{v,\mathcal{C}(\mathsf{T}_v)}$ (note that they overlap in $a)$ and $e)$). The double arrows mark the centroid distance $d(\mathcal{C}(\mathsf{T}_v), g_{v,p(v)})$ as well as the maximal centroid distance $d(\mathcal{C}(\mathsf{T}_v), v)$ used for $\mu$ and $\mathcal{M}$. The angles shown with dotted lines show $\alpha$ and/or $\mathcal{A}$. Whenever distances or angles are not depicted, they are 0.}
	 \label{fig:balance_approaches}
\end{figure}
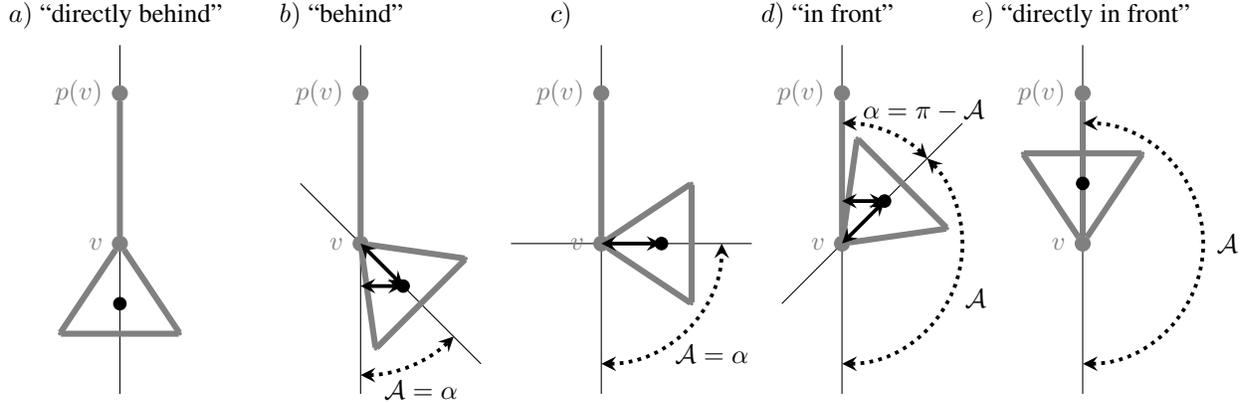

As a calculation example for the node imbalance values, consider the two nodes $u$ and $v$ in the rooted 3D tree shown in Figure \ref{fig:example2DTree}. The centroid of the pending subtree of node $u$ can be directly calculated from the centers of its two descending edges which have equal edge weights $\mathcal{C}(\mathsf{T}_u)=\frac{1}{2}\cdot\left((1.5,0,2.5)^T+(2.5,0,2.5)^T\right)=(2,0,2.5)^T$. Thus, the centroid lies directly above $u$ but not in line with the edge coming from $t$, the direct ancestor of $u$. As such, $u$ has the imbalance values $\mu_{\mathsf{T}}(u)=\mathcal{M}_{\mathsf{T}}(u) \approx0.894$ and $\mathcal{A}_{\mathsf{T}}(u)=\alpha_{\mathsf{T}}(u) \approx 1.107$.
$v$ on the other hand, has slightly lower imbalance values, $\mu_{\mathsf{T}}(v)= \mathcal{M}_{\mathsf{T}}(v)\approx0.707$ and $\mathcal{A}_{\mathsf{T}}(v)=\alpha_{\mathsf{T}}(v)=\frac{\pi}{4}\approx0.785$, as the centroid of its pending subtree, simply the midpoint $(5,0,2.5)^T$ of its single out-going edge, only sways off to the side by 45 degrees. For both $u$ and $v$ there is no difference between $\alpha$ and $\mathcal{A}$ as well as $\mu$ and $\mathcal{M}$, as the angle $\mathcal{A}$ does not exceed $\frac{\pi}{2}$, i.e., the right angle.

\begin{remark} \label{rem:behind_front}
   In case $\mathcal{C}\left(\mathsf{T}_v\right)\neq v$, \enquote{behind} is used as a short hand for $\mathcal{A}(v)<\pi/2$ and \enquote{in front} is used as a short hand for $\mathcal{A}(v)>\pi/2$ throughout this manuscript, since the centroid $\mathcal{C}\left(\mathsf{T}_v\right)$ lies in front of or behind $v$ from the perspective of $p(v)$, respectively. Similarly, \enquote{directly behind} corresponds to the case $\mathcal{A}(v)=0$, i.e., $\mathcal{C}\left(\mathsf{T}_v\right)=v+(v-p(v))\cdot x$ for an $x>0$, and \enquote{directly in front} corresponds to the case $\mathcal{A}(v)=\pi$ (see Figure \ref{fig:balance_approaches}). 
\end{remark}

\paragraph{The intuition and relation of the 2D and 3D node imbalance statistics}

The idea behind all of these four measurements is similar. We consider a vertex $v$ to be balanced if the edge leading to $v$ splits in such a way that the centroid of the emerging pending subtree is in line with the edge. The first two approaches measure the angle between the line going through $v$ and the centroid of $\mathsf{T}_v$ and the line going through $v$ and its parent $p(v)$. While $\alpha$ is the minimal angle between the two lines, $\mathcal{A}$ assumes that a node $v$ is only balanced when $\mathcal{C}(\mathsf{T}_v)$ lies \enquote{directly behind} $v$, is less balanced if $\mathcal{C}(T_v)$ lies at a right angle and even more imbalanced if $\mathcal{C}(\mathsf{T}_v)$ lies \enquote{in front} of $v$. Figure \ref{fig:balance_approaches} provides several examples of how $p(v)$, $v$, and $\mathcal{C}(\mathsf{T}_v)$ could be positioned with respect to each other. While the relative centroid distance $\mu$ as well as the minimal centroid angle $\alpha$ consider cases $a)$ and $e)$ as perfectly balanced, $b)$ and $d)$ as slightly balanced and $c)$ as perfectly imbalanced, $\mathcal{A}$ and $\mathcal{M}$ consider cases $a)$ to $e)$ to be ordered by increasing degree of imbalance.

Both rankings are plausible ways of approaching tree imbalance. For $\mu$ and $\alpha$, imagine that $p(v)$, $v$, and $\mathsf{T}_v$ are part of a rigid object which is picked up at $p(v)$ like a pendulum and observed if $v$ is still positioned \enquote{directly behind} $p(v)$. In this constellation, it is not important if $\mathcal{C}(\mathsf{T}_v)$ lies \enquote{behind} or \enquote{in front} of $v$, only its offset to the side matters -- the \enquote{sideways swing} is measured. For the approach of $\mathcal{A}$ and $\mathcal{M}$ imagine that the joint at $v$ is allowed to move and then it is measured how far $\mathsf{T}_v$ has to swing down until its centroid lies \enquote{directly behind} $v$, i.e., observing the \enquote{full swing}. In general, the nodes in which $\alpha$ differs from $\mathcal{A}$ or also $\mu$ from $\mathcal{M}$ can thus be considered to be \enquote{severely} imbalanced nodes from the \enquote{full swing}-perspective.

\begin{remark} \label{rem:sinus_relation}
It is also important to note that there exists a relation between the angle and the distance approaches: $\mu_{\mathsf{T}}(v)=\sin(\alpha_{\mathsf{T}}(v))=\sin(\mathcal{A}_{\mathsf{T}}(v))$ and accordingly $\arcsin(\mu_{\mathsf{T}}(v))=\alpha_{\mathsf{T}}(v)$ (see Figure \ref{fig:node_imbal_comparison}; a more in-depth explanation and proof can be found in Proposition \ref{prop:sinus_relation} in the appendix). This has an impact on how the measurements assess different degrees of node imbalance. The relative centroid distance $\mu$ punishes a slight offset of the pending subtree much harder relative to its maximum of 1, but makes no big difference if the subtree is positioned at a right angle or, e.g., at $85^{\circ}$ or $95^{\circ}$. This stands in stark contrast to the $\alpha$ and $\mathcal{A}$ for which each degree matters equally. The measurement $\mathcal{M}$ was derived from $\mu$ but extended such that it -- like $\mathcal{A}$ -- reaches its maximum at 180\degree $\widehat{=}\ \pi$. These relations between the measurements impact the assessment of total tree imbalance and can also be observed in visualizations of imbalance as discussed later in Section \ref{sec:imbal_inds} (consult Figure \ref{fig:visualizeImbalance} for a first impression). Since the sinus function is strictly monotonically increasing on the interval $[0,\frac{\pi}{2}]$, we can also directly infer that neither the two measurements $\alpha$ and $\mu$ nor $\mathcal{A}$ and $\mathcal{N}$ will ever disagree on whether one of two nodes is more or less imbalanced. \label{lab:agree_node_level}
\end{remark}

\begin{figure}[ht]
	\centering
	\begin{tikzpicture}[scale=0.85, domain=0:pi]
    \draw[-,very thin,color=gray] (3.3,pi) -- (-0.1,pi) node[left, color=black, thin] {$\pi$};
    \draw[-,very thin,color=gray] (3.3,1.5708) -- (-0.1,1.5708) node[left, color=black, thin] {$\frac{\pi}{2}$};
    \draw[-,very thin,color=gray] (pi,3.3) -- (pi,-0.1) node[below, color=black, thin] {$\pi$};
    \draw[-,very thin,color=gray] (1.5708,3.3) -- (1.5708,-0.1) node[below, color=black, thin] {$\frac{\pi}{2}$};
    \draw[->] (-0.1,0) node[left] {$0$}-- (3.4,0) node[right] {$\mathcal{A}$};
    \draw[->] (0,-0.1) node[below] {$0$}  -- (0,3.4) node[above] {$\mathcal{A}$};
    \draw[color=black] plot (\x,\x);
    \end{tikzpicture}
	\begin{tikzpicture}[scale=0.85, domain=0:pi]
    \draw[-,very thin,color=gray] (3.3,pi) -- (-0.1,pi) node[left, color=black, thin] {$\pi$};
    \draw[-,very thin,color=gray] (3.3,1.5708) -- (-0.1,1.5708) node[left, color=black, thin] {$\frac{\pi}{2}$};
    \draw[-,very thin,color=gray] (pi,3.3) -- (pi,-0.1) node[below, color=black, thin] {$\pi$};
    \draw[-,very thin,color=gray] (1.5708,3.3) -- (1.5708,-0.1) node[below, color=black, thin] {$\frac{\pi}{2}$};
    \draw[->] (-0.1,0) node[left] {$0$} -- (3.4,0) node[right] {$\mathcal{A}$};
    \draw[->] (0,-0.1) node[below] {$0$} -- (0,3.4) node[above] {$\alpha$};
    \draw[color=black] plot[domain=0:(pi/2)] (\x,\x);
    \draw[color=black] plot[domain=(pi/2):pi] (\x,{pi-\x});
    \end{tikzpicture}
	\begin{tikzpicture}[scale=0.85]
    \draw[-,very thin,color=gray] (3.3,2) -- (-0.1,2) node[left, color=black, thin] {$2$};
    \draw[-,very thin,color=gray] (3.3,1) -- (-0.1,1) node[left, color=black, thin] {$1$};
    \draw[-,very thin,color=gray] (pi,2.15) -- (pi,-0.1) node[below, color=black, thin] {$\pi$};
    \draw[-,very thin,color=gray] (1.5708,2.15) -- (1.5708,-0.1) node[below, color=black, thin] {$\frac{\pi}{2}$};
    \draw[->] (-0.1,0) node[left] {$0$} -- (3.4,0) node[right] {$\mathcal{A}$};
    \draw[->] (0,-0.1) node[below] {$0$} -- (0,2.25) node[above] {$\mathcal{M}$};
    \draw[color=black] plot[domain=0:(pi/2)] (\x,{sin(\x r)});
    \draw[color=black] plot[domain=(pi/2):pi] (\x,{2-sin(\x r)});
    \end{tikzpicture}
	\begin{tikzpicture}[scale=0.85, domain=0:pi]
    \draw[-,very thin,color=gray] (3.3,2) -- (-0.1,2) node[left, color=black, thin] {$2$};
    \draw[-,very thin,color=gray] (3.3,1) -- (-0.1,1) node[left, color=black, thin] {$1$};
    \draw[-,very thin,color=gray] (pi,2.15) -- (pi,-0.1) node[below, color=black, thin] {$\pi$};
    \draw[-,very thin,color=gray] (1.5708,2.15) -- (1.5708,-0.1) node[below, color=black, thin] {$\frac{\pi}{2}$};
    \draw[->] (-0.1,0) node[left] {$0$} -- (3.4,0) node[right] {$\mathcal{A}$};
    \draw[->] (0,-0.1) node[below] {$0$} -- (0,2.25) node[above] {$\mu$};
    \draw[color=black] plot (\x,{sin(\x r)});
    \end{tikzpicture}
	\caption{All four approaches in relation to the angle $\mathcal{A}$. $\alpha$ and $\mu$ reach their maximum at a right angle (\enquote{sideways swing}) whereas $\mathcal{A}$ and $\mathcal{M}$ reach their maximum when the centroid of the pending subtrees lies at 180\degree $\widehat{=}\ \pi$ (\enquote{full swing}). While $\alpha$ and $\mathcal{A}$ measure the angle, both $\mu$ and $\mathcal{M}$ are based on the sinus of the angle.}
	 \label{fig:node_imbal_comparison}
\end{figure}
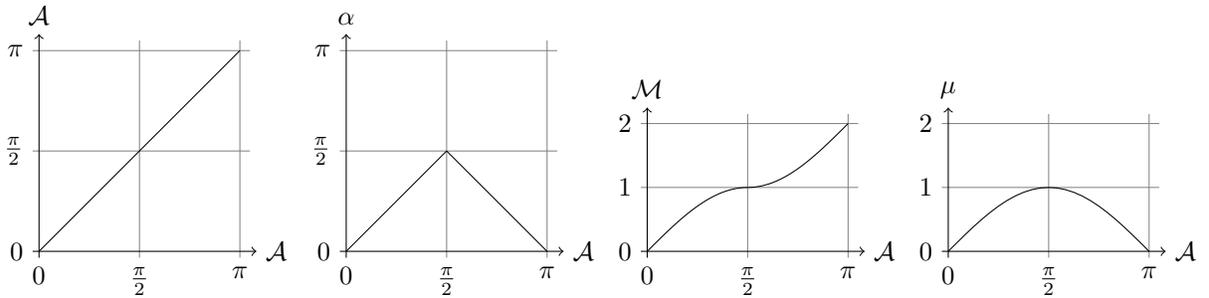 

All of these concepts presented above apply to 2D as well as 3D trees. Since a 2D tree can always be considered to be a 3D tree in which the third coordinate of all nodes is a constant (cf. Figure \ref{fig:example2DTree}), we will mostly speak of 3D trees as a generalization of 2D and 3D trees in the following sections.

Since we now have a good impression of how to measure node imbalance, we can turn our attention toward measuring the total imbalance of a rooted 3D tree.

\subsection{Motivating 3D imbalance indices}
As can be seen with the wide range of (im)balance indices for rooted trees in general (also referred to as non-3D trees in this manuscript, as they typically do not have node coordinates) \cite{fischer_tree_2023}, there is an overarching wish to summarize the imbalance of a tree in a single number. This approach enables us, for instance, to compare different trees, groups of trees, or even different tree modeling methods. However, obtaining a meaningful summarizing value of 3D imbalance for a tree is not an easy task. For non-3D (im)balance indices, one common approach is to simply calculate the (weighted) mean of the node imbalance values (for example, the average leaf depth \cite{sackin_good_1972, shao_tree_1990} and the average vertex depth \cite{herrada_scaling_2011}, the mean $I$-based indices \cite{fusco_new_1995, purvis_evaluating_2002}, as well as the stairs1 and stairs2 indices \cite{norstrom_phylotempo_2012, colijn_phylogenetic_2014}). In this case, it would be the sum of the node imbalance values divided by the number of nodes except the root, i.e., $\vert V\vert -1=n+m-1$. This most straightforward idea, though, turns out to be insufficient for 3D trees (in short, this simplistic approach cannot recognize \enquote{equivalent} but different models of the same plant as equally imbalanced, e.g., models that differ only by some edge subdivisions) and we have to turn to an a little bit more complex method. Before we dive into the construction of the actual 3D (tree) imbalance indices, we will shortly present several properties that we consider to be necessary features of a meaningful and applicable imbalance index, since such a measurement should always be designed with potential applications in mind. As such, some of the following properties are based on practical experiences of 3D scanning actual plants, working with different, already existing software tools that handle similar objects as well as exploring the bean data set, while other properties are more theoretical and build a mathematical foundation for defining a good imbalance index.

\subsection{Desirable properties of 3D shape statistics, specifically imbalance indices} \label{sec:desired_properties}

We start this section by asking the broad question: Which properties should a good 3D (tree) shape statistic $\phi: \Upsilon \to \mathbb{R}$, in general, possess to be viable for the application on biological questions and actual organic 3D data?

Intuitive requirements for any suitable 3D shape statistic are that it should be robust to moving the (plant) model in 3D space, scaling it or horizontally rotating/mirroring it (see, e.g., $\mathsf{T}$ and $\overset{\curvearrowright}{\mathsf{T}}$ in Figure \ref{fig:exampleRobust}), i.e., neither of these operations should affect the balance value of the tree. In mathematical terms, these requirements can be formulated as follows:

\begin{definition}\label{def:robust_treechange}
Let $\phi:\Upsilon \to \mathbb{R}$ be a shape statistic for rooted 3D trees,  let $\mathsf{T} \in \Upsilon$ be a rooted 3D tree, and let $f$ be any of the following operations: shifting, resizing, horizontally mirroring or rotating around a vertical axis. Consequently, let $f(\mathsf{T})$ denote the rooted 3D tree obtained from applying $f$ to $\mathsf{T}$. Then, $\phi$ is called \emph{robust to tree shifting, horizontally rotating or mirroring, or resizing} if
\[\phi(\mathsf{T})= \phi(f(\mathsf{T})) \quad \forall \ \mathsf{T} \in \Upsilon \text{ and } f.\]
\end{definition}

In the following figures, we assume robustness to movement in 3D space and are, thus, allowed to omit the coordinate system and as such keep the figures clearer. 

Furthermore, a good 3D shape statistic should assign the same imbalance value to different representations of the same 3D structure. As such it should not matter if a straight section of a plant is represented with one or with several consecutive edges of the same width as stated in the following definition. Recall that when an edge is subdivided by a new node, the two new resulting edges have the same width as the original edge. Nodes, however, do not have any weight.

\begin{definition}\label{def:robust_edgesubdiv}
Let $\phi:\Upsilon \to \mathbb{R}$ be a shape statistic for rooted 3D trees,  let $\mathsf{T} \in \Upsilon$ be a rooted 3D tree, and for any given edge $e=(p(v),v) \in E(\mathsf{T})$ let $f_{e}$ be the operation that subdivides the edge $e$ with a new node $s_x=p(v)+(v-p(v)) \cdot x$ with $x \in (0,1)$. Consequently, let $f_e(\mathsf{T})$ denote the rooted 3D tree obtained from applying $f_e$ to $\mathsf{T}$.
Then, $\phi$ is called \emph{robust to edge subdivision} if
\[\phi(\mathsf{T})= \phi(f_e(\mathsf{T})) \quad \forall \ \mathsf{T} \in \Upsilon \text{ and } e \in E(\mathsf{T}).\]
\end{definition}

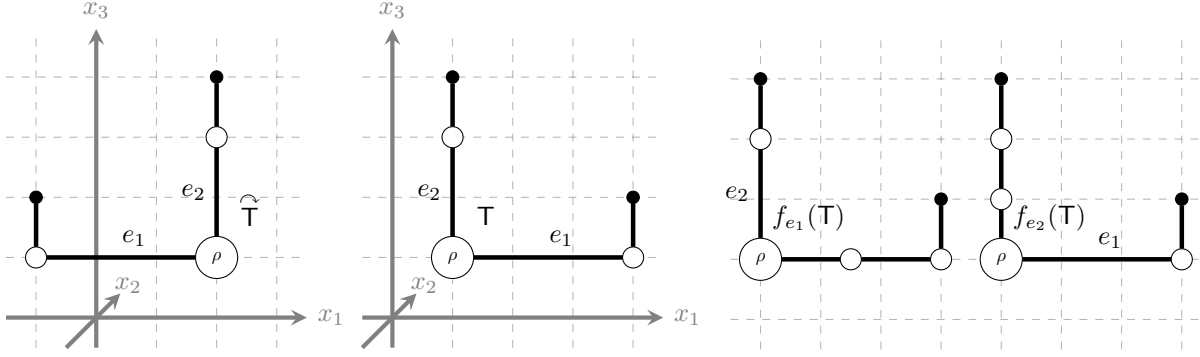
\begin{figure}[htbp]
	\centering
	\begin{tikzpicture}[scale=0.8, > = stealth]
	\centering
    \draw[help lines, color=gray!60, dashed] (-1.5,-0.5) grid (3.4,4.7);
    \draw[->,ultra thick, color=gray] (-1.5,0)--(3.5,0) node[right]{$x_1$};
    \draw[->,ultra thick, color=gray] (0,-0.5)--(0,4.8) node[above]{$x_3$};
    \draw[->,ultra thick, color=gray] (-0.5,-0.5)--(0.4,0.4);
	\node[align=left, color=gray] at (0.53,0.53) {$x_2$};
	\tikzset{std/.style = {shape=circle, draw, fill=white, minimum size = 0.8cm, scale=0.7}}
	\tikzset{dot/.style = {shape=circle, draw, fill=black, minimum size = 0.2cm, scale=0.5}}
	
	\node[align=left] at (2.55,1.8) {$\overset{\curvearrowright}{\mathsf{T}}$};
	\node[dot] (1) at (2,4) {};
	\node[dot] (3) at (-1,2) {};
	\node[std, scale=0.5] (5) at (2,3) {};
	\node[std, scale=0.5] (6) at (-1,1) {};
	\node[std] (7) at (2,1) {$\rho$};
	
	\path[-, line width=0.6mm, above] (7) edge node {\quad $e_1$} (6);
	\path[-, line width=0.6mm, left] (7) edge node {$e_2$} (5);
	\path[-, line width=0.6mm, above] (6) edge node {} (3);
	\path[-, line width=0.6mm, above] (5) edge node {} (1);
    \end{tikzpicture}
	\begin{tikzpicture}[scale=0.8, > = stealth]
	\centering
    \draw[help lines, color=gray!60, dashed] (-0.5,-0.5) grid (4.4,4.7);
    \draw[->,ultra thick, color=gray] (-0.5,0)--(4.5,0) node[right]{$x_1$};
    \draw[->,ultra thick, color=gray] (0,-0.5)--(0,4.8) node[above]{$x_3$};
    \draw[->,ultra thick, color=gray] (-0.5,-0.5)--(0.4,0.4);
	\node[align=left, color=gray] at (0.53,0.53) {$x_2$};
	\tikzset{std/.style = {shape=circle, draw, fill=white, minimum size = 0.8cm, scale=0.7}}
	\tikzset{dot/.style = {shape=circle, draw, fill=black, minimum size = 0.2cm, scale=0.5}}
	
	\node[align=left] at (1.55,1.7) {$\mathsf{T}$};
	\node[dot] (1) at (1,4) {};
	\node[dot] (3) at (4,2) {};
	\node[std, scale=0.5] (5) at (1,3) {};
	\node[std, scale=0.5] (6) at (4,1) {};
	\node[std] (7) at (1,1) {$\rho$};
	
	\path[-, line width=0.6mm, above] (7) edge node {\quad $e_1$} (6);
	\path[-, line width=0.6mm, left] (7) edge node {$e_2$} (5);
	\path[-, line width=0.6mm, above] (6) edge node {} (3);
	\path[-, line width=0.6mm, above] (5) edge node {} (1);
    \end{tikzpicture}
	\begin{tikzpicture}[scale=0.8, > = stealth]
	\centering
    \draw[help lines, color=gray!60, dashed] (0.5,-0.5) grid (8.4,4.7);
	\tikzset{std/.style = {shape=circle, draw, fill=white, minimum size = 0.8cm, scale=0.7}}
	\tikzset{dot/.style = {shape=circle, draw, fill=black, minimum size = 0.2cm, scale=0.5}}
	
	\node[align=left] at (1.8,1.7) {$f_{e_1}(\mathsf{T})$};
	\node[dot] (1) at (1,4) {};
	\node[dot] (3) at (4,2) {};
	\node[std, scale=0.5] (4) at (2.5,1) {};
	\node[std, scale=0.5] (5) at (1,3) {};
	\node[std, scale=0.5] (6) at (4,1) {};
	\node[std] (7) at (1,1) {$\rho$};
	
	\path[-, line width=0.6mm, above] (7) edge node {} (4);
	\path[-, line width=0.6mm, left] (7) edge node {$e_2$} (5);
	\path[-, line width=0.6mm, above] (4) edge node {} (6);
	\path[-, line width=0.6mm, above] (6) edge node {} (3);
	\path[-, line width=0.6mm, above] (5) edge node {} (1);
	
	\node[align=left] at (5.8,1.7) {$f_{e_2}(\mathsf{T})$};
	\node[dot] (8) at (5,4) {};
	\node[std, scale=0.5] (9) at (5,2) {};
	\node[dot] (10) at (8,2) {};
	\node[std, scale=0.5] (12) at (5,3) {};
	\node[std, scale=0.5] (13) at (8,1) {};
	\node[std] (14) at (5,1) {$\rho$};
	
	\path[-, line width=0.6mm, above] (14) edge node {\quad$e_1$} (13);
	\path[-, line width=0.6mm, above] (14) edge node {} (9);
	\path[-, line width=0.6mm, above] (13) edge node {} (10);
	\path[-, line width=0.6mm, above] (9) edge node {} (12);
	\path[-, line width=0.6mm, above] (12) edge node {} (8);
    \end{tikzpicture}
	\caption{Several rooted 3D trees that are different representations of the same underlying 3D structure and should, thus, be assigned the same 3D imbalance index value. These examples can be used to highlight why robustness to horizontal rotation and edge subdivision are such important properties. Interior nodes are shown as circles and leaves are depicted as small black dots.}
	\label{fig:exampleRobust}
\end{figure}

For an example, consider $\mathsf{T}$, $f_{e_1}(\mathsf{T})$ and $f_{e_2}(\mathsf{T})$ in Figure \ref{fig:exampleRobust}. An index that is robust to node subdivision would have to assign all three trees the same index value as all of them represent the same (volumetric) 3D structure.
This property should ensure that if the same 3D object is measured more than once, perhaps even with different techniques which do not always agree on the vertex placement each time, we still get the same results (disregarding measuring inaccuracies). Note that this scenario is not far-fetched. Terrestrial laser scanning (TLS) point clouds are the most common source of 3D plant models and most, if not all, algorithms that extract a tree-like structure out of such a point cloud will return a different output for the same point cloud (or often even a different output on two separate runs on the same input) as many are based on some degree of randomness \cite{calders_realistic_2018, raumonen_fast_2013}.

For any 3D shape statistic which assesses the shape proportional to the size, i.e., either the weight or the length, of its sub-parts -- exactly as wanted for the 3D imbalance indices -- we will now introduce some proportionality properties:

\begin{definition}\label{def:proportional}
Let $\phi:\Upsilon \to \mathbb{R}$ be a shape statistic for rooted 3D trees and let $\mathsf{T}_1,\ldots,\mathsf{T}_k$ with $k\in \mathbb{N}_{\geq1}$ be rooted 3D trees with the same root coordinates and let $\mathsf{T}$ be the rooted 3D tree that arises from joining all $k$ 3D trees by identifying their roots $\rho_1$, $\rho_2, \dots,$ $\rho_k$. Then, $\phi$ is called \emph{in proportion to length} if
\[\phi(\mathsf{T})=\frac{1}{\displaystyle\sum_{i=1}^{k}{\ell\left(\mathsf{T}_i\right)}} \cdot\sum_{i=1}^{k}{\ell\left(\mathsf{T}_i\right)\phi\left(\mathsf{T}_i\right)}  \quad \forall \ \mathsf{T}_1,\dots,\mathsf{T}_k  \in \mathsf{T}\]
and \emph{in proportion to  weight} if
\[\phi(\mathsf{T})=\frac{1}{\displaystyle\sum_{i=1}^{k}{w\left(\mathsf{T}_i\right)}} \cdot \sum_{i=1}^{k}{w\left(\mathsf{T}_i\right)\phi\left(\mathsf{T}_i\right)}  \quad \forall \ \mathsf{T}_1,\dots,\mathsf{T}_k  \in \mathsf{T}.\]
\end{definition}

For an example, see the construction of $\mathsf{T}$ out of $\mathsf{T}_1$, $\mathsf{T}_2$ and $\mathsf{T}_3$ in Figure \ref{fig:exampleProportion} on the left, where we assumed robustness to shifting and rotation to show the subtrees separately. 

However, we also want that changes inside the tree that do not affect other parts of the tree can only have an influence on the total 3D tree shape statistic in proportion to their size. If, for example, a subtree is exchanged with another one with the same structural properties (i.e, same root, same centroid, and same weight or length since, otherwise, the imbalance of other nodes, specifically ancestors, could be influenced), it should only influence the assessment of the complete tree proportional to the subtree's share of total edge length or weight.  In Figure \ref{fig:exampleProportion} on the top right there are several examples of such tree rearrangements depicted. Both $\mathsf{T}'^{(w)}$ and $\mathsf{T}'^{(\ell)}$ are similar to $\mathsf{T}$, only their subtree $\mathsf{T}_v$ has been replaced with one having the same centroid and, in the case of $\mathsf{T}'^{(w)}$, the same weight and, in the case of $\mathsf{T}'^{(\ell)}$, the same length and weight.

\begin{definition}\label{def:robust_local}
Let $\phi:\Upsilon \to \mathbb{R}$ be a shape statistic for rooted 3D trees,  let $\mathsf{T} \in \Upsilon$ be a rooted 3D tree, and for any given node $v \in V(\mathsf{T})$ let $f_{v}$ be the operation that replaces $\mathsf{T}_v$ with a rooted 3D tree $\mathsf{T}_v'$ which has the same centroid, the same root $v$ as $\mathsf{T}_v$ as well as the same weight $w(\mathsf{T}_v)=w(\mathsf{T}_v')$. Consequently, let $f_v(\mathsf{T})$ denote the rooted 3D tree obtained from applying $f_v$ to $\mathsf{T}$.
Then, $\phi$ is called \emph{locally in proportion to length} if
\[\vert \phi(\mathsf{T})- \phi(f_v(\mathsf{T}))\vert  = \frac{\ell(\mathsf{T}_v)}{\ell(\mathsf{T})}\cdot\vert \phi(\mathsf{T}_v)- \phi(\mathsf{T}'_v)\vert   \quad \text{ and if furthermore } \ \ell(\mathsf{T}'_v)=\ell(\mathsf{T}_v),\]
and \emph{locally in proportion to weight} if
\[\vert \phi(\mathsf{T})- \phi(f_v(\mathsf{T}))\vert  = \frac{w(\mathsf{T}_v)}{w(\mathsf{T})}\cdot\vert \phi(\mathsf{T}_v)- \phi(\mathsf{T}'_v)\vert \]
for all $\mathsf{T}\in \Upsilon$ and $v \in V(\mathsf{T})$, respectively.
\end{definition}

While the previous properties concerned 3D shape statistics in general, we will now discuss some criteria for 3D imbalance indices. As explained at the beginning of Section \ref{sec:theory_foundation}, for now, our focus is on \textit{internal} imbalance, i.e., the imbalance within the 3D shape without respect to its surroundings (in Remark \ref{rem:root_edge} in the subsequent section we discuss how to use our approaches to measuring node imbalance to assess the position of the plant with respect to the ground or a vertical axis, i.e., its \textit{external} imbalance). We introduce another property that should ensure that a suitable 3D imbalance index can identify absolutely straight or linear growth as maximally balanced. In particular, a tree consisting of only one (pending) edge, should be considered perfectly balanced and, therefore, one that consists of mostly just one single edge when it comes to weight or length should be considered nearly perfectly balanced, or in mathematical terms:

\begin{definition}\label{def:sensitive_long_edge}
Let $\phi:\Upsilon \to \mathbb{R}$ be an imbalance index for rooted 3D trees. For a rooted 3D tree $\mathsf{T}=((V,E),w)$ and an edge $e=(p,v) \in E(\mathsf{T})$ let $\mathsf{T}^{\nearrow \lambda \cdot e}$ be the tree that arises from elongating the single edge $e$ with the factor $\lambda \in \mathbb{R}_{>1}$ by moving all nodes of $T_v$ along the infinite line through $p$ and $v$, i.e., by adding  $(v-p)\cdot \lambda$, while the width stays the same. Then, $\phi$ is called \emph{sensitive to linearity} if any 3D tree $\mathsf{T}_{(\rho,l)}$ consisting of only a single edge $(\rho,l)$ is assigned $\phi\left(\mathsf{T}_{(\rho,l)}\right)=0$ and if furthermore
\[\lim_{\lambda  \to \infty}\phi(\mathsf{T}^{\nearrow \lambda \cdot e})=0  \quad \text{for all other $\mathsf{T} \in \Upsilon$ and } e \in E(\mathsf{T}).\]
\end{definition}

\begin{figure}[htbp]
	\centering
    \begin{tikzpicture}[scale=0.85, > = stealth]
	\centering
    \draw[help lines, color=gray!60, dashed] (-1.6,-0.3) grid (3.6,5.6);
    \draw[help lines, color=gray!60, dashed] (4.4,2.2) grid (16.6,5.6);
    \draw[help lines, color=gray!60, dashed, step=0.5] (4.4,-0.3) grid (16.6,1.8);
	\tikzset{std/.style = {shape=circle, draw, fill=white, minimum size = 0.8cm, scale=0.7}}
	\tikzset{dot/.style = {shape=circle, draw, fill=black, minimum size = 0.2cm, scale=0.5}}
	\node[align=left] at (-0.4,3.9) {$\mathsf{T}_1$};
	\node[dot] (5) at (0,3.5) {};
	\node[std, scale=0.5] (6) at (0,4.5) {};
	\node[std] (7) at (-1,4.5) {$\rho_1$};
	\path[-, line width=0.6mm] (7) edge node {} (6);
	\path[-, line width=0.6mm] (6) edge node {} (5);
	
	\node[align=left] at (-0.4,2.9) {$\mathsf{T}_2$};
	\node[dot] (6) at (0,2.5) {};
	\node[std] (7) at (-1,2.5) {$\rho_2$};
	\path[-, line width=0.6mm] (7) edge node {} (6);
	
	\node[align=left] at (-0.25,1.48) {$\mathsf{T}_3$};
	\node[dot] (4) at (1,1) {};
	\node[dot] (5) at (0,2) {};
	\node[std, scale=0.5] (6) at (0,1) {};
	\node[std] (7) at (-1,1.5) {$\rho_3$};
	\path[-, line width=0.6mm] (7) edge node {} (6);
	\path[-, line width=0.6mm] (7) edge node {} (5);
	\path[-, line width=0.6mm] (6) edge node {} (4);
	
	\node[] (1) at (0.3,3) {};
	\node[] (2) at (1.3,3) {};
	\path[->, line width=0.6mm] (1) edge node {} (2);
	
	\node[align=left] at (1.25,3.9) {$\mathsf{T}$};
	\node[dot] (1) at (2,1.5) {};
	\node[std, scale=0.5] (2) at (2,2.5) {};
	\node[dot] (3) at (3,2.5) {};
	\node[dot] (4) at (1.5,2.5) {};
	\node[std, scale=0.5] (5) at (1.5,3.5) {};
	\node[dot] (6) at (1.793,4.207) {};
	\node[std] (7) at (2.5,3.5) {$\rho$};
	
	\path[-, line width=0.6mm, above] (7) edge node {} (6);
	\path[-, line width=0.6mm, above] (7) edge node {} (5);
	\path[-, line width=0.6mm, above] (5) edge node {} (4);
	\path[-, line width=0.6mm, above] (7) edge node {} (3);
	\path[-, line width=0.6mm, above] (7) edge node {} (2);
	\path[-, line width=0.6mm, above] (2) edge node {} (1);
	
	\node[align=left] at (4.8,3.6) {$\mathsf{T}$};
	\node[dot] (5) at (5.707,3.707) {};
	\node[std] (6) at (7,3) {$v$};
	\node[std] (7) at (5,3) {$\rho$};
	\node[dot] (2) at (7,5) {};
	\node[dot] (3) at (7,4) {};
	\node[std, scale=0.5] (4) at (6,4) {};
	\path[-, line width=0.6mm, below] (7) edge node {$e$} (6);
	\path[-, line width=0.6mm] (7) edge node {} (5);
	\path[-, line width=0.6mm] (6) edge node {} (4);
	\path[-, line width=0.6mm] (4) edge node {} (3);
	\path[-, line width=0.6mm] (4) edge node {} (2);
	
	\node[align=left] at (8.8,3.6) {$\mathsf{T}'^{(w)}$};
	\node[dot] (1) at (11,5) {};
	\node[dot] (2) at (10,5) {};
	\node[dot] (3) at (11,4) {};
	\node[std, scale=0.5] (4) at (10,4) {};
	\node[dot] (5) at (9.707,3.707) {};
	\node[std] (6) at (11,3) {$v$};
	\node[std] (7) at (9,3) {$\rho$};
	\path[-, line width=0.6mm] (7) edge node {} (6);
	\path[-, line width=0.6mm] (7) edge node {} (5);
	\path[-, line width=0.6mm] (6) edge node {} (4);
	\path[-, line width=0.58mm] (4) edge node {} (3);
	\path[-, line width=0.3mm] (4) edge node {} (2);
	\path[-, line width=0.3mm] (3) edge node {} (1);
	
	\node[align=left] at (12.8,3.6) {$\mathsf{T}'^{(\ell)}$};
	\node[dot] (2) at (14,5) {};
	\node[std, scale=0.5] (3) at (15,4) {};
	\node[std, scale=0.5] (4) at (14,4) {};
	\node[dot] (5) at (13.707,3.707) {};
	\node[std] (6) at (15,3) {$v$};
	\node[std] (7) at (13,3) {$\rho$};
	\path[-, line width=0.6mm] (7) edge node {} (6);
	\path[-, line width=0.6mm] (7) edge node {} (5);
	\path[-, line width=0.6mm] (6) edge node {} (4);
	\path[-, line width=0.8mm] (4) edge node {} (3);
	\path[-, line width=0.6mm] (3) edge node {} (2);
	
	\node[dot] (2) at (16,1.5) {};
	\node[dot] (3) at (16,1) {};
	\node[std, scale=0.5] (4) at (15.5,1) {};
	\node[dot] (5) at (5.3535,0.8535) {};
	\node[std, scale=0.5] (6) at (16,0.5) {};
	\node[std] (7) at (5,0.5) {$\rho$};
	\path[-, line width=0.6mm, below] (7) edge node {$e$} (6);
	\path[above] (7) edge node {$\mathsf{T}^{\nearrow 11\cdot e}$} (6);
	\path[-, line width=0.6mm] (7) edge node {} (5);
	\path[-, line width=0.6mm] (6) edge node {} (4);
	\path[-, line width=0.6mm] (4) edge node {} (3);
	\path[-, line width=0.6mm] (4) edge node {} (2);
    \end{tikzpicture}
	\caption{Several rooted 2D trees which illustrate the properties concerning proportionality to tree weight or length as well as sensitivity to linearity. Note that since the focus is on internal imbalance, it does not matter that the tree $\mathsf{T}^{\nearrow 11\cdot e}$ grows to the \enquote{right} instead of straight upwards (for more information on external imbalance with respect to a vertical axis, see Remark \ref{rem:root_edge} in the subsequent section). Interior nodes are shown as circles and leaves are depicted as small black dots.}
	\label{fig:exampleProportion}
\end{figure}
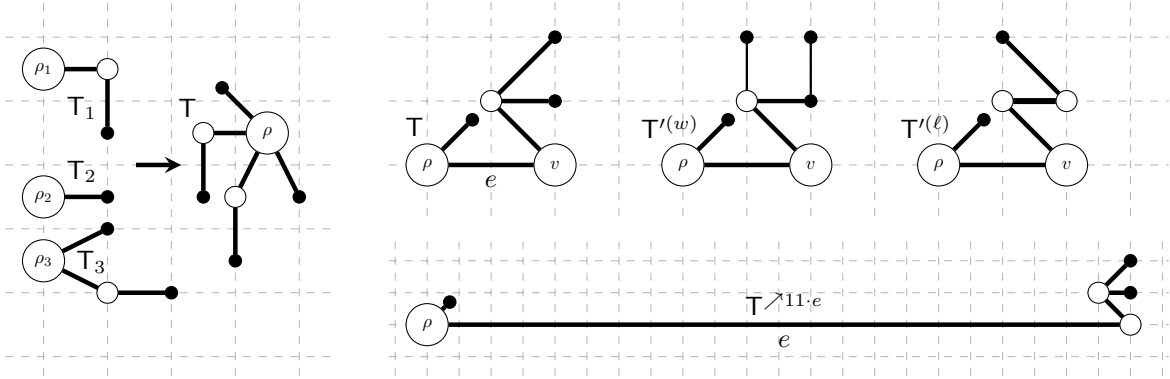

This property is more theoretical than practical because, of course, no branch of a plant could grow infinitely long. Nevertheless, it is a basic foundation of how to describe what \enquote{balanced} could look like. In this manuscript, we uphold the idea that a tree which is predominantly a long straight edge except for negligibly small subsections, should be considered balanced. As an example, consider $\mathsf{T}$ in Figure \ref{fig:exampleProportion} on the right as well as the tree $\mathsf{T}^{\nearrow 11\cdot e}$ in the bottom row, which shows how the extreme elongation of a single edge should turn every tree into a balanced version.

Last but not least, we will state a property that actually ensures that measuring imbalance based on the node imbalance values $\mathcal{A}$, $\alpha$, $\mathcal{M}$, and $\mu$ and that a tree which is more imbalanced than another is assigned the higher imbalance value. In order to avoid that $\phi\equiv 0$ is considered a 3D imbalance index (which would be possible if only the aforementioned criteria were to be applied), we now define \emph{sensitivity to node imbalance}. This makes sure that the total tree imbalance has to be $>0$ as soon as there is any node or edge subdivision with measurable imbalance.

\begin{definition}\label{def:sensitive_node_imbal}
Let $\phi:\Upsilon \to \mathbb{R}$ be an imbalance index for rooted 3D trees based on a node imbalance statistic $i=\mathcal{A}$, $\alpha$, $\mathcal{M}$, and $\mu$ and let $\mathsf{T}=((V,E),w)$ be a rooted 3D tree. Then, $\phi$ is called \emph{sensitive to node imbalance} if $\phi(\mathsf{T})>0$ if and only if there is at least one node $v\in V\setminus \{\rho\}$ or an edge subdivision $s_x=v+(p(v)-v)\cdot x$ with $x \in (0,1)$ for a $v\in V\setminus \{\rho\}$ with $i(v)>0$ or $i(s_x)>0$, respectively.
\end{definition}

With these six properties at hand, we can turn our attention to 3D imbalance indices fulfilling them.

\newpage
\section{Results: Eight new 3D imbalance indices} \label{sec:imbal_inds}

Now, we are in the position to define 3D imbalance indices that measure the degree of internal 3D imbalance in a tree based on node imbalance values. Guidelines on the selection and applications of these indices are given in Section \ref{sec:application}.

\subsection{Definition of new 3D imbalance indices} \label{sec:integral_imbal_inds}

The general approach for all 3D imbalance indices presented in this section is directly motivated by Definition \ref{def:robust_edgesubdiv}: Because the introduction of a new node to subdivide an existing edge should not have any influence on the tree's 3D imbalance index value and as edge subdivisions cannot be simply ignored, we base the indices on \textit{all} possible edge subdivisions. This means to integrate over all points on all edges. For a single edge, this means:

\begin{definition} \label{def:edge_imbalance}
Let $\mathsf{T}=((V,E),w)$ be a rooted 3D tree and $e_v=(p(v),v)\in E$ an arbitrary edge of $\mathsf{T}$. Then, the \emph{edge imbalance} $i_{\mathsf{T}}(e)$ of $e$ with regard to one of the (continuous) node imbalance statistics $i=\mathcal{A}$, $\alpha$, $\mathcal{M}$, and $\mu$ is defined as the curve integral of $i$ along the edge $\gamma_e: [0,1] \to \mathbb{R}^3$ with $\gamma_e(x)=v+(p(v)-v)\cdot x$ divided by the edge length $\ell(e_v)=\vert p(v)-v\vert=\vert\gamma_e'(x)\vert$:
\[i_{\mathsf{T}}(e) \coloneqq  \frac{1}{\ell(e)} \int \limits_{\gamma_e }\!i_{\mathsf{T},e}(s)\,\mathrm {d} s =\int \limits _{0}^{1}\!i_{\mathsf{T},e_v}(\gamma_e (x))\,\mathrm{d}x =  \int_{0}^{1} i_{\mathsf{T},e_v}(v+(p(v)-v)\cdot x) \,\mathrm{d}x.\]
\end{definition}

Note that for $x=1$,  $i_{\mathsf{T},e_v}(p(v))$ is used, i.e., the node imbalance of $p(v)$ with regard to its outgoing edge $e_v$ and not to its incoming edge. This ensures that all four imbalance measurements are continuous for $x\in[0,1]$ in this case, since the reference edge $(p(v),v)$ stays the same and the other components like the coordinates of the centroid $\mathcal{C}(T_{s_x})$ as well as the corresponding distances and angles are continuous for $x \in [0,1]$.

In Definition \ref{def:edge_imbalance}, the curve integral is divided by the edge length because we later want to be able to measure the total imbalance both with respect to length and edge weights. Therefore, the 3D imbalance indices, which will be introduced subsequently, use either $\ell(e)$ or $w(e)$ again as a weight factor in front of the edge imbalance integrals to compute a weighted mean of these.

To create an intuition for the idea of the following 3D imbalance indices, it might be helpful to have a look at Figure \ref{fig:visualizeImbalance}, where this integral-based approach is used to visualize tree imbalance. Each edge of these rooted 3D trees is colored with a gradient in which darker colors indicate imbalance and lighter ones more balanced sections. To summarize the total imbalance of such a rooted 3D tree a weighted mean of these edge imbalance integrals can be computed, where the weights are either the length or weight of the edges. In terms of the figure, this corresponds to calculating the \enquote{average color} along the length of the plant.

\begin{figure}[ht]
	\centering
	\begin{tikzpicture}
	\node (myplot) at (0,0) {\includegraphics[width=0.6\textwidth]{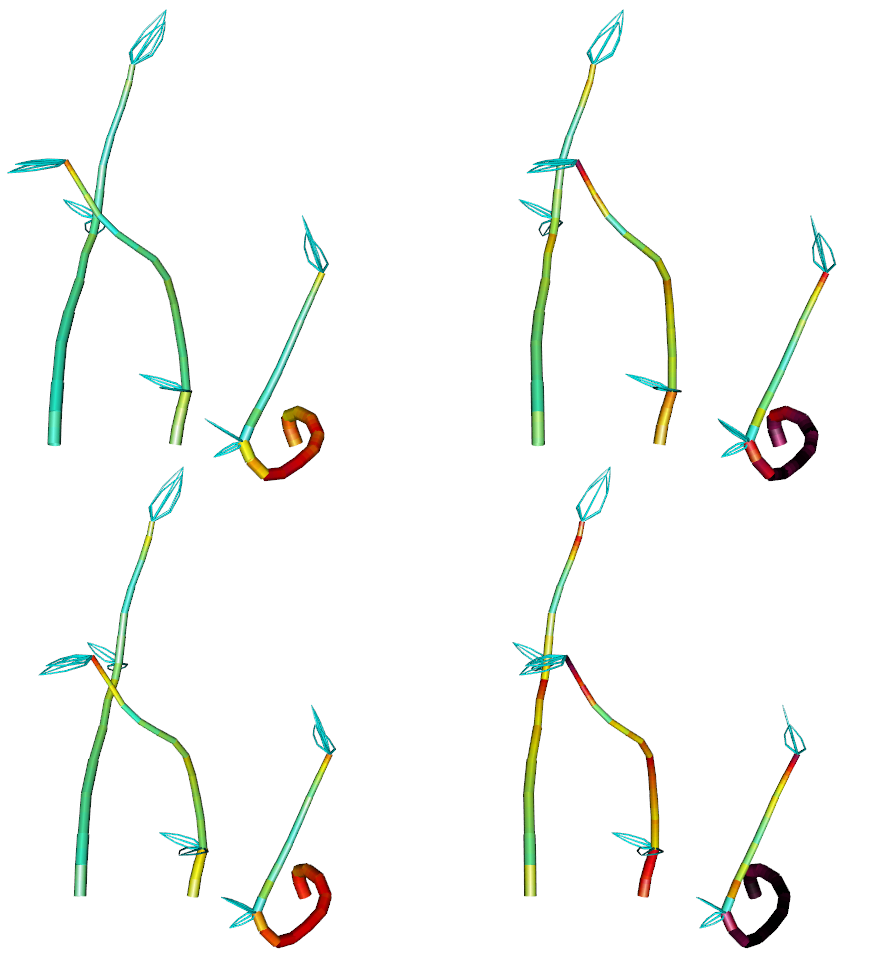}};
	\node (A) at (-2,4) {\Large$\mathcal{A}$};
	\node (alpha) at (3,4) {\Large$\alpha$};
	\node (M) at (-2,-2) {\Large$\mathcal{M}$};
	\node (mu) at (3,-2.05) {\Large$\mu$};
	\node (A) at (7.5,3.9) {\Large$\mathcal{A}$};
	\node (alpha) at (8.3,3.8) {\Large$\alpha$};
	\node (M) at (9.1,3.9) {\Large$\mathcal{M}$};
	\node (mu) at (9.9,3.785) {\Large$\mu$};
	\node (A) at (7.5,3) {$\pi$};
	\node (A) at (7.5,1.25) {$\frac{3}{4}\pi$};
	\node (A) at (7.5,-0.5) {$\frac{1}{2}\pi$};
	\node (A) at (7.5,-2.25) {$\frac{1}{4}\pi$};
	\node (A) at (7.5,-4) {$0$};
	\node (alpha) at (8.3,3) {$\frac{1}{2}\pi$};
	\node (alpha) at (8.3,1.25) {$\frac{3}{8}\pi$};
	\node (alpha) at (8.3,-0.5) {$\frac{1}{4}\pi$};
	\node (alpha) at (8.3,-2.25) {$\frac{1}{8}\pi$};
	\node (alpha) at (8.3,-4) {$0$};
	\node (M) at (9.1,3) {$2$};
	\node (M) at (9.1,1.25) {$\frac{3}{2}$};
	\node (M) at (9.1,-0.5) {$1$};
	\node (M) at (9.1,-2.25) {$\frac{1}{2}$};
	\node (M) at (9.1,-4) {$0$};
	\node (mu) at (9.9,3) {$1$};
	\node (mu) at (9.9,1.25) {$\frac{3}{4}$};
	\node (mu) at (9.9,-0.5) {$\frac{1}{2}$};
	\node (mu) at (9.9,-2.25) {$\frac{1}{4}$};
	\node (mu) at (9.9,-4) {$0$};
	\definecolor{deeppink}{RGB}{139,10,80}
	\pgfdeclareverticalshading{grad1}{1in}{
        color(0cm)=(cyan);
        color(0.06cm)=(cyan);
        color(0.47cm)=(yellow);
        color(0.53cm)=(yellow);
        color(0.97cm)=(red);
        color(1.03cm)=(red);
        color(1.47cm)=(deeppink);
        color(1.53cm)=(deeppink);
        color(1.94cm)=(black);
        color(2cm)=(black)}
    \fill[shading=grad1,shading angle=0] (6.3,3.1) rectangle (6.8,-4.1);
	\end{tikzpicture}
	\caption{\textbf{Conceptual diagram of the 3D imbalance indices.} Depiction of internal 3D imbalance in several rooted 3D trees as implemented in the \textsf{R} package \textsf{treeDbalance} (see Appendix \ref{sec:software}). The beans with IDs 4, 28, and 59 are depicted in this order for each of the imbalance approaches. The darker sections indicate a higher degree of imbalance, while lighter colors, e.g., the color of all pending edges, show balanced sections. Here it can also be directly seen how the relative centroid distance $\mu$ punishes slight deviations from complete straightness much more than the minimal centroid angle $\alpha$: The middle sections of bean ID 28 are significantly darker, i.e., more imbalanced, for $\mu$ than for $\alpha$. It can also be observed that while some sections of the beans reached maximal values for $\mu$ and $\alpha$, none of them were so severely angled that they reached maximal $\mathcal{A}$ or $\mathcal{M}$ imbalance values.}
	 \label{fig:visualizeImbalance}
\end{figure}

We are now in the position to formalize this approach. Starting with $\mu$ and $\mathcal{M}$ we have:

\begin{definition} \label{def:3D_int_relCD}
The \emph{weighted integral-based relative centroid distance index}  $\widetilde{\mu}^{w}: \Upsilon \to [0,1)$ of a rooted 3D tree $\mathsf{T}=((V,E),w)$ is defined as $\widetilde{\mu}^{w}(\mathsf{T})  \coloneqq 0$ if  $\vert V\vert  = 1$ and otherwise as
\begin{align*}
    \widetilde{\mu}^{w}(\mathsf{T})  &\coloneqq \frac{1}{\sum\limits_{v\in V\setminus\{\rho\}}{w(e_v)}} \sum_{v\in V\setminus\{\rho\}}{\left(w(e_v) \cdot \int_{0}^{1} \mu_{\mathsf{T},e_v}(v+(p(v)-v)\cdot x) \,\mathrm{d}x \right)} \\
    &= \frac{1}{w(\mathsf{T})} \sum_{v\in V\setminus\{\rho\}}{\left(w(e_v) \cdot \mu_{\mathsf{T}}(e_v) \right),}
\end{align*}
where $e_v=(p(v),v)$ denotes the incoming edge of $v$.
Analogously, we can define this index regarding edge length, i.e., $\widetilde{\mu}^{\ell}$, instead of edge weight by replacing the weights $w$ by the lengths $\ell$.
\end{definition}

\begin{definition} \label{def:3D_int_erCD}
The \emph{weighted integral-based expanded relative centroid distance index} $\widetilde{\mathcal{M}}^{w} : \Upsilon \to [0,2)$ of a rooted 3D tree $\mathsf{T}=((V,E),w)$ is defined as $\widetilde{\mathcal{M}}^{w}(\mathsf{T})  \coloneqq 0$ if  $\vert V\vert  = 1$ and otherwise as
\[\widetilde{\mathcal{M}}^{w}(\mathsf{T})  \coloneqq \frac{1}{w(\mathsf{T})} \sum_{v\in V\setminus\{\rho\}}{\left(w(e_v) \cdot \mathcal{M}_{\mathsf{T}}(e_v) \right)},\]
where $e_v=(p(v),v)$ denotes the incoming edge of $v$.
Analogously, we can define this index regarding edge length, i.e., $\widetilde{\mathcal{M}}^{\ell}$, instead of edge weight by replacing the weights $w$ by the lengths $\ell$.
\end{definition}

Similarly, we can define corresponding concepts for the two angle approaches:

\begin{definition} \label{def:3D_int_CA}
The \emph{weighted integral-based centroid angle index} $\widetilde{\mathcal{A}}^{w} : \Upsilon \to [0,\pi)$ of a rooted 3D tree $\mathsf{T}=((V,E),w)$ is defined as $\widetilde{\mathcal{A}}^{w}(\mathsf{T})  \coloneqq 0$ if  $\vert V\vert  = 1$ and otherwise as
\[\widetilde{\mathcal{A}}^{w}(\mathsf{T})  \coloneqq \frac{1}{w(\mathsf{T})} \sum_{v\in V\setminus\{\rho\}}{\left(w(e_v) \cdot \mathcal{A}_{\mathsf{T}}(e_v) \right)},\]
where $e_v=(p(v),v)$ denotes the incoming edge of $v$.
Analogously, we can define this index regarding edge length, i.e., $\widetilde{\mathcal{A}}^{\ell}$, instead of edge weight by replacing the weights $w$ by the lengths $\ell$.
\end{definition}

\begin{definition} \label{def:3D_int_mCA}
The \emph{weighted integral-based minimal centroid angle index} $\widetilde{\alpha}^{w} : \Upsilon \to [0,\frac{\pi}{2})$ of a rooted 3D tree $\mathsf{T}=((V,E),w)$ is defined as $\widetilde{\alpha}^{w}(\mathsf{T})  \coloneqq 0$ if  $\vert V\vert  = 1$ and otherwise as
\[\widetilde{\alpha}^{w}(\mathsf{T})  \coloneqq \frac{1}{w(\mathsf{T})} \sum_{v\in V\setminus\{\rho\}}{\left(w(e_v) \cdot \alpha_{\mathsf{T}}(e_v) \right)},\]
where $e_v=(p(v),v)$ denotes the incoming edge of $v$.
Analogously, we can define this index regarding edge length, i.e., $\widetilde{\alpha}^{\ell}$, instead of edge weight by replacing the weights $w$ by the lengths $\ell$.
\end{definition}

\subsection{Comparison of the new 3D indices} \label{sec:comp_indices}

Note that now, we have a total of eight different 3D imbalance indices at hand, namely $\widetilde{\mathcal{A}}^{w}$, $\widetilde{\mathcal{A}}^{\ell}$, $\widetilde{\alpha}^{w}$, $\widetilde{\alpha}^{\ell}$, $\widetilde{\mathcal{M}}^{w}$, $\widetilde{\mathcal{M}}^{\ell}$, $\widetilde{\mu}^{w}$, and $\widetilde{\mu}^{\ell}$.
We computed the 3D imbalance index values of all eight indices for all 63 bean models. The scatter plots and the correlation coefficients depicted in Figure \ref{fig:comparisonIndices} confirm that all measurements correlate strongly, which was expected as they all measure node imbalance in a similar way. As expected, $\widetilde{\mathcal{A}}$- and $\widetilde{\alpha}$- as well as $\widetilde{\mathcal{M}}$- and $\widetilde{\mu}$-based indices were also nearly equal in their assessment of imbalance since, by definition, they can only differ in cases where there are angles $\mathcal{A}>\frac{\pi}{2}$, which were rare in this data set. In general, these extreme angles are not to be expected in other practical data sets, thus these statistics will most often be quite similar.

\begin{figure}[ht] 
	\centering
	\begin{tikzpicture}
	\node (myplot) at (0,0) {\includegraphics[width=0.98\textwidth]{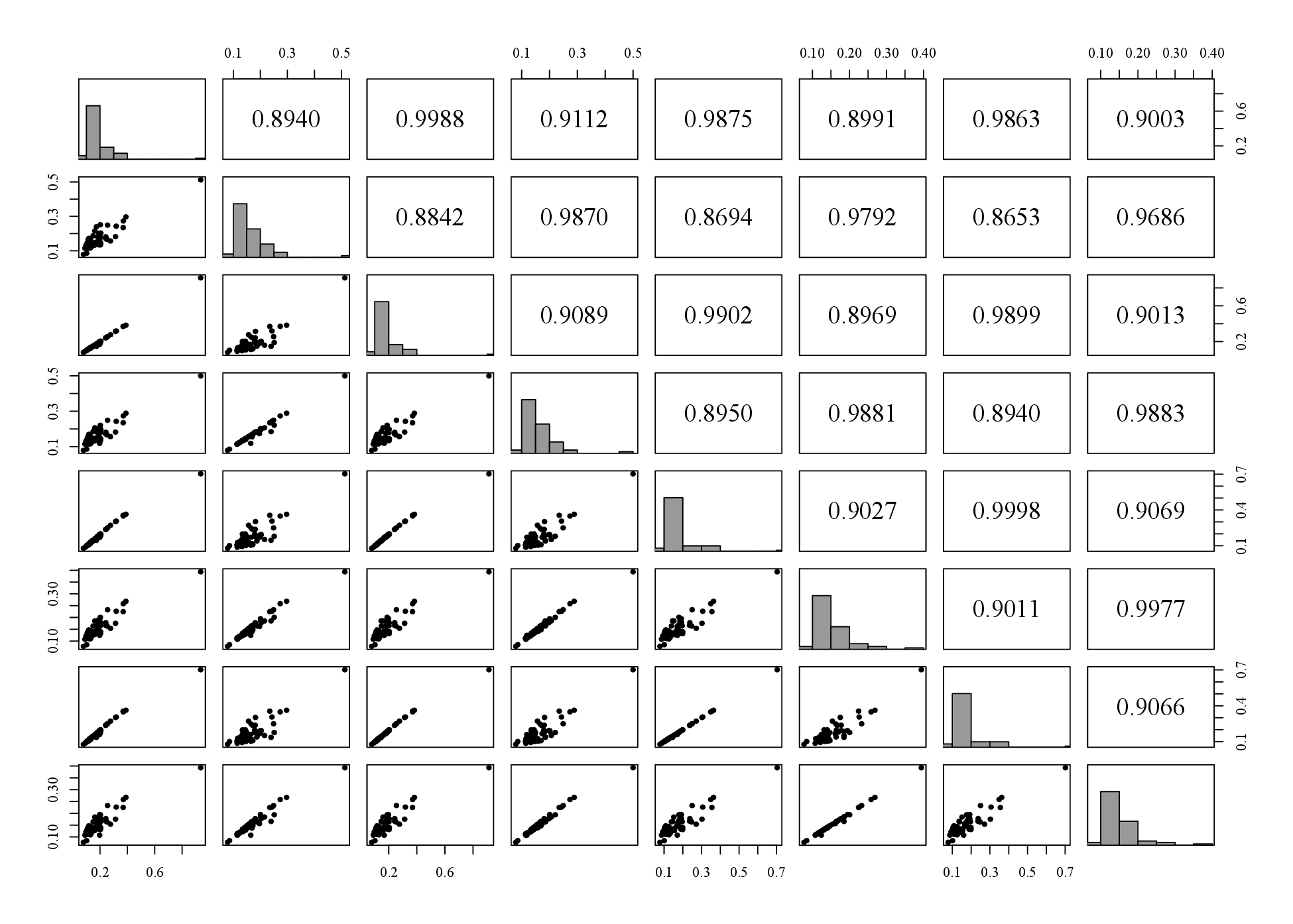}};
	\node (A) at (-6.1,4.385) {\Large$\widetilde{\mathcal{A}}^w$};
	\node (A) at (-4.3,3.13) {\Large$\widetilde{\mathcal{A}}^\ell$};
	\node (alpha) at (-2.5,1.86) {\Large$\widetilde{\alpha}^w$};
	\node (alpha) at (-0.7,0.66) {\Large$\widetilde{\alpha}^\ell$};
	\node (M) at (1.1,-0.58) {\Large$\widetilde{\mathcal{M}}^w$};
	\node (M) at (2.95,-1.8) {\Large$\widetilde{\mathcal{M}}^\ell$};
	\node (mu) at (4.7,-3.14) {\Large$\widetilde{\mu}^w$};
	\node (mu) at (6.5,-4.3) {\Large$\widetilde{\mu}^\ell$};
	\end{tikzpicture}
	\caption{A comparison of all eight 3D imbalance indices on the bean data set. The diagonal shows histograms of the 3D imbalance values of each index. The bottom left side depicts scatter plots of each pair of indices, and the top right side gives Pearson's correlation coefficients.}
	\label{fig:comparisonIndices}
\end{figure}

More interesting are, though, all points which do \emph{not} align. Every pair of points in the scatter plots through which we can draw a straight line with a negative slope is a pair of 3D trees for which the two indices do not agree. In fact, we can calculate the ratio of disagreement for every pair of the eight 3D imbalance indices on the bean data set and we found that the ratio is larger than $0$ for all and even nearly $0.5$ for some of them. The results are depicted in Table \ref{tab:disagree}. The weighting scheme of the indices, i.e., $w$ or $\ell$, had an observable impact on the similarity between the pairs of indices.

\begin{table}[ht]
\caption{Disagreement ratios (and absolute numbers in parentheses) of contradicting tree pairs in the bush bean data set for every pair of the eight 3D imbalance indices. The ratios are calculated as the actual number of tree pairs for which two indices contradict each other divided by the total number of tree pairs, i.e., $\genfrac{(}{)}{0pt}{1}{63}{2}=1,953$ in this case. All ratios are rounded to 2 decimal places.}
\label{tab:disagree}
\def\arraystretch{1.5}
\centering
\noindent\begin{tabular}{@{}llllllll@{}}
\toprule
 & $\widetilde{\mathcal{A}}^{\ell}$ & $\widetilde{\alpha}^{w}$ & $\widetilde{\alpha}^{\ell}$ & $\widetilde{\mathcal{M}}^{w}$ & $\widetilde{\mathcal{M}}^{\ell}$ & $\widetilde{\mu}^{w}$ & $\widetilde{\mu}^{\ell}$ \\\midrule
$\widetilde{\mathcal{A}}^{w}$ & 0.43 (834) & 0.03 (65) & 0.44 (864) & 0.06 (111) & 0.38 (747) & 0.07 (129) & 0.39 (762)\\
$\widetilde{\mathcal{A}}^{\ell}$ &   & 0.45 (875) & 0.07 (136) & 0.47 (925) & 0.13 (247) & 0.48 (939) & 0.15 (302) \\
$\widetilde{\alpha}^{w}$ &   &   & 0.45 (873) & 0.03 (62) & 0.39 (764) & 0.03 (64) & 0.39 (771)\\
$\widetilde{\alpha}^{\ell}$  &   &   &   & 0.47 (927) & 0.09 (173) & 0.48 (933) & 0.09 (174) \\
$\widetilde{\mathcal{M}}^{w}$  &   &   &   &   & 0.41 (806) & 0.01 (18) & 0.42 (813) \\
$\widetilde{\mathcal{M}}^{\ell}$ &   &   &   &   &   & 0.42 (816) & 0.03 (55)\\
$\widetilde{\mu}^{w}$ &   &   &   &   &   &   & 0.42 (823)\\\bottomrule
\end{tabular}
\end{table}

Despite partly agreeing on the node level (see Remark \ref{rem:sinus_relation}), the indices' assessment of 3D imbalance differs on the tree level. Conclusion: all eight 3D imbalance indices work structurally differently and that they, therefore, are not redundant. In addition, the defining characteristics of the indices already indicate which applications they are best suited for. Indeed, which index is the most appropriate for a given data set can be decided on the basis of only a few simple questions, which are presented in Section \ref{sec:guiding_questions}. In Section \ref{sec:app_beans} these questions are exemplarily answered with regard to the analysis of the bean data set.

\subsection{Mathematical properties of these new integral-based 3D imbalance indices} \label{sec:math_imbal_inds}

Before looking at a few further examples of how to apply the eight 3D imbalance indices, we give a brief overview over the mathematical properties of all of them, i.e., $\widetilde{\mathcal{A}}^{w}$, $\widetilde{\mathcal{A}}^{\ell}$, $\widetilde{\alpha}^{w}$, $\widetilde{\alpha}^{\ell}$, $\widetilde{\mathcal{M}}^{w}$, $\widetilde{\mathcal{M}}^{\ell}$, $\widetilde{\mu}^{w}$, and $\widetilde{\mu}^{\ell}$, in this section. 

\subsubsection{Relevant properties for application and computation}
First of all, we need to prove that the integral-based 3D imbalance indices actually meet all the requirements mentioned in Section \ref{sec:desired_properties} (Definitions \ref{def:robust_treechange} to \ref{def:sensitive_node_imbal}).

\begin{theorem} \label{thm:indices_great}
$\widetilde{\mathcal{A}}^{w}$, $\widetilde{\alpha}^{w}$, $\widetilde{\mathcal{M}}^{w}$, and $\widetilde{\mu}^{w}$ meet all of the following properties: robustness to tree shifting, (horizontally) rotating and mirroring, and resizing (Definition \ref{def:robust_treechange}), robustness to edge subdivision (Definition \ref{def:robust_edgesubdiv}), proportionality to weight (Definition \ref{def:proportional}), local proportionality to weight (Definition \ref{def:robust_local}), sensitivity to linearity (Definition \ref{def:sensitive_long_edge}) as well as sensitivity to node imbalance (Definition \ref{def:sensitive_node_imbal}).\\
Analogously, $\widetilde{\mathcal{A}}^{\ell}$, $\widetilde{\alpha}^{\ell}$, $\widetilde{\mathcal{M}}^{\ell}$, and $\widetilde{\mu}^{\ell}$ meet all of the same properties, except for proportionality to length (Definition \ref{def:proportional}) and local proportionality to length (Definition \ref{def:robust_local}) instead of proportionality to weight.
\end{theorem}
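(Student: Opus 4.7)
The plan is to verify each of the six properties separately, handling the weight-weighted indices first and then observing that the length-weighted versions follow by identical arguments with $w$ replaced by $\ell$ (using that edge subdivision preserves widths, hence preserves length on the same footing as weight). Since all four node statistics $\mathcal{A}, \alpha, \mathcal{M}, \mu$ are built from angles and length ratios between $p(v)$, $v$, and $\mathcal{C}(\mathsf{T}_v)$, most arguments can be carried out uniformly in the statistic.

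For the four \emph{structural} properties, I would argue as follows. Invariance under shifting, rotating, mirroring, and resizing (Definition \ref{def:robust_treechange}) is immediate: the node statistics are functions of relative geometry, and edge weights either remain unchanged (shifting, rotating, mirroring) or scale by a common factor along with $w(\mathsf{T})$ under resizing, so the weighted averages are preserved. For robustness to edge subdivision (Definition \ref{def:robust_edgesubdiv}), the key observation is that inserting $s_x$ on $e=(p,v)$ does not change the geometric object the tree represents; in particular, for any point $s_z$ on $e$ the pending subtree $\mathsf{T}_{s_z}$ has the same weight, length, and centroid before and after subdivision. A change of variables $y\mapsto z=x+(1-x)y$ on $e_1$ and $y\mapsto z=xy$ on $e_2$ then turns the two new contributions into $w(e)\int_x^1 \mu_{\mathsf{T},e}(\gamma_e(z))\,\mathrm{d}z+w(e)\int_0^x \mu_{\mathsf{T},e}(\gamma_e(z))\,\mathrm{d}z = w(e)\cdot \mu_{\mathsf{T}}(e)$, which is exactly the original contribution of $e$. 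Proportionality to weight/length (Definition \ref{def:proportional}) is a direct disjoint-union decomposition: $V(\mathsf{T})\setminus\{\rho\}$ splits as the disjoint union of the $V(\mathsf{T}_i)\setminus\{\rho_i\}$ and pending subtrees of vertices in $\mathsf{T}_i$ coincide with their pending subtrees in $\mathsf{T}$. Local proportionality (Definition \ref{def:robust_local}) exploits the fact that the node imbalance at any ancestor of $v$ (or at any point on an ancestor-edge, under hypothetical subdivision) depends on $\mathsf{T}_v$ only through its weight and its centroid; since both are preserved when replacing $\mathsf{T}_v$ by $\mathsf{T}_v'$, the contributions from outside $\mathsf{T}_v$ cancel in the difference, leaving exactly $\frac{w(\mathsf{T}_v)}{w(\mathsf{T})}\bigl(\widetilde{\mu}^w(\mathsf{T}_v)-\widetilde{\mu}^w(\mathsf{T}_v')\bigr)$.

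The part I expect to require the most care is sensitivity to linearity (Definition \ref{def:sensitive_long_edge}). The single-edge case is trivial since then the unique non-root vertex is a leaf with $\mathcal{C}(\mathsf{T}_l)=l$. For the limit, elongating $e=(p,v)$ by $\lambda$ translates $T_v$ by $(v-p)\lambda$ and turns $e$ into an edge of length $(1+\lambda)\ell(e)$ and weight $(1+\lambda)w(e)$. The argument splits into two observations. First, for any vertex $u\notin\{v\}\cup V(T_v)$ as well as for any $u\in V(T_v)\setminus\{v\}$, the node imbalance integral $\mu_{\mathsf{T}^{\nearrow \lambda\cdot e}}(e_u)$ remains uniformly bounded (by $1$, $\pi$, $2$, or $\pi/2$), so after dividing by $w(\mathsf{T}^{\nearrow \lambda\cdot e})=w(\mathsf{T})+\lambda w(e)\to\infty$ these contributions vanish. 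Second, on the elongated edge itself I would show that for every fixed $x\in(0,1]$ the centroid $\mathcal{C}(\mathsf{T}^{\nearrow \lambda\cdot e}_{s_x})$ is asymptotically dominated by the portion of $e$ between $s_x$ and the new endpoint of $e$, hence $\mathcal{C}(\mathsf{T}^{\nearrow \lambda\cdot e}_{s_x})-s_x$ aligns with $(v-p)$, driving the integrand to $0$; dominated convergence then yields $\mu_{\mathsf{T}^{\nearrow \lambda\cdot e}}(e)\to 0$, and since $(1+\lambda)w(e)/w(\mathsf{T}^{\nearrow \lambda\cdot e})\to 1$, the contribution of $e$ also vanishes in the limit.

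Finally, for sensitivity to node imbalance (Definition \ref{def:sensitive_node_imbal}), I would use that the four node statistics depend continuously on the position of an interior point $s_x$ of any edge $e_v$ (because $\mathcal{C}(\mathsf{T}_{s_x})$, distances, and angles vary continuously with $x$). If any vertex or edge subdivision has positive imbalance, continuity gives positivity on an open sub-interval of the corresponding edge, so the curve integral, the weighted sum, and hence the index are strictly positive; the converse is immediate since all integrands are non-negative. This closes the argument, and all proofs transfer verbatim to the length-weighted case.
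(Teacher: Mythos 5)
Your proof is correct and follows essentially the same route as the paper's: the same splitting of the edge-imbalance integral (via the shared reference line and unchanged pending subtrees) for subdivision robustness, the same disjoint decomposition for (local) proportionality, and the same continuity argument for sensitivity to node imbalance. The only cosmetic difference is in sensitivity to linearity, where you package the paper's explicit far-section estimate as pointwise convergence of the integrand on $(0,1]$ plus dominated convergence; the underlying geometric fact -- that $\mathcal{C}\left(\mathsf{T}_{s_x}\right)$ is pulled onto the line $g_{v,p}$ and \enquote{behind} $s_x$ as $\lambda \to \infty$ while the remaining edges' contributions are crushed by the growing total weight -- is identical.
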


Although the fulfillment of all properties listed in Theorem \ref{thm:indices_great} (including the generalized rotation and mirroring mentioned in the following Remark \ref{rem:root_edge}) is rather intuitive (mostly based on the indices' structure as a weighted mean of the edge imbalance integrals), its detailed proof is quite technical and can thus be found in Appendix \ref{sec:appendix_prop_indices}.

\begin{remark} \label{rem:root_edge}
Note that all of the eight 3D imbalance indices are not only robust to horizontal rotation around a vertical axis but also robust to any rotation.\footnote{Similarly, they are robust to any and not just horizontal mirroring.} This holds because all node imbalance values are only computed with respect to the relative position of adjacent parts of the tree. These relations are not changed by rotating the tree. Since the root is by definition the only node that does not have an imbalance value, the 3D imbalance index values are independent of the position of the tree with regard to the horizontal plane or a vertical axis. This property makes all eight 3D imbalance indices perfectly suitable to measure pure internal imbalance. \\
However, if it is desired to also take into account the information of the relative position of the plant to the horizontal plane, i.e., how far it leans to the side (which is also known as the external imbalance), we can simply introduce a new \enquote{root edge} $e_\rho=(\rho_{new},\rho)$ to a rooted 3D tree $\mathsf{T}$ which runs parallel to the vertical axis. Then, the root imbalance value $i_{T,e_\rho}(\rho)$ with $i=\mathcal{A}$, $\alpha$, $\mathcal{M}$, or $\mu$ can be used as an assessment of the external imbalance. In Section \ref{sec:app_beans} we show that both the internal and external imbalance are valuable shape differentiating factors, which both provide their exclusive information.\\
Alternatively, a weighted mean of the internal and external imbalance can be computed to form a combined imbalance value $c\widetilde{I}(\mathsf{T})$: 
\[c\widetilde{I}(\mathsf{T})=\frac{1}{w(e_\rho)+w(\mathsf{T})}\cdot\left(w(e_\rho) \cdot i_{T,e_\rho}(\rho)+ w(\mathsf{T}) \cdot \widetilde{I}(\mathsf{T})\right),\]
where $\widetilde{I}$ denotes any of our eight 3D imbalance indices (corresponding to the selected $i$).
For this, set the weight $w(e_\rho)$ for the root imbalance value $i_{T,e_\rho}(\rho)$ to determine its influence on the combined imbalance value $c\widetilde{I}$.
For example, $w(e_\rho)=\frac{w(\mathsf{T})}{3}$ would mean that the root imbalance value makes up one quarter and the 3D imbalance index of $\mathsf{T}$ three quarters of the combined imbalance value. However, merging both aspects, the internal and the external imbalance, into a single combined imbalance value will go along with a loss of information and should only be done if necessary.\\
Both the pure root imbalance value as well as the combined index value have been implemented in the \textsf{R} package \textsf{treeDbalance} (see Appendix \ref{sec:software}).
\end{remark}

\paragraph{Recursiveness and computation time}\label{sec:computation}

Another important factor of any index which is used in practical applications is how complicated and time-consuming its computation is. Fortunately, the underlying tree structure allows us to calculate all important terms recursively, such that we can obtain the node imbalance values of $N$ nodes in linear time, i.e., $O(N)$. To show this, we begin with a proposition stating that the subtree weights and centroids can be computed recursively from the maximal pending subtrees.

\begin{proposition} \label{prop:recursions}
Let $\mathsf{T}=((V,E),w)$ be a rooted 3D tree. If $\mathsf{T}$ consists of only one vertex $\rho$, we have weight $w(\mathsf{T})=0$ and centroid $\mathcal{C}(\mathsf{T})=\rho$. If $\mathsf{T}$ consists of only one edge $(\rho,v)$, we have $w(\mathsf{T})=w((\rho,v))$ and $\mathcal{C}(\mathsf{T})=\frac{\rho+v}{2}$. If $\mathsf{T}$ consists of more than one vertex, let $\mathsf{T}_1,\ldots,\mathsf{T}_k$ be the maximal pending subtrees rooted at the children $v_1,\ldots,v_k$ of $\rho$ with $k \in \mathbb{N}_{\geq 1}$ and let $e_1,\ldots,e_k$ be the edges $(\rho,v_1),\ldots,(\rho,v_k)$. Then, the weight and centroid of $\mathsf{T}$ can be computed using the following recursions:
\begin{align*}
    w(\mathsf{T})&= \sum_{j=1}^{k} {\left(w(\mathsf{T}_j)+w(e_j)\right)} \qquad\quad \text{and} &
    \mathcal{C}(\mathsf{T})&=  \frac{ \displaystyle\sum_{j=1}^{k} {\left( w(\mathsf{T}_j)\cdot \mathcal{C}(\mathsf{T}_j) \ + \ w(e_j) \cdot \mathcal{C}(e_j)\right)}} {\displaystyle\sum_{j=1}^{k} {\left(w(\mathsf{T}_j)+w(e_j)\right)}}. 
\end{align*}
In particular, the recursions are independent of the order of the subtrees $\mathsf{T}_1,\ldots,\mathsf{T}_k$.
\end{proposition}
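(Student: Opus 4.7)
The plan is to prove each of the three cases by unpacking the definitions of total edge weight and centroid, using that the edge set of $\mathsf{T}$ decomposes naturally when one pulls off the root. The two base cases are immediate: if $V(\mathsf{T})=\{\rho\}$ then $E(\mathsf{T})=\emptyset$, so the weight sum vanishes and $\mathcal{C}(\mathsf{T})=\rho$ is the stated convention; if $\mathsf{T}$ consists of the single edge $(\rho,v)$, then $w(\mathsf{T})=w((\rho,v))$ by definition and $\mathcal{C}(\mathsf{T})=\mathcal{C}((\rho,v))=\tfrac{1}{2}(\rho+v)$ by the formula for the centroid of an edge. These cases require only a sanity check that the formulas in Section~\ref{sec:prelim_math} actually evaluate as claimed.

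For the recursive case, the key observation is the disjoint partition
\[ E(\mathsf{T}) \;=\; \{e_1,\dots,e_k\} \;\sqcup\; \bigsqcup_{j=1}^{k} E(\mathsf{T}_j), \]
valid because every edge of $\mathsf{T}$ either leaves $\rho$ or belongs to exactly one pending subtree $\mathsf{T}_j$. I would first apply this partition to $w(\mathsf{T})=\sum_{e\in E(\mathsf{T})} w(e)$ and split the sum to obtain $w(\mathsf{T})=\sum_{j=1}^{k} w(e_j) + \sum_{j=1}^{k} w(\mathsf{T}_j)$, which is the claimed recursion. Then I would apply the same partition inside the centroid formula, splitting its numerator into a contribution $\sum_j w(e_j)\,\mathcal{C}(e_j)$ from root-edges and contributions $\sum_{e\in E(\mathsf{T}_j)} w(e)\,\mathcal{C}(e)$ from each subtree. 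By the definition of $\mathcal{C}(\mathsf{T}_j)$, the latter equals $w(\mathsf{T}_j)\cdot \mathcal{C}(\mathsf{T}_j)$ whenever $w(\mathsf{T}_j)>0$, yielding the stated formula once the new denominator $w(\mathsf{T})$ is substituted from the weight recursion.

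The only subtle point, and probably the main obstacle, is the bookkeeping around pending subtrees consisting of a single vertex (i.e.\ when a child $v_j$ of $\rho$ is a leaf). For such $j$ we have $w(\mathsf{T}_j)=0$ and $E(\mathsf{T}_j)=\emptyset$, so the subtree-centroid formula is not directly applicable, but the subtree's contribution to the partitioned numerator is an empty sum, which equals $0 = w(\mathsf{T}_j)\cdot \mathcal{C}(\mathsf{T}_j)$ regardless of the conventional value $\mathcal{C}(\mathsf{T}_j)=v_j$. Thus the uniform form of the recursion is preserved, and I would explicitly note this degenerate case to make the argument airtight. Finally, since $\mathsf{T}$ has at least one child, $w(\mathsf{T})\geq w(e_1)>0$, so dividing by $w(\mathsf{T})$ in the centroid recursion is legitimate. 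Independence of the order of $\mathsf{T}_1,\ldots,\mathsf{T}_k$ is then a one-line remark following from commutativity and associativity of addition in $\mathbb{R}$ and $\mathbb{R}^3$.
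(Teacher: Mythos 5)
Your proposal is correct and follows essentially the same route as the paper's proof: split the edge-weight sum and the centroid numerator over the partition of $E(\mathsf{T})$ into root edges and subtree edge sets, then identify each subtree's contribution as $w(\mathsf{T}_j)\cdot\mathcal{C}(\mathsf{T}_j)$. Your explicit treatment of leaf children (where $w(\mathsf{T}_j)=0$) is in fact slightly more careful than the paper's, which silently multiplies by $\frac{w(\mathsf{T}_j)}{w(\mathsf{T}_j)}$ — a step that is undefined in that degenerate case but harmless for exactly the reason you give.
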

Both of these recursions are intuitive (common computation of centroids in mechanics) and follow directly from the definitions. For the sake of completeness, the calculations for the recursive formulas are provided in Appendix \ref{sec:appendix_prop_indices}.

Now, we can immediately state how to also compute the weight and centroid of the pending subtree for any edge subdivision -- the foundation for calculating the edge imbalance integrals.

\begin{corollary}\label{cor:recursions_subdiv}
Let $\mathsf{T}=((V,E),w)$ be a rooted 3D tree with an edge $e=(p,v)\in E(\mathsf{T})$. Given the centroid $\mathcal{C}(\mathsf{T}_v)$ and the weight $w(\mathsf{T}_v)$ of the pending subtree as well as the weight $w(e)$ of the edge, we can calculate the weight and centroid of the pending subtree of any edge subdivision $s_x=v+(p-v)\cdot x$ with $x\in (0,1)$ using the following formulas:
\begin{align*}
    w(\mathsf{T}_{s_x})&=w(\mathsf{T}_{v})+w(e)\cdot x,\\
    \mathcal{C}(\mathsf{T}_{s_x})&=\frac{\mathcal{C}(\mathsf{T}_{v})\cdot w(\mathsf{T}_{v})+\frac{s_x+v}{2}\cdot w(e)\cdot x} {w(\mathsf{T}_{v})+w(e)\cdot x}
    =\frac{\mathcal{C}(\mathsf{T}_{v})\cdot w(\mathsf{T}_{v})+\frac{2v+(p-v)\cdot x}{2}\cdot w(e)\cdot x} {w(\mathsf{T}_{v})+w(e)\cdot x}.
\end{align*}
\end{corollary}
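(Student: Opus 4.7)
The plan is to read this corollary as a direct application of Proposition~\ref{prop:recursions} to the subdivided tree, where the structural simplicity of the pending subtree $\mathsf{T}_{s_x}$ does all the work. The first step is to unpack what edge subdivision does according to the definition in Section~\ref{sec:prelim_math}: inserting $s_x = v + (p-v)\cdot x$ deletes $e = (p,v)$ and replaces it by $e_1 = (p, s_x)$ and $e_2 = (s_x, v)$ with weights $w(e)\cdot(1-x)$ and $w(e)\cdot x$ respectively, while leaving every other node, every other edge, and their weights untouched. In particular, the pending subtree $\mathsf{T}_v$ carries over unchanged under this operation, so the input data $\mathcal{C}(\mathsf{T}_v)$ and $w(\mathsf{T}_v)$ still refer to the same object after the subdivision. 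Stating this invariance at the outset makes the formulas that follow unambiguous.

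Next, I would observe that the pending subtree $\mathsf{T}_{s_x}$ rooted at the new node $s_x$ consists of the single outgoing edge $e_2$ attached to $\mathsf{T}_v$. Hence in the notation of Proposition~\ref{prop:recursions} we are in the case $k=1$ with unique maximal pending subtree $\mathsf{T}_v$ joined through $e_2$. The weight recursion then instantly yields $w(\mathsf{T}_{s_x}) = w(\mathsf{T}_v) + w(e_2) = w(\mathsf{T}_v) + w(e)\cdot x$, which is the first formula of the corollary.

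For the centroid, I would apply the centroid recursion from Proposition~\ref{prop:recursions} with the single summand $\mathsf{T}_v$ joined by $e_2$, and substitute $\mathcal{C}(e_2) = \tfrac{s_x + v}{2}$ together with $w(e_2) = w(e)\cdot x$. This produces the first stated form of $\mathcal{C}(\mathsf{T}_{s_x})$. The alternative right-hand side is then obtained by a purely algebraic substitution of $s_x = v + (p-v)\cdot x$ into $\tfrac{s_x + v}{2} = \tfrac{2v + (p-v)\cdot x}{2}$.

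There is essentially no hard step here, which is why the statement is phrased as a corollary: the whole argument is bookkeeping after one invocation of Proposition~\ref{prop:recursions}. The only subtle point worth emphasizing is that the width of $e$ is preserved on both $e_1$ and $e_2$, so $w(e_2) = w(e)\cdot x$ really is the correct value to feed into the recursion; this is what licenses using $w(e)$ (rather than some rescaled quantity) as part of the given data together with $w(\mathsf{T}_v)$ and $\mathcal{C}(\mathsf{T}_v)$.
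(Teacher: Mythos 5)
Your proposal is correct and matches the paper's approach: the paper presents this corollary as an immediate consequence of Proposition \ref{prop:recursions} (with no separate written proof), and your argument simply makes that explicit by applying the recursion with $k=1$ to the single maximal pending subtree $\mathsf{T}_v$ attached via $e_2=(s_x,v)$ with $w(e_2)=w(e)\cdot x$. The bookkeeping steps (invariance of $\mathcal{C}(\mathsf{T}_v)$ and $w(\mathsf{T}_v)$ under subdivision, and the substitution $\tfrac{s_x+v}{2}=\tfrac{2v+(p-v)\cdot x}{2}$) are all accurate.
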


Concerning the computation time of our 3D imbalance indices, we will now show that a fixed number $N$ of node imbalance values for any of the four node imbalance statistics and for any $N$ nodes in a rooted 3D tree can be calculated in $O(N)$. As a consequence, the 3D imbalance indices can be computed in linear time depending on the size, i.e., number of vertices, of the 3D plant model (see Corollary \ref{cor:run_time}).

\begin{proposition} \label{prop:node_imbal_time}
For any rooted 3D tree $\mathsf{T}$ with $n$ leaves and $m$ interior nodes, all four 3D node imbalance statistics $\mathcal{A}$, $\alpha$, $\mathcal{M}$, and $\mu$ can be computed in time $O(m+n-1)$ for all $m+n-1$ non-root vertices.
\end{proposition}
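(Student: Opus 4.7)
The plan is to exhibit an algorithm that computes the four node imbalance statistics for all $m+n-1$ non-root vertices in the stated time, by splitting the work into two phases: a single bottom-up traversal that precomputes the subtree data needed by the definitions, followed by a constant-time per-node evaluation using that data.

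For the first phase, I would traverse $\mathsf{T}$ in post-order, so that every vertex is visited only after all its descendants. At each leaf $v$ we initialize $w(\mathsf{T}_v)=0$ and $\mathcal{C}(\mathsf{T}_v)=v$, which is $O(1)$ work. At an internal vertex $v$ with children $v_1,\ldots,v_k$ whose subtree weights and centroids are already known, we apply the recursion from Proposition \ref{prop:recursions} (with $\rho$ and the $\mathsf{T}_j$ replaced by $v$ and $\mathsf{T}_{v_j}$) to obtain $w(\mathsf{T}_v)$ and $\mathcal{C}(\mathsf{T}_v)$ in time $O(k)$, since each of the two recursion formulas is a sum of $k$ constant-size vector terms plus one division. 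Summing the per-vertex work across the whole tree gives $\sum_{v \in V} \deg^{+}(v) = |E| = m+n-1$ by the handshake identity for out-degrees, so this precomputation runs in $O(m+n-1)$ time.

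For the second phase I would iterate once over the $m+n-1$ non-root vertices. At each such $v$ the quantities $\mathcal{C}(\mathsf{T}_v)$, $v$, and $p(v)$ are all available, so I can evaluate $\mathcal{A}_{\mathsf{T}}(v)$ from Definition \ref{def:3D_CA} using one scalar product, two Euclidean norms, one division, and one $\arccos$, which is $O(1)$; $\alpha_{\mathsf{T}}(v)$ from Definition \ref{def:3D_mCA} by a single comparison of $\mathcal{A}_{\mathsf{T}}(v)$ against $\pi/2$; $\mu_{\mathsf{T}}(v)$ from Definition \ref{def:3D_relCD} by one cross-product computation together with the norm ratio in the definition of $d(x,g_{v,p(v)})$; and $\mathcal{M}_{\mathsf{T}}(v)$ from Definition \ref{def:3D_exrelCD} by the same case distinction as for $\alpha_{\mathsf{T}}(v)$. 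Since each of the four statistics costs $O(1)$ per vertex, this phase also runs in $O(m+n-1)$ time, so the overall running time is $O(m+n-1)$ as claimed.

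I do not expect any genuine obstacles here; the only point that needs care is the amortization argument in the first phase, namely verifying that the per-node cost $O(k)$ of merging $k$ children sums to $O(m+n-1)$ rather than something like $O((m+n)^2)$. That follows immediately because the sum of out-degrees equals the number of edges $|E| = m+n-1$, which is essentially the handshake lemma for rooted trees and does not need the tree to be binary or otherwise bounded in degree.
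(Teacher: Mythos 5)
Your proposal is correct and follows essentially the same route as the paper's proof: a bottom-up pass computing subtree weights and centroids via the recursions of Proposition \ref{prop:recursions}, followed by a constant-time evaluation of each of the four statistics per non-root vertex. The only difference is that you make explicit the amortization argument (the sum of out-degrees equals $|E|=m+n-1$), which the paper leaves implicit when asserting the first phase runs in linear time.
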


The proof of Proposition \ref{prop:node_imbal_time} can be found in Appendix \ref{sec:appendix_prop_indices}. In the implementation of the eight 3D imbalance indices, an estimation method based on a finite number of subdivisions per edge is used (see Appendix \ref{sec:software}). The maximal number of intervals $k$ (by default 200) and a tolerance for deviations can be set by the user. It is tested for an increasing number of edge subdivisions if an estimation is sufficiently precise, as indicated by the tolerance level, and the process will be terminated even if the maximal number of intervals is not yet reached. Therefore, the run time is at most $O\left(\frac{k(k+1)}{2}\right)=O(k^2)$ per edge. Thus, the estimation of the node imbalance values of a single edge is at most $k^2$ times the time it takes to compute one node imbalance value. For pending edges we do not even have to compute anything, as their edge imbalance value by definition is always zero. This directly leads to the following corollary.

\begin{corollary} \label{cor:run_time}
For any rooted 3D tree $\mathsf{T}$ with $n$ leaves and $m$ interior nodes (and thus $m+n-1$ edges of which there are $m-1$ internal edges), the estimation of the eight 3D imbalance indices based on a maximal number $k$ of node imbalance values per (subdivided) edge can be done in time $O(k^2\cdot(m-1)+n)\approx O(k^2\cdot m)$. The computation time is thus linear with regard to the total number of (internal) nodes in $\mathsf{T}$.
\end{corollary}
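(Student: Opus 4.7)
The plan is to combine the recursive computations from Proposition \ref{prop:recursions} and Corollary \ref{cor:recursions_subdiv} with the per-node time bound from Proposition \ref{prop:node_imbal_time}, and then account for the cost contributed by each edge in the iterative refinement of the integral estimator.

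First I would argue a preprocessing phase: by a single post-order traversal of the tree topology we precompute, for every vertex $v$, the pending subtree weight $w(\mathsf{T}_v)$ and centroid $\mathcal{C}(\mathsf{T}_v)$ using the recursive formulas in Proposition \ref{prop:recursions}. Each internal vertex combines only the values at its children together with the weights and centroids of its incident outgoing edges, so the total work is $O(m+n)$. After this step, every pair $(w(\mathsf{T}_v),\mathcal{C}(\mathsf{T}_v))$ is available in $O(1)$.

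Next I would bound the cost of the integral estimator along a single internal edge $e=(p,v)$. By Corollary \ref{cor:recursions_subdiv}, given the precomputed $w(\mathsf{T}_v)$, $\mathcal{C}(\mathsf{T}_v)$ and $w(e)$, the weight and centroid of $\mathsf{T}_{s_x}$ for any subdivision point $s_x$ can be obtained in $O(1)$, and from these each of the four node imbalance statistics $\mathcal{A},\alpha,\mathcal{M},\mu$ at $s_x$ is itself $O(1)$ (the same argument underlying Proposition \ref{prop:node_imbal_time}). The iterative refinement described in Appendix \ref{sec:software} evaluates these values on increasingly fine partitions of $[0,1]$; in the worst case it runs through partitions of sizes $1,2,\ldots,k$ before either converging to within the user tolerance or hitting the cap $k$. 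The total number of point evaluations along one internal edge is therefore at most $1+2+\cdots+k=k(k+1)/2=O(k^2)$, which dominates the per-edge cost of forming the corresponding Riemann-sum approximation of the integral in Definition \ref{def:edge_imbalance}.

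Finally I would aggregate. Pending edges contribute nothing beyond reading off the stored value $0$, so together they cost $O(n)$. The $m-1$ internal edges each cost $O(k^2)$, giving $O(k^2(m-1))$. Forming the weighted mean in Definitions \ref{def:3D_int_relCD}--\ref{def:3D_int_mCA} requires one pass through the edges and is $O(m+n)$, which is absorbed by the previous terms. Summing yields the advertised $O(k^2(m-1)+n)$, and since $k$ is a fixed user parameter this is linear in the number of vertices. The only subtle point, and what I would treat most carefully, is the worst-case analysis of the adaptive refinement: one must verify that the sequence of partition sizes actually obeys a $\sum_{i\le k} i$ bound (rather than, say, a geometric doubling that would give $O(k)$ and a correspondingly sharper total). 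As implemented, the bound $O(k^2)$ is the honest worst-case, and everything else follows mechanically from the preceding results.
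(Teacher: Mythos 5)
Your argument is correct and follows essentially the same route as the paper: linear-time precomputation of subtree weights and centroids via Proposition \ref{prop:recursions}, constant-time evaluation of each node imbalance value at a subdivision point via Corollary \ref{cor:recursions_subdiv}, the worst-case $1+2+\cdots+k=O(k^2)$ bound for the adaptive refinement per internal edge, and zero cost for pending edges. The only addition you make is to flag explicitly that the refinement schedule is additive rather than geometric, which is consistent with the paper's stated $O\bigl(\tfrac{k(k+1)}{2}\bigr)$ per-edge bound.
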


More information on the implementation of the 3D imbalance indices, e.g., on how to efficiently travel through the graph-theoretical tree and compute the nodes' imbalance values, can be found in Appendix \ref{sec:software}.

\subsubsection{Extremal properties} \label{sec:index_extremal}

Next, we explore what the minimal and maximal values of our eight 3D imbalance indices are and what the respective minimal and maximal rooted 3D trees look like. This analysis will allow for a deeper understanding of how these indices assess imbalance.

\paragraph{Minimal value and trees}
First, we are going to deal with the minimal case. All eight 3D imbalance indices have a lower bound of $0$. We now argue that this bound is tight, i.e., that it is also the minimal value. For each of the four node imbalance statistics $i=\mathcal{A}$, $\alpha$, $\mathcal{M}$, and $\mu$ we know that we can change the position of the pending subtree of an interior node $v\neq \rho$ such that $i(v)=0$ is minimal (see Corollary \ref{cor:value_ranges}), e.g., by rotating the subtree around $v$ until it is in line with the incoming edge $e_v$. Leaves naturally have an imbalance value of 0. Let $\widetilde{I}^w$ and $\widetilde{I}^\ell$ denote the 3D imbalance indices derived from $i$. We now show that the minimal trees of $\widetilde{I}^w$ and $\widetilde{I}^\ell$ coincide. Even more so, we can decide if a tree is minimal or not solely on the basis of knowing the finite number of node imbalance values for $V\setminus\{\rho\}$ without having to do any integration.

\begin{theorem} \label{thm:min_imbalindices}
Let $\mathsf{T}=((V,E),w)$ be a rooted 3D tree with root $\rho$, let $i=\mathcal{A}$, $\alpha$, $\mathcal{M}$, or $\mu$, and let $\widetilde{I}^w$ and $\widetilde{I}^\ell$ denote the 3D imbalance indices derived from $i$. Then, the following characterizations are equivalent:
\begingroup
\renewcommand\labelenumi{\theenumi)}
\begin{enumerate}
    \item $\widetilde{I}^w(\mathsf{T})=0$,
    \item $\widetilde{I}^\ell(\mathsf{T})=0$,
    \item $i_{\mathsf{T},e_v}(v)=0$ \quad $\forall \ v\in\mathring{V}\setminus\{\rho\}$.
\end{enumerate}
\endgroup
\end{theorem}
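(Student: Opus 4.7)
The plan is to establish $1)\Rightarrow 3)$, $2)\Rightarrow 3)$, $3)\Rightarrow 1)$, and $3)\Rightarrow 2)$. The first two are automatic, while the latter two share the same core argument: hypothesis 3) forces the edge imbalance integrand $i_{\mathsf{T},e_v}$ to vanish identically on every edge of $\mathsf{T}$.

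First, I would observe that for each of the four statistics $i$, the integrand $x\mapsto i_{\mathsf{T},e_v}(v+(p(v)-v)x)$ is non-negative and continuous (the paper notes continuity immediately after Definition \ref{def:edge_imbalance}), so the edge integral equals zero if and only if this integrand vanishes identically. Because the defining sums of $\widetilde{I}^w$ and $\widetilde{I}^\ell$ have strictly positive coefficients, both 1) and 2) reduce to the same pointwise statement: $i_{\mathsf{T},e_v}(s)=0$ for every $v\in V\setminus\{\rho\}$ and every $s$ on $e_v$. Evaluating at $x=0$ yields $i_{\mathsf{T},e_v}(v)=0$, and hence 3); note that leaves are trivial since $\mathcal{C}(\mathsf{T}_v)=v$ makes $i(v)=0$ by definition.

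For the converse I would split by edge type. Pending edges require no hypothesis: the pending subtree $\mathsf{T}_{s_x}$ at any interior point $s_x$ of such an edge is just the segment from $s_x$ to the leaf $v$, whose centroid is its midpoint, lying on the line $g_{p(v),v}$ strictly on the $v$-side of $s_x$; all four statistics vanish there. For an internal edge $e_v=(p,v)$ with $v\in\mathring{V}\setminus\{\rho\}$, Corollary \ref{cor:recursions_subdiv} expresses $\mathcal{C}(\mathsf{T}_{s_x})$ as a convex combination of $\mathcal{C}(\mathsf{T}_v)$ and the midpoint $\tfrac{1}{2}(s_x+v)$, the latter of which always lies on $g_{p,v}$. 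For $i\in\{\alpha,\mu\}$, hypothesis 3) places $\mathcal{C}(\mathsf{T}_v)$ on $g_{p,v}$, so the combination stays on $g_{p,v}$ and $i(s_x)=0$. For $i\in\{\mathcal{A},\mathcal{M}\}$, hypothesis 3) sharpens to $\mathcal{C}(\mathsf{T}_v)=v+\lambda(v-p)$ with $\lambda\geq 0$; using $v-s_x=x(v-p)$ a short manipulation yields
\[\mathcal{C}(\mathsf{T}_{s_x})-s_x = \frac{(x+\lambda)\,w(\mathsf{T}_v)+\tfrac{1}{2}w(e_v)\,x^2}{w(\mathsf{T}_v)+w(e_v)\,x}\cdot (v-p),\]
whose scalar coefficient is non-negative (strictly positive for $x\in(0,1]$ because $v$ internal forces $w(\mathsf{T}_v)>0$). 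Thus $\mathcal{C}(\mathsf{T}_{s_x})-s_x$ is a non-negative multiple of the reference direction $v-p$ of $e_v$, giving $\mathcal{A}(s_x)=\mathcal{M}(s_x)=0$.

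The main obstacle I foresee is the directional bookkeeping in the $\mathcal{A}/\mathcal{M}$ case: one must verify that the enlarged pending subtree's centroid remains \emph{directly behind} $s_x$ and does not flip to \emph{directly in front}, since the latter configuration would still preserve $\alpha$ and $\mu$ at $0$ but would spike $\mathcal{A}$ to $\pi$ and $\mathcal{M}$ to $2$. The explicit sign of the scalar coefficient in the displayed formula settles this; with the integrand then identically zero on every edge, both weight- and length-weighted sums vanish and the remaining implications follow.
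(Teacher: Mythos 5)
Your proof is correct and follows essentially the same route as the paper's: both reduce $\widetilde{I}^w(\mathsf{T})=0$ and $\widetilde{I}^\ell(\mathsf{T})=0$ to the vanishing of all edge imbalance integrals via positivity of the weights and continuity/non-negativity of the integrand, and then propagate $i(v)=0$ to every edge subdivision $s_x$ using the centroid recursion of Corollary \ref{cor:recursions_subdiv}. Your only departure is cosmetic but welcome: where the paper argues verbally that $\mathcal{C}(\mathsf{T}_{s_x})$ lies between two points both \enquote{directly behind} $s_x$, you compute the scalar coefficient of $v-p$ explicitly, which settles the $\mathcal{A}/\mathcal{M}$ sign issue cleanly.
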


The proof of Theorem \ref{thm:min_imbalindices} is provided in Appendix \ref{sec:appendix_prop_indices}. Due to the relation of the statistics of the \enquote{full swing} and \enquote{sideways swing} approaches, we can also see that $\mathcal{A}(v)=0$ if and only if $\mathcal{M}(v)=0$ as well as $\alpha(v)=0$ if and only if $\mu(v)=0$. Moreover, we know that a node $v$ which is minimal regarding $\mathcal{A}$ and $\mathcal{M}$ is also minimal with regard to $\alpha$ and $\mu$. This leads to the following corollary:

\begin{corollary} \label{cor:min_imbalindices}
Let $\mathsf{T}=((V,E),w)$ be a rooted 3D tree with root $\rho$. Let $\varpi=w$ or $\ell$ be any weighting method for each term. Then, we have:
\begingroup
\renewcommand\labelenumi{\theenumi)}
\begin{enumerate}
    \item $\widetilde{\mathcal{A}}^\varpi(\mathsf{T})=0 \quad \Longleftrightarrow \quad \widetilde{\mathcal{M}}^\varpi(\mathsf{T})=0$,
    \item $\widetilde{\alpha}^\varpi(\mathsf{T})=0 \quad \Longleftrightarrow \quad \widetilde{\mu}^\varpi(\mathsf{T})=0$,
    \item $\widetilde{\mathcal{A}}^\varpi(\mathsf{T})=\widetilde{\mathcal{M}}^\varpi(\mathsf{T})=0  \quad\quad \Longrightarrow \quad\quad \widetilde{\alpha}^\varpi(\mathsf{T})=\widetilde{\mu}^\varpi(\mathsf{T})=0$.
\end{enumerate}
\endgroup
\end{corollary}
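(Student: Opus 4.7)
The plan is to reduce the corollary to a statement about node-level imbalance values using Theorem~\ref{thm:min_imbalindices}, and then exploit the trigonometric relationships between $\mathcal{A}$, $\alpha$, $\mathcal{M}$, and $\mu$ recorded in Remark~\ref{rem:sinus_relation} and in Definitions~\ref{def:3D_mCA} and~\ref{def:3D_exrelCD}. Concretely, Theorem~\ref{thm:min_imbalindices} tells us that for any of the four underlying node statistics $i$ and for either weighting $\varpi \in \{w,\ell\}$, the index $\widetilde{I}^{\varpi}(\mathsf{T})$ vanishes if and only if $i_{\mathsf{T},e_v}(v) = 0$ for every $v \in \mathring{V}\setminus\{\rho\}$. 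Hence all three claims immediately reduce to node-wise equivalences/implications about $\mathcal{A}(v)$, $\alpha(v)$, $\mathcal{M}(v)$, $\mu(v)$, which can be checked one node at a time.

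For claim (1), I would argue as follows. If $\mathcal{A}(v) = 0$, then in particular $\mathcal{A}(v) \leq \pi/2$, so by Definition~\ref{def:3D_exrelCD}, $\mathcal{M}(v) = \mu(v) = \sin(\mathcal{A}(v)) = 0$ by Remark~\ref{rem:sinus_relation}. Conversely, suppose $\mathcal{M}(v) = 0$. Since $\mu(v) \in [0,1]$, the branch $\mathcal{M}(v) = 2 - \mu(v)$ is at least $1$, so $\mathcal{M}(v) = 0$ forces us into the branch $\mathcal{A}(v) \leq \pi/2$ with $\mu(v) = 0$. But $\mu(v) = \sin(\mathcal{A}(v))$ by Remark~\ref{rem:sinus_relation}, so $\sin(\mathcal{A}(v)) = 0$, and restricted to $[0,\pi/2]$ this gives $\mathcal{A}(v) = 0$. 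Applying this equivalence nodewise and invoking Theorem~\ref{thm:min_imbalindices} twice yields (1).

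For claim (2), I would use that $\alpha(v) = \min\{\mathcal{A}(v),\pi-\mathcal{A}(v)\} \in [0,\pi/2]$ and $\mu(v) = \sin(\mathcal{A}(v))$. Both vanish iff $\mathcal{A}(v) \in \{0,\pi\}$, so $\alpha(v) = 0 \Leftrightarrow \mu(v) = 0$, and claim (2) follows. For claim (3), given $\widetilde{\mathcal{A}}^{\varpi}(\mathsf{T}) = \widetilde{\mathcal{M}}^{\varpi}(\mathsf{T}) = 0$, the nodewise characterization of Theorem~\ref{thm:min_imbalindices} gives $\mathcal{A}(v) = 0$ for all $v \in \mathring{V}\setminus\{\rho\}$; but then $\alpha(v) = \min\{0,\pi\} = 0$ and $\mu(v) = \sin(0) = 0$ at every such $v$, and Theorem~\ref{thm:min_imbalindices} (in the reverse direction, applied to $i = \alpha$ and $i = \mu$) gives $\widetilde{\alpha}^{\varpi}(\mathsf{T}) = \widetilde{\mu}^{\varpi}(\mathsf{T}) = 0$.

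There is no real obstacle in this proof. The only things to be careful about are: (i) the edge case $\mathcal{C}(\mathsf{T}_v) = v$, where by definition all four node statistics equal $0$ and the claimed equivalences are trivial; (ii) the boundary case $\mathcal{A}(v) = \pi/2$, which is handled by the first branch in Definitions~\ref{def:3D_mCA} and~\ref{def:3D_exrelCD}, so the case distinction is consistent; and (iii) making sure Theorem~\ref{thm:min_imbalindices} is cited in both directions (it is an equivalence of three statements, one of which is purely nodewise, so the passage between the tree-level indices and the nodewise conditions is symmetric and independent of $\varpi$).
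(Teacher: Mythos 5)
Your proof is correct and follows essentially the same route as the paper: the paper derives this corollary from Theorem \ref{thm:min_imbalindices} combined with the node-level observations that $\mathcal{A}(v)=0 \Leftrightarrow \mathcal{M}(v)=0$, $\alpha(v)=0 \Leftrightarrow \mu(v)=0$, and that $\mathcal{A}$-/$\mathcal{M}$-minimality of a node implies $\alpha$-/$\mu$-minimality, which is exactly your reduction. Your write-up is in fact more explicit than the paper's (which states these node-level facts without detailing the case analysis on the branches of Definitions \ref{def:3D_mCA} and \ref{def:3D_exrelCD}), but the underlying argument is identical.
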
 

Next, we explore the minimal trees, i.e., the trees that minimize the 3D imbalance indices, and investigate how to construct them.
Given any rooted 3D tree $\mathsf{T}=(T,w)$, a hypothetical minimal tree can be easily constructed out of it: To turn it into a $\widetilde{\mathcal{A}}$- or $\widetilde{\mathcal{M}}$-minimal tree, simply move all nodes on one line, beginning with the root and then placing every descendant \enquote{directly behind} its parent. By doing this, the edges overlap or run inside each other. Thus, this is only a theoretical example of a minimal tree. For $\widetilde{\alpha}$ or $\widetilde{\mu}$ it would even be easier as we could simply place all nodes anywhere on such a line (as long as a node never has the same coordinates as its parent). 

A less drastic and less theoretical approach to turning a rooted 3D tree into a minimal tree for all four imbalance approaches is depicted in Figure \ref{fig:create_bal}. Here, the given rooted 3D tree is kept intact as much as possible: its \emph{non-3D topology} $\widehat{T}$ (referring to the graph-theoretical tree resulting from ignoring the vertex coordinates), the root position as well as all edge lengths are preserved and also the angles between the outgoing edges of each node remain the same. This is achieved by going through the tree from the bottom, i.e., the leaves of highest depth, depth-wise upwards until reaching the children of the root, and for each of these nodes $v$ its complete pending subtree $\mathsf{T}_v$ is rotated into the desired minimal position. Note that the minimal trees obtained by different implementations of these transformations need not coincide (except for some rooted 3D trees like path graphs or star trees), because each subtree $\mathsf{T}_v$ could be rotated using $(p(v),v)$ as a rotation axis without affecting the node imbalance of $v$, resulting in infinitely many distinct minimal rooted 3D trees.\footnote{Additionally, for the \enquote{sideways} swing approaches there are two possibilities to minimize a node imbalance value (rotation of $\mathsf{T}_v$ \enquote{directly in front} of or \enquote{directly behind} $v$).} Practically, this can be used to visualize imbalance as in Figure \ref{fig:create_bal}, where a completely internally balanced version is depicted next to each plant. Therefore, an implementation of this algorithm is also included in our software package \textsf{treeDbalance} (see Appendix \ref{sec:software}).

\begin{figure}[ht]
	\centering
	\begin{tikzpicture}[scale=0.5, > = stealth]
	\centering
    \draw[help lines, color=gray!60, dashed] (-1.5,-1.9) grid (7.3,7.6);
    \draw[->,ultra thick] (-1.6,0)--(7.4,0) node[right]{$x_1$};
    \draw[->,ultra thick] (0,-1.5)--(0,7.7) node[above]{$x_2$};
	\tikzset{std/.style = {shape=circle, draw, fill=white, minimum size = 0.9cm, scale=0.7}}
	\node[align=left] at (0.6,3.4) {$\mathsf{T}$};
	\node[std] (1) at (0,5) {1};
	\node[std] (2) at (4,5) {2};
	\node[std] (3) at (6,5) {3};
	\node[std] (4) at (2,3) {6};
	\node[std] (5) at (5,1) {5};
	\node[std] (6) at (5,-1) {$\rho=4$};
	
	\path[-, line width=0.6mm, above] (1) edge node {} (4);
	\path[-, line width=0.6mm, above] (2) edge node {} (4);
	\path[-, line width=0.6mm, above] (3) edge node {} (5);
	\path[-, line width=0.6mm, above] (4) edge node {} (5);
	\path[-, line width=0.6mm, above] (5) edge node {} (6);
    \end{tikzpicture}
	\begin{tikzpicture}[scale=0.5, > = stealth]
	\centering
    \draw[help lines, color=gray!60, dashed] (-1.5,-1.9) grid (7.3,7.6);
    \draw[->,ultra thick] (-1.6,0)--(7.4,0) node[right]{$x_1$};
    \draw[->,ultra thick] (0,-1.5)--(0,7.7) node[above]{$x_2$};
	\tikzset{std/.style = {shape=circle, draw, fill=white, minimum size = 0.9cm, scale=0.7}}
	\node[align=left] at (1.2,1.6) {$\mathsf{T}'$};
	\node[std] (1) at (-0.9,2.4) {1};
	\node[std] (2) at (1.4,5.9) {2};
	\node[std] (3) at (6,5) {3};
	\node[std] (4) at (2,3) {6};
	\node[std] (5) at (5,1) {5};
	\node[std] (6) at (5,-1) {$\rho=4$};
	
	\path[-, line width=0.6mm, above] (1) edge node {} (4);
	\path[-, line width=0.6mm, above] (2) edge node {} (4);
	\path[-, line width=0.6mm, above] (3) edge node {} (5);
	\path[-, line width=0.6mm, above] (4) edge node {} (5);
	\path[-, line width=0.6mm, above] (5) edge node {} (6);
    \end{tikzpicture}
	\begin{tikzpicture}[scale=0.5, > = stealth]
	\centering
    \draw[help lines, color=gray!60, dashed] (-0.5,-1.9) grid (9.5,7.6);
    \draw[->,ultra thick] (-0.6,0)--(9.6,0) node[right]{$x_1$};
    \draw[->,ultra thick] (0,-1.5)--(0,7.7) node[above]{$x_2$};
	\tikzset{std/.style = {shape=circle, draw, fill=white, minimum size = 0.9cm, scale=0.7}}
	\node[align=left] at (1.2,3.4) {$\mathsf{T}_{min}$};
	\node[std] (1) at (1.75,6) {1};
	\node[std] (2) at (5.8,6.9) {2};
	\node[std] (3) at (8.8,3) {3};
	\node[std] (4) at (4.3,4.6) {6};
	\node[std] (5) at (5,1) {5};
	\node[std] (6) at (5,-1) {$\rho=4$};
	
	\path[-, line width=0.6mm, above] (1) edge node {} (4);
	\path[-, line width=0.6mm, above] (2) edge node {} (4);
	\path[-, line width=0.6mm, above] (3) edge node {} (5);
	\path[-, line width=0.6mm, above] (4) edge node {} (5);
	\path[-, line width=0.6mm, above] (5) edge node {} (6);
    \end{tikzpicture}
    \includegraphics[width=0.9\textwidth]{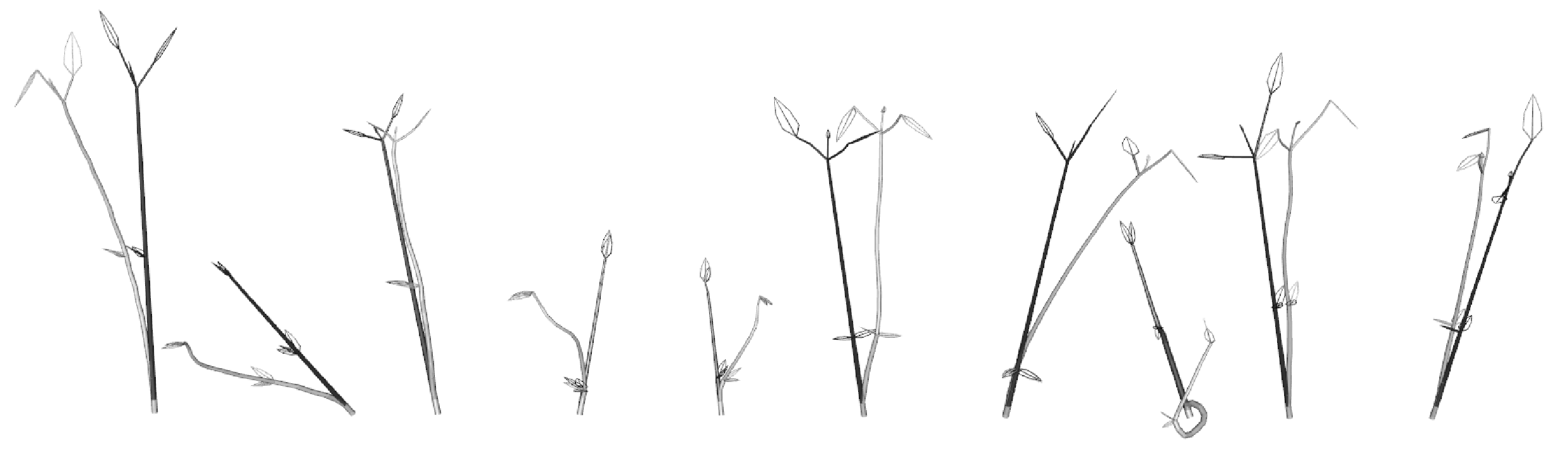}
	\caption{Top: The original rooted 2D tree on the left is step-wise converted into a minimal tree (for all eight 3D imbalance indices). First, the pending subtree $\mathsf{T}_6$ is rotated until it is in line with its incoming edge, and then the same is done with the newly formed pending subtree $\mathsf{T}'_5$. Bottom: Depiction of the beans with IDs 10, 11, 19, 28, 37, 46, 54, 59, 62, and 63 (light gray), each with one of their internally balanced versions (dark gray) obtained by the algorithm implemented in \textsf{treeDbalance}.}
	 \label{fig:create_bal}
\end{figure}

\paragraph{Maximal value and trees}

Characterizing maximal trees turns out to be more involved since a tree is not automatically maximal if its node imbalance values are maximal. This is because maximizing a node imbalance value $v$ can decrease the edge imbalance value of $e_v$ (consult Figure \ref{fig:maxAngles} in Appendix \ref{sec:appendix_maxim} to see that a maximal node imbalance of $A(v)=\pi$ or $A(v)=\alpha(v)=\pi/2$ is not always maximizing the edge imbalance $\mathcal{A}_{e_v}$ or $\alpha_{e_v}$, respectively). Nevertheless, we can gain some insights into the maximal value and the respective maximal trees. Firstly, note that none of the eight 3D imbalance indices can exactly reach the maximal node imbalance value of $\pi$ for $\mathcal{A}$, $\frac{\pi}{2}$ for $\alpha$, 2 for $\mathcal{M}$, and  1 for $\mu$ for any rooted 3D tree, simply because every (rooted) tree has a leaf attached to a pending edge (if the leaf is not the only node), and this pending edge with its non-zero length and weight will always contribute an edge imbalance value of zero to any of the eight indices. Nonetheless, the 3D imbalance values can converge to the mentioned values, i.e., to the respective suprema $\pi$ for $\mathcal{A}$, $\frac{\pi}{2}$ for $\alpha$, 2 for $\mathcal{M}$, and  1 for $\mu$, and we have investigated under which conditions the 3D imbalance values converge to their respective suprema. However, since these investigations are quite technical, they can be found in Appendix \ref{sec:appendix_maxim}.

An important last point to make is that any rooted 3D tree $\mathsf{T}=(T,w)$ with at least $\vert \mathring{V}\vert  \geq 2$ nodes that is not a star tree can be severely or barely imbalanced regarding  $\mathcal{A}$, $\alpha$, $\mathcal{M}$, and $\mu$ completely independently of its underlying non-3D topology $\widehat{T}$ ($T$ without the node coordinates). This stands in stark contrast to all other known balance indices \cite{fischer_tree_2023}, which measure imbalance purely based on the non-3D topology. 

\section{Application and further measurements} \label{sec:application}
Here we would like to provide some guidance on measuring 3D imbalance in general, as well as examples of the application and usage of the newly introduced 3D imbalance indices on actual data in particular.

\subsection{Guiding questions: What must be considered before measuring 3D imbalance?} \label{sec:guiding_questions}

When wanting to assess 3D imbalance, one is faced with several questions and decisions. One of the first is whether we are interested in a) only internal imbalance, or b) only external imbalance, or c) want to take both into account. Recall that internal imbalance is the imbalance within the 3D shape and external imbalance is the imbalance with respect to its environment. In case of a), we can choose one of the eight new 3D imbalance indices (see questions below), and in case of b), we can select one of the four node imbalance statistics $\mathcal{A}$, $\alpha$, $\mathcal{M}$, or $\mu$ (see first and third question below) and compute the root imbalance value with respect to the vertical axis (see Remark \ref{rem:root_edge}). In case of c), we can either use both imbalance values from a) and c) separately, or use a combined index value (see Remark \ref{rem:root_edge}, but be aware of a loss of information).\\
When having to choose one of the eight 3D imbalance indices for an application, our choice can be narrowed down with the following three questions: 
\begin{enumerate} \label{questions}
    \item Do we want a \enquote{full swing} or \enquote{sideways swing} approach? Or in other words: Do we consider an angle $\mathcal{A}=$120\degree{} as more imbalanced than 90\degree? If yes, choose one of $\widetilde{\mathcal{A}}^w$, $\widetilde{\mathcal{A}}^\ell$, $\widetilde{\mathcal{M}}^w$, $\widetilde{\mathcal{M}}^\ell$ and if no, choose one of  $\widetilde{\alpha}^w$, $\widetilde{\alpha}^\ell$, $\widetilde{\mu}^w$, $\widetilde{\mu}^\ell$ instead.
    \item Should every degree matter equally or do we, for instance, want to \enquote{punish} even slight deviations from complete straight growth considerably harder? In the former case chose one of $\widetilde{\mathcal{A}}^w$, $\widetilde{\mathcal{A}}^\ell$, $\widetilde{\alpha}^w$, $\widetilde{\alpha}^\ell$ and in the latter case one of $\widetilde{\mathcal{M}}^w$, $\widetilde{\mathcal{M}}^\ell$, $\widetilde{\mu}^w$, $\widetilde{\mu}^\ell$. \\
    This question may often not be answered clearly. In such cases, it is best to test several indices to see which approach works best for the data set under investigation.
    \item Do we want to measure balance with respect to weight/volume? Then choose one of $\widetilde{\mathcal{A}}^w$, $\widetilde{\alpha}^w$, $\widetilde{\mathcal{M}}^w$, $\widetilde{\mu}^w$. If we want to measure with respect to length, pick one of $\widetilde{\mathcal{A}}^\ell$, $\widetilde{\alpha}^\ell$, $\widetilde{\mathcal{M}}^\ell$, $\widetilde{\mu}^\ell$. \\ 
    This question could be important if there are plant parts that outweigh other parts significantly and would obscure the result. If you, for example, want to compare several trees in which the main trunk is very straight and you want more emphasis on small branching sections, it is better to choose the weights $\ell$ instead of $w$ because otherwise, the trunk's influence might be too high. This can also be observed for the beans as the correlation of two $\ell$-weighted measurements is generally slightly smaller than that of its $w$-weighted counterparts (see Figure \ref{fig:comparisonIndices}).
\end{enumerate}


Figure \ref{fig:questions} leads through the process of choosing a suitable 3D imbalance index based on the above questions and explanatory pictograms.

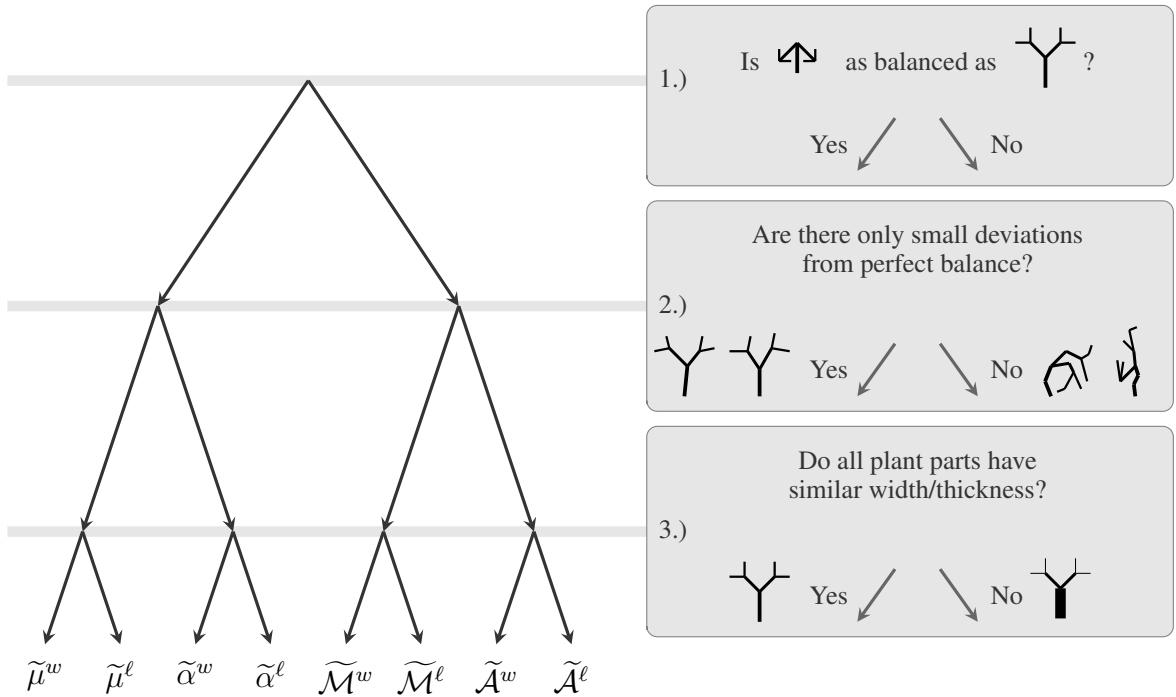
\begin{figure}[htbp]
	\centering
    \begin{tikzpicture}[scale=1, > = stealth]
    
    \draw[-, draw=black!10, line width=1.4mm] (-8.5,7.5) -- (0,7.5);
    \draw[-, draw=black!10, line width=1.4mm] (-8.5,4.5) -- (0,4.5);
    \draw[-, draw=black!10, line width=1.4mm] (-8.5,1.5) -- (0,1.5);
    
    \draw[rounded corners, draw=black!50, fill=black!10]  (0,6.1) |- (7,8.5) |- cycle;
    \node[align=left] at (0.35,7.5) {\textcolor{black!80}{1.)}};
    \node[align=center, anchor=north] at (3.6,8) {\textcolor{black!80}{Is \quad \quad \quad as balanced as \quad \quad \quad ?}};
    \draw[->, draw=black!60, very thick] (3.3,7) -- (2.8,6.3);
    \draw[->, draw=black!60, very thick] (3.9,7) -- (4.4,6.3);
    \node[align=left, anchor=north east] at (2.8,6.9) {\textcolor{black!80}{Yes}};
    \node[align=left, anchor=north west] at (4.45,6.9) {\textcolor{black!80}{No}};
    
    \draw[rounded corners, draw=black!50, fill=black!10]  (0,3.1) |- (7,5.9) |- cycle;
    \node[align=left] at (0.35,4.5) {\textcolor{black!80}{2.)}};
    \node[align=center, anchor=north] at (3.6,5.7) {\textcolor{black!80}{Are there only small deviations} \\ \textcolor{black!80}{from perfect balance?}};
    \draw[->, draw=black!60, very thick] (3.3,4) -- (2.8,3.3);
    \draw[->, draw=black!60, very thick] (3.9,4) -- (4.4,3.3);
    \node[align=left, anchor=north east] at (2.8,3.9) {\textcolor{black!80}{Yes}};
    \node[align=left, anchor=north west] at (4.45,3.9) {\textcolor{black!80}{No}};
    
    \draw[rounded corners, draw=black!50, fill=black!10]  (0,0.1) |- (7,2.9) |- cycle;
    \node[align=left] at (0.35,1.5) {\textcolor{black!80}{3.)}};
    \node[align=center, anchor=north] at (3.6,2.7) {\textcolor{black!80}{Do all plant parts have} \\ \textcolor{black!80}{similar width/thickness?}};
    \draw[->, draw=black!60, very thick] (3.3,1) -- (2.8,0.3);
    \draw[->, draw=black!60, very thick] (3.9,1) -- (4.4,0.3);
    \node[align=left, anchor=north east] at (2.8,0.9) {\textcolor{black!80}{Yes}};
    \node[align=left, anchor=north west] at (4.45,0.9) {\textcolor{black!80}{No}};
    
    \node at (2,7.8) {\begin{tikzpicture}[scale=0.2]
    \draw[-, line width=0.5mm] (0,0) -- (0,2);
    \draw[-, line width=0.4mm] (0,2) -- (-1.2,0.8);
    \draw[-, line width=0.4mm] (0,2) -- (1.2,0.8);
    \draw[-, line width=0.3mm] (-1.2,0.8) -- (-0.4,0.8);
    \draw[-, line width=0.3mm] (-1.2,0.8) -- (-1.2,1.6);
    \draw[-, line width=0.3mm] (1.2,0.8) -- (1.2,1.6);
    \draw[-, line width=0.3mm] (1.2,0.8) -- (0.4,0.8);
    	\end{tikzpicture}};
    \node at (5.3,7.8) {\begin{tikzpicture}[scale=0.2]  
    \draw[-, line width=0.5mm] (0,0) -- (0,2);
    \draw[-, line width=0.4mm] (0,2) -- (-1,3);
    \draw[-, line width=0.4mm] (0,2) -- (1,3);
    \draw[-, line width=0.3mm] (-1,3) -- (-2,3);
    \draw[-, line width=0.3mm] (-1,3) -- (-1,4);
    \draw[-, line width=0.3mm] (1,3) -- (1,4);
    \draw[-, line width=0.3mm] (1,3) -- (2,3);
    	\end{tikzpicture}};
    \node[anchor=south] at (0.5,3.15) {\begin{tikzpicture}[scale=0.2]
    \draw[-, line width=0.5mm] (0,0) -- (0.2,2);
    \draw[-, line width=0.4mm] (0.2,2) -- (-1,3.2);
    \draw[-, line width=0.4mm] (0.2,2) -- (1,3);
    \draw[-, line width=0.3mm] (-1,3.2) -- (-2,3);
    \draw[-, line width=0.3mm] (-1,3.2) -- (-1.2,4);
    \draw[-, line width=0.3mm] (1,3) -- (1.2,4);
    \draw[-, line width=0.3mm] (1,3) -- (2,3.2);
    	\end{tikzpicture}};
    \node[anchor=south] at (1.5,3.15) {\begin{tikzpicture}[scale=0.2]
    \draw[-, line width=0.5mm] (0,0) -- (0,1.8);
    \draw[-, line width=0.4mm] (0,1.8) -- (-0.8,3);
    \draw[-, line width=0.4mm] (0,1.8) -- (0.8,3.2);
    \draw[-, line width=0.3mm] (-0.8,3) -- (-2,3);
    \draw[-, line width=0.3mm] (-0.8,3) -- (-0.6,4);
    \draw[-, line width=0.3mm] (0.8,3.2) -- (1,4.2);
    \draw[-, line width=0.3mm] (0.8,3.2) -- (2,3);
    	\end{tikzpicture}};
    \node[anchor=south] at (5.6,3.15) {\begin{tikzpicture}[scale=0.2]
    \draw[-, line width=0.5mm] (0,0) -- (-0.4,1);
    \draw[-, line width=0.5mm] (-0.4,1) -- (0,2);
    \draw[-, line width=0.4mm] (0,2) -- (1,3);
    \draw[-, line width=0.4mm] (1,3) -- (2,2.5);
    \draw[-, line width=0.4mm] (0,2) -- (1.2,2.2);
    \draw[-, line width=0.4mm] (1.2,2.2) -- (1.5,1.6);
    \draw[-, line width=0.3mm] (2,2.5) -- (2.5,2.8);
    \draw[-, line width=0.3mm] (2.5,2.8) -- (2.7,3.4);
    \draw[-, line width=0.3mm] (2,2.5) -- (2.4,1);
    \draw[-, line width=0.3mm] (1.5,1.6) -- (0.9,0.5);
    \draw[-, line width=0.3mm] (0.9,0.5) -- (0.4,0.3);
    \draw[-, line width=0.3mm] (1.5,1.6) -- (2,0.5);
    	\end{tikzpicture}};
    \node[anchor=south] at (6.4,3.15) {\begin{tikzpicture}[scale=0.2]
    \draw[-, line width=0.5mm] (0,0) -- (-0.1,0.8);
    \draw[-, line width=0.5mm] (-0.1,0.8) -- (0.2,1.4);
    \draw[-, line width=0.5mm] (0.2,1.4) -- (0,2);
    \draw[-, line width=0.4mm] (0,2) -- (-1,1);
    \draw[-, line width=0.4mm] (0,2) -- (-0.1,3);
    \draw[-, line width=0.3mm] (-1,1) -- (-1.2,2.4);
    \draw[-, line width=0.3mm] (-1,1) -- (-0.8,2.3);
    \draw[-, line width=0.3mm] (-0.1,3) -- (-0.4,4.4);
    \draw[-, line width=0.3mm] (-0.4,4.4) -- (0.1,4.6);
    \draw[-, line width=0.3mm] (-0.1,3) -- (-0.8,3.7);
    	\end{tikzpicture}};
    \node[anchor=south] at (1.5,0.15) {\begin{tikzpicture}[scale=0.2]
    \draw[-, line width=0.5mm] (0,0) -- (0,2);
    \draw[-, line width=0.4mm] (0,2) -- (-1,3);
    \draw[-, line width=0.4mm] (0,2) -- (1,3);
    \draw[-, line width=0.3mm] (-1,3) -- (-2,3);
    \draw[-, line width=0.3mm] (-1,3) -- (-1,4);
    \draw[-, line width=0.3mm] (1,3) -- (1,4);
    \draw[-, line width=0.3mm] (1,3) -- (2,3);
    	\end{tikzpicture}};
    \node[anchor=south] at (5.5,0.15) {\begin{tikzpicture}[scale=0.2]
    \draw[-, line width=1.4mm] (0,0) -- (0,2);
    \draw[-, line width=0.4mm] (0,2) -- (-1,3);
    \draw[-, line width=0.4mm] (0,2) -- (1,3);
    \draw[-, line width=0.1mm] (-1,3) -- (-2,3);
    \draw[-, line width=0.1mm] (-1,3) -- (-1,4);
    \draw[-, line width=0.1mm] (1,3) -- (1,4);
    \draw[-, line width=0.1mm] (1,3) -- (2,3);
    	\end{tikzpicture}};

    \draw[->, draw=black!80, very thick] (-4.5,7.5) -- (-2.5,4.5);
    \draw[->, draw=black!80, very thick] (-4.5,7.5) -- (-6.5,4.5);
    \draw[->, draw=black!80, very thick] (-6.5,4.5) -- (-7.5,1.5);
    \draw[->, draw=black!80, very thick] (-6.5,4.5) -- (-5.5,1.5); 
    \draw[->, draw=black!80, very thick] (-2.5,4.5) -- (-3.5,1.5);
    \draw[->, draw=black!80, very thick] (-2.5,4.5) -- (-1.5,1.5);
    \draw[->, draw=black!80, very thick] (-7.5,1.5) -- (-8,0);
    \draw[->, draw=black!80, very thick] (-7.5,1.5) -- (-7,0);
    \draw[->, draw=black!80, very thick] (-5.5,1.5) -- (-6,0);
    \draw[->, draw=black!80, very thick] (-5.5,1.5) -- (-5,0);
	\node[align=center, anchor=north] (mu) at (-8,-0.1) {\large$\widetilde{\mu}^w$};
	\node[align=center, anchor=north] (mu) at (-7,-0.1) {\large$\widetilde{\mu}^\ell$};
	\node[align=center, anchor=north] (alpha) at (-6,-0.1) {\large$\widetilde{\alpha}^w$};
	\node[align=center, anchor=north] (alpha) at (-5,-0.1) {\large$\widetilde{\alpha}^\ell$};
    \draw[->, draw=black!80, very thick] (-3.5,1.5) -- (-4,0);
    \draw[->, draw=black!80, very thick] (-3.5,1.5) -- (-3,0);
    \draw[->, draw=black!80, very thick] (-1.5,1.5) -- (-2,0);
    \draw[->, draw=black!80, very thick] (-1.5,1.5) -- (-1,0);
	\node[align=center, anchor=north] (M) at (-4,-0.1) {\large$\widetilde{\mathcal{M}}^w$};
	\node[align=center, anchor=north] (M) at (-3,-0.1) {\large$\widetilde{\mathcal{M}}^\ell$};
	\node[align=center, anchor=north] (A) at (-2,-0.1) {\large$\widetilde{\mathcal{A}}^w$};
	\node[align=center, anchor=north] (A) at (-1,-0.1) {\large$\widetilde{\mathcal{A}}^\ell$};
    \end{tikzpicture}
\caption{Guide for deciding on a 3D imbalance index.}
\label{fig:questions}
\end{figure}

\subsection{Discriminate different imbalance patterns with the use of 3D imbalance indices} 
\label{sec:app_beans}

When trying to classify the 3D imbalance (in this case both the internal imbalance as well as the external imbalance with respect to a vertical axis) of the beans of our data set by eye, the differences in height can easily interfere with our assessment.
The 3D imbalance indices, however, are robust to scaling and as such have the ability to give us a robust and objective picture of the different 3D imbalance patterns within the bean data set without being influenced by the varying plant sizes.

To explore which 3D imbalance patterns can be distinguished and whether both the internal as well as the external imbalance have to be taken into account as shape differentiating factors, we applied hierarchical clustering to the bean data set: We gathered all imbalance indices of the beans as well as their root imbalance values as described in Remark \ref{rem:root_edge}. We used Z-transform to standardize the different (index) values, i.e., they were transformed such that they have mean zero and standard deviation one.

Our first objective was to perform hierarchical clustering on only two different but meaningful values because these can also be inspected and assessed in a scatter plot. Thus, we consulted the guiding questions in Section \ref{sec:guiding_questions} and selected versions of the two minimally related measures $\mathcal{A}$ and $\mu$. With the 3D imbalance index $\widetilde{\mathcal{A}}^\ell$ we wanted to incorporate the \enquote{full swing} approach to evaluate the internal imbalance of the complete tree (we use $\ell$-weighting to diminish the remaining influence of stems being proportionally more voluminous in smaller beans), while $\mu(\rho)$ provides information on the root imbalance with respect to a vertical axis -- the external imbalance -- and punishes even slight deviations from the vertical position much harder. The top half of Figure \ref{fig:dendrogramBeans} depicts the result of the hierarchical clustering based on average Euclidean distance as a dendrogram. To evaluate the clustering, we cut the tree such that we obtain the four major clusters (Cluster 1-4). They can be observed in the bottom half of Figure \ref{fig:dendrogramBeans}, where the beans are depicted in the same order as in the dendrogram above. The scatter plot in Figure \ref{fig:scatter_beans} on the left conveys: Cluster 1 -- consisting only of \enquote{looping} bean 59 -- holds the bean with the most extreme imbalance both with regard to internal and external balance. Cluster 2 contains the beans with high root imbalance, but lower internal 3D imbalance. Both Cluster 3 and Cluster 4 contain beans with low to medium external imbalance. However, whereas Cluster 3 contains the beans with small internal imbalance, Cluster 4 contains the beans with slightly increased internal imbalance.

\begin{figure}[ht]
	\centering
    \begin{tikzpicture}
	\node (myplot) at (0,5.5) {\includegraphics[width=0.98\textwidth]{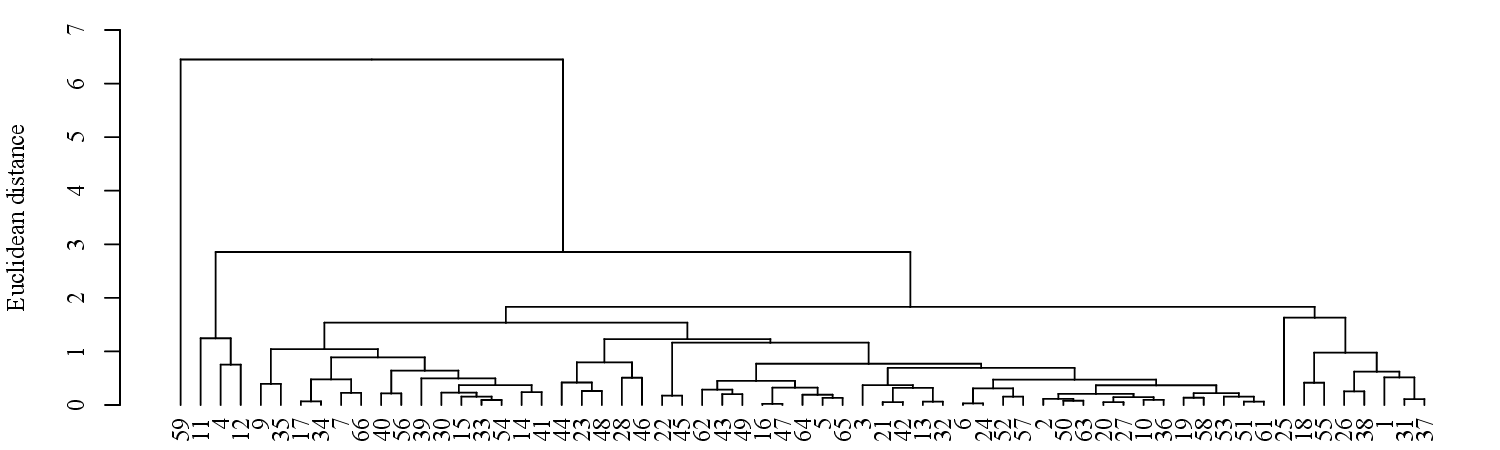}};
	\node (myplot) at (0.5,-0.8) {\includegraphics[width=0.85\textwidth]{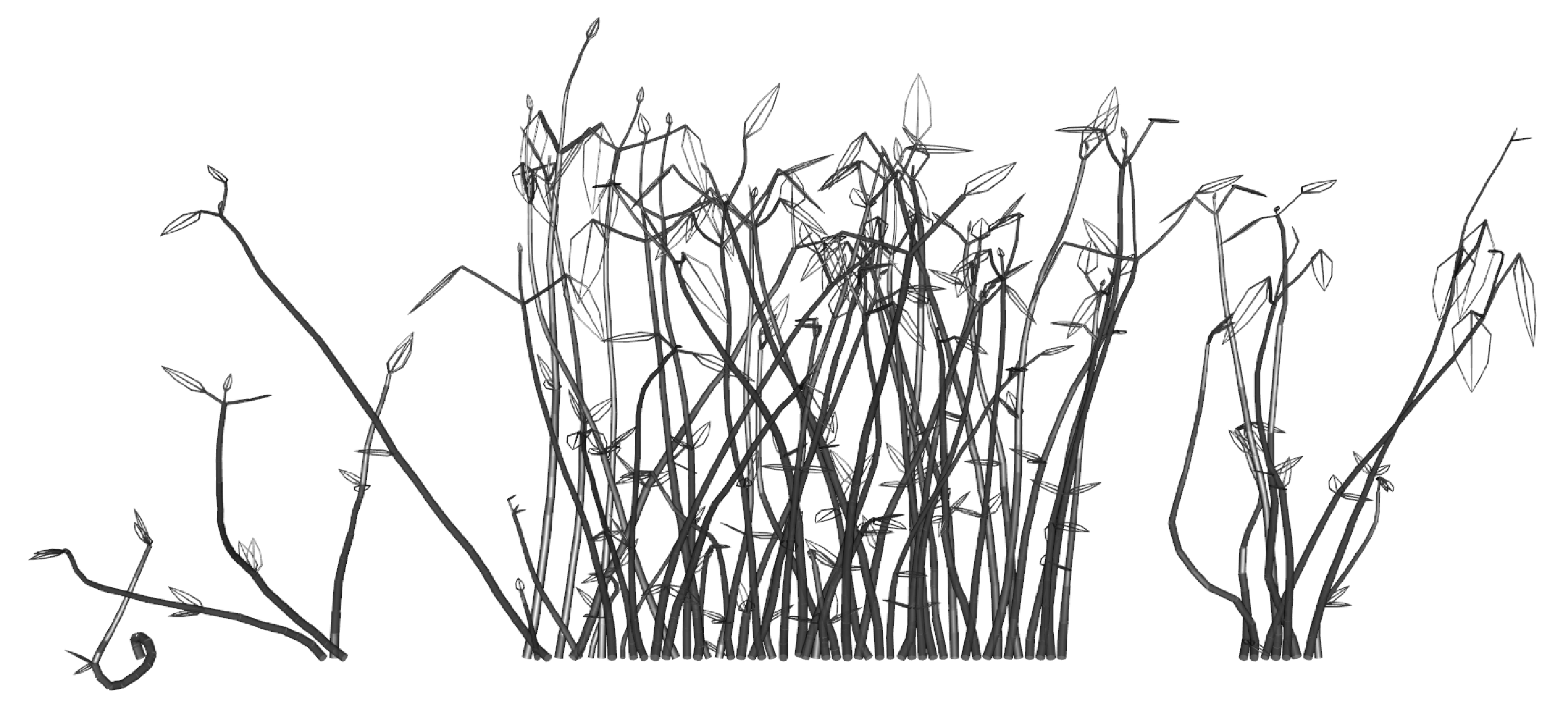}};
    \draw[decoration={brace}, decorate] (-5.4,3.0) node {} -- (-6,3.0);
	\node (c1) at (-5.6,2.7) {Cluster 2};
    \draw[decoration={brace}, decorate] (5.63,3.0) node {} -- (-5.35,3.0);
	\node (c1) at (0.2,2.7) {Cluster 3};
    \draw[decoration={brace}, decorate] (7.4,3.0) node {} -- (5.69,3.0);
	\node (c1) at (6.6,2.7) {Cluster 4};
	\node (c1) at (-5.5,-4) {Cluster 1};
	\node (c1) at (-3.4,-4) {Cluster 2};
	\node (c1) at (0.7,-4) {Cluster 3};
	\node (c1) at (5,-4) {Cluster 4};
	\end{tikzpicture}
	\caption{Dendrogram of the beans resulting from a hierarchical clustering based on the average distance. The beans on the bottom are depicted in the same order as in the dendrogram such that the differences between the clusters can become apparent.}
	\label{fig:dendrogramBeans}
\end{figure}

We experimented with exchanging the two chosen variables and adding further ones to the list. The outcome was relatively similar in most cases as long as the ratio of 3D imbalance indices and root imbalance values remained the same. Omitting either the indices or the root imbalance data worsened the classification drastically. In Figure \ref{fig:scatter_beans} in the middle it can be seen that by only using the two 3D imbalance indices, it is impossible to identify Cluster 2 as being different from Clusters 3 and 4. The usage of pure root imbalance values, i.e., only external imbalance with regard to a vertical axis, makes it hard to form any meaningful clusters. It is also impossible to differentiate between Clusters 3 and 4 (see Figure \ref{fig:scatter_beans} on the right). We can thus conclude that the root imbalance as well as internal 3D imbalance are both important shape differentiating factors for this data set, and that the 3D imbalance indices introduced in this manuscript are able to provide completely new insights.

\begin{figure}[ht]
	\centering
	\begin{tikzpicture}
	\node (myplot) at (0,0) {\includegraphics[width=0.30\textwidth]{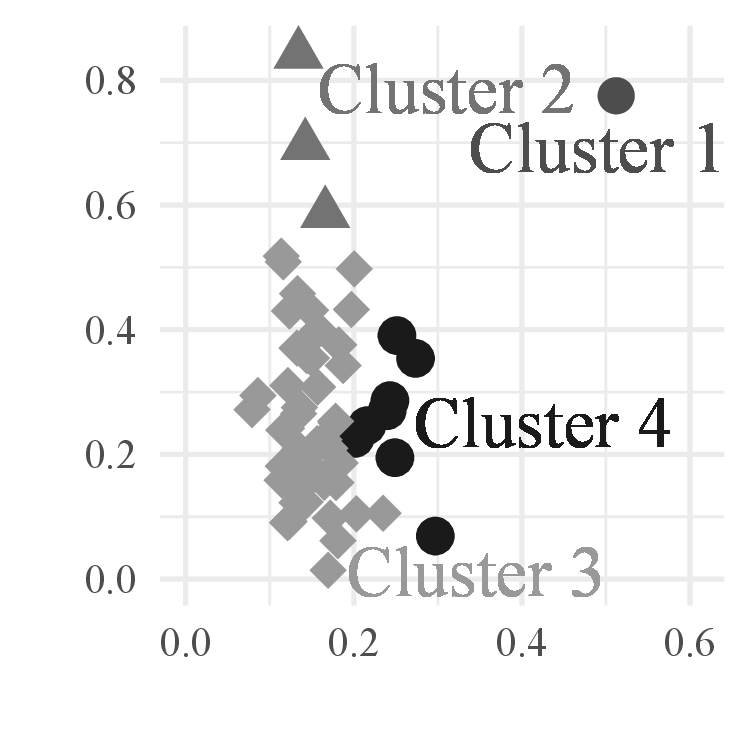}};
	\node (c1) at (0.4,-2.2) {$\widetilde{\mathcal{A}}^\ell$};
	\node (c2) at (-2.35,0) {${\mu}(\rho)$};
	\end{tikzpicture}
	\begin{tikzpicture}
	\node (myplot) at (0,0) {\includegraphics[width=0.30\textwidth]{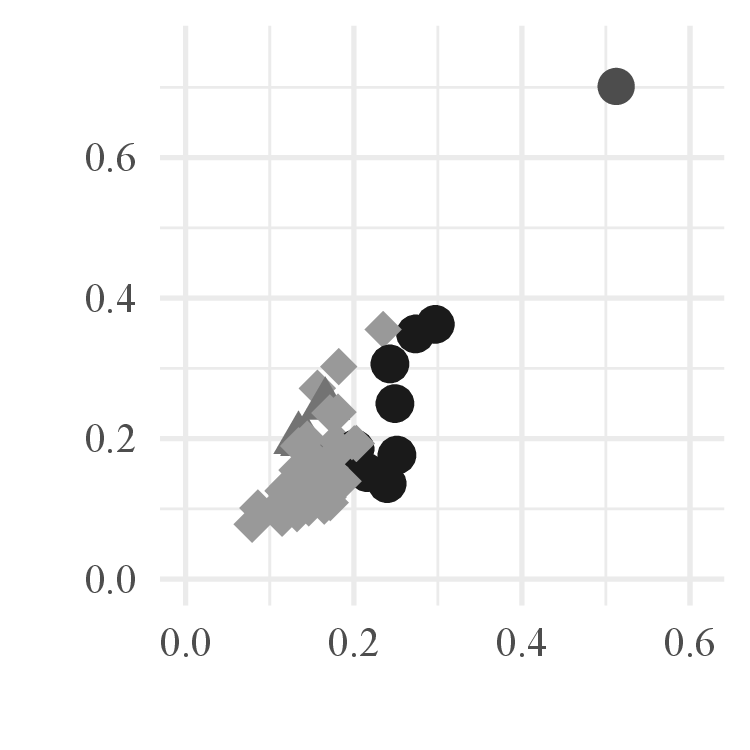}};
	\node (c1) at (0.4,-2.2) {$\widetilde{\mathcal{A}}^\ell$};
	\node (c2) at (-2.35,0) {$\widetilde{\mu}^w$};
	\end{tikzpicture}
	\begin{tikzpicture}
	\node (myplot) at (0,0) {\includegraphics[width=0.30\textwidth]{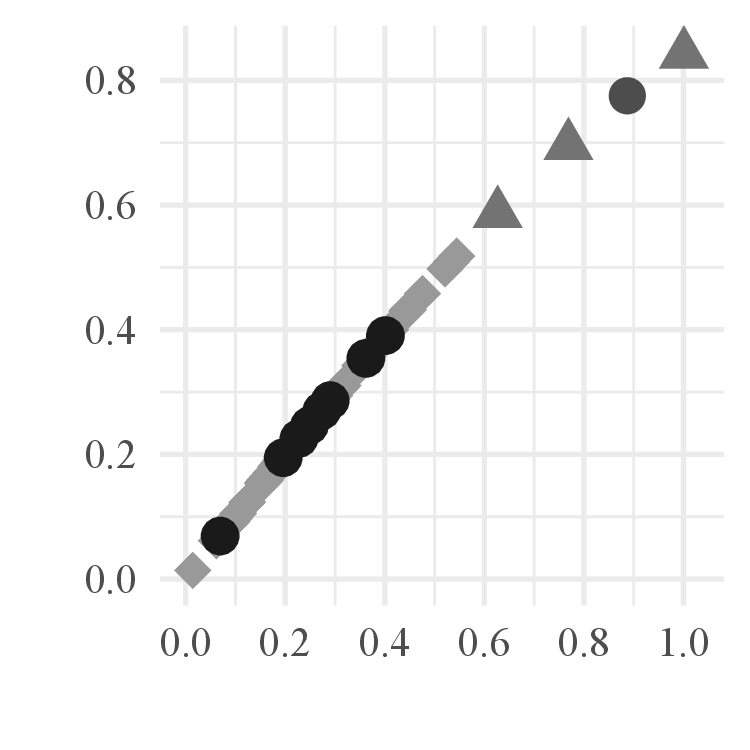}};
	\node (c1) at (0.4,-2.2) {$\mathcal{A}(\rho)$};
	\node (c2) at (-2.35,0) {${\mu}(\rho)$};
	\end{tikzpicture}
	\caption{Scatter plots visualizing the discriminatory power of different choices of 3D imbalance indices and root imbalance values on the basis of the clusters shown in Figure \ref{fig:dendrogramBeans}. Left: Original variables (basis for clustering); middle: only 3D imbalance indices (only internal imbalance); right: only 3D root imbalance value (only external imbalance). The values shown here are the original imbalance values and not Z-transformed.}
	\label{fig:scatter_beans}
\end{figure}

The usage of 3D imbalance indices is not limited to this procedure. In Appendix \ref{sec:further_measures} we elaborate on some further methods and measurements.

\newpage
\section{Conclusion}\label{sec:conclusion}

This manuscript introduced several 3D imbalance statistics that can compare the imbalance either at the node, the edge or the tree level. For the latter, we presented eight specific 3D imbalance indices and provided guiding questions and ideas for their selection and usage. Furthermore, we demonstrated that these indices, which allow us to quantify the internal imbalance, are an indispensable factor for the detection and categorization 3D imbalance patterns.

Similarly to the range of (im)balance indices in phylogenetics, where the imbalance of a reconstructed ancestry tree can give us insights into the evolutionary history of species, it was our aim to provide herewith the foundation for a similar toolbox for ecologists and biomathematicians and -informaticians to study the complex interplay between 3D plant architecture (above-ground as well as root system), plant traits, and environmental factors.

\section{Discussion and outlook}\label{sec:discussion}

The 3D imbalance indices introduced in this manuscript can be applied in various ways: Since there already exist several plant growth algorithms and simulations, they can be used as a null model to analyze if single plants show a non-expected imbalance pattern.
Furthermore, it can be explored which reasons lead to their atypical 3D structure. In the long term, this might aid ecologists in the exploration of how environmental factors influence plant or root growth. Possible applications could be, e.g., the impact on plants and their roots of growing inside their natural habitat versus inside a nursery, which would enable a sort of \enquote{quality control}. Moreover, it might be possible to find links between a plant's health and its 3D architecture, which would correspond to a \enquote{health check}. Another field could be agriculture, where it could be explored which 3D structure benefits the plant in being more resistant to outward stress (wind, aridity,...) or which specific 3D root structure enables the most efficient intake of nutrients or water, e.g., after a rainfall. In summary, the analysis of 3D imbalance allows us to explore a variety of ecological relationships, amongst others the influence of light on above-ground growth, of the distribution of nutrients on root growth, and of space or nutrient competition between different plants. On a broader scale, this also provides an opportunity to study the effects of climate change, as storms and droughts are becoming much more frequent, and to create new and revisit existing 3D plant model data sets to conduct large-scale studies on the complex interplay between 3D plant architecture, plant traits, and environmental factors.

Moreover, the present manuscript motivates  seeing the branching of roots or plants as graph-theoretical trees -- a format which has already been investigated for decades and might provide starting points for new methods in ecology. We are confident that this manuscript thereby serves as a link between two research fields, namely plant architecture in ecology and tree topologies in graph-theory and phylogenetics, that have not been closely connected so far. This yields many new possible starting points for future research for both fields, ranging from new ecology-inspired approaches for measuring tree topology (im)balance to exploring other tree shape statistics for 3D tree models, e.g., the \enquote{spreading} of the branches (angles between outgoing edges of each node).
Maybe our thoughts about imbalance in 3D trees will also inspire actions in the phylogenetic field to adapt the current topology-based imbalance indices in phylogenetics to trees with edge lengths, since most currently established indices ignore these.

While we defined the vertices of rooted 3D trees to be distinct coordinates as is the case for most 3D models based on plants, technically, all formulas and findings of this manuscript also apply to rooted 3D trees in which different vertices that are not in a parent-child relationship are allowed to have the same coordinates. Even the current definition already allows for the modeling of plants in which different branches meet and cross over each other as long as the single branches remain separable. Only if you want to create fitting 3D models of plants in which different branches frequently merge into a \emph{single} branch, it could be more advisable to use networks as the basic graph-theoretical structure. However, specific indices for such networks are beyond the scope of the present manuscript and remain a fascinating challenge for future research.

\section{Acknowledgements}
MF, SK, and LK were supported by the joint research project \textit{\textbf{DIG-IT!}}
funded by the European Social Fund (ESF), reference: ESF/14-BM-A55-
0017/19, and the Ministry of Education, Science and Culture of Mecklenburg-Vorpommerania, Germany.
MF and SK were also supported by the project ArtIGROW, which is a part of the WIR!-Alliance “ArtIFARM – Artificial Intelligence in Farming”, and gratefully acknowledge the Bundesministerium für Bildung und Forschung (Federal Ministry of Education and Research, FKZ: 03WIR4805) for financial support. Moreover, we wish to thank Jürgen Kreyling for allowing us to discuss our plans for the bean planting experiment and Jule Möller for creating 3D models out of the bean photographs. Last but not least, we wish to thank two anonymous reviewers for helpful comments on an earlier version of this manuscript.


\newpage
\bibliographystyle{abbrv}
\bibliography{literature}

\newpage
\appendix
\section{Software: \textsf{R} package \textsf{treeDbalance}} \label{sec:software}

In order to handle rooted 2D and 3D trees and to calculate the 3D imbalance indices presented in this manuscript, we provide the software package \textsf{treeDbalance -- Computation of 3D Tree Imbalance} written in the free and openly available programming language \textsf{R} \cite{RCoreTeam2021}. The software package, together with a detailed manual, has been made publicly available on CRAN (see \url{https://CRAN.R-project.org/package=treeDbalance}). This section explains in which format rooted 2D and 3D trees are coded in this package and gives an overview of the functions provided. The commands in the gray boxes can be used without any other preparations after installing the package.

The package considers trees to be simple connected acyclic graphs. This means, in particular, that there must not be any nodes in which formerly separated parts of the tree merge. Each tree must contain at least two vertices (i.e., at least one edge), and one vertex must be set as the root as we are considering rooted trees and as such directed edges. To make our new format \texttt{phylo3D} for rooted 3D trees as accessible as possible, we ground it in the established \texttt{phylo} format, which was introduced in the \textsf{ape} package and already has existed for over one and a half decades \cite{paradis_definition_2020}. 

Similar to the \texttt{phylo} format, a \texttt{phylo3D} object \texttt{tree} is a list which must comprise (in addition to 5. and 6. below) the following properties:
\begin{enumerate}
	\item A \texttt{class} equal to \texttt{"phylo3D"}.
	\item The attribute \texttt{Nnode}: An integer value denoting the number of inner vertices $m$.
	\item The attribute \texttt{tip.label}: A character vector of length $n$, with the $i$-th entry being the label of the $i$-th leaf. If no labels are assigned, the vector stores simply empty strings, but is still useful as its length yields the number of leaves $n$.
	\item The attribute \texttt{edge}: A numeric matrix with two columns and at least one row (as each tree must have at least one edge). If the $i$-th row is $(a,b)$, the vertices $a$ and $b$ are connected by an edge with $a$ being the direct ancestor of $b$.
\end{enumerate}
Some restrictions of the \texttt{phylo} format are lifted: A tree in \texttt{phylo3D} format is allowed to contain nodes of in- and out-degree 1 (the incident edges are stored in the \texttt{edge} matrix just like all other edges) as well as a root of out-degree 1. Moreover, \texttt{phylo3D} is also less restricted in the enumeration of the nodes. All functions of \textsf{treeDbalance} are designed to work for any node enumeration as long as the nodes are numbered with $1,...,\vert V \vert$. 

A \texttt{phylo3D} object must additionally comprise the following attributes:
\begin{enumerate} \setcounter{enumi}{4}
	\item The attribute \texttt{node.coord}: A numeric matrix with three columns, in which the $i$-th row stores the 3D coordinates of the $i$-th vertex. For 2D trees let one column consist of zeroes.
	\item The attribute \texttt{edge.weight}: A numeric vector with the $i$-th entry being the weight $w$ (e.g., the volume or the literal weight) of the $i$-th edge.
\end{enumerate}
In addition to these mandatory elements, there are a couple of optional elements that can be included, such as:
\begin{enumerate} \setcounter{enumi}{6}
	\item The attribute \texttt{edge.length}: A numeric vector in which the $i$-th entry is the length of the $i$-th edge. Edge lengths should not be negative. Functions provided by \textsf{treeDbalance} use the Euclidean distance.
	\item The attribute \texttt{edge.diam}: A numeric vector in which the $i$-th entry is the diameter of the $i$-th edge (if the edge can be interpreted as a cylinder).
	\item The attribute \texttt{edge.type}: A character vector in which the $i$-th entry is the type of the $i$-th edge. Currently available are the types \texttt{"d"} for the default cylinder representation, \texttt{"b"} for buds and  \texttt{"l"} for leaves (in the visualization the leaf surface is always facing upwards). This is an optional attribute to change the type of each edge when visualizing the 3D tree with \texttt{plotPhylo3D} or \texttt{addPhylo3D}.
	\item The attribute \texttt{edge.color}: A character vector in which the $i$-th entry is the color of the $i$-th edge. This is an optional attribute to change the color of each edge when visualizing the 3D tree with \texttt{plotPhylo3D} or \texttt{addPhylo3D}.
\end{enumerate}

Now, we will see how it takes only a few lines of code to create and plot a simple rooted 3D tree. As an example, we can construct the 3D tree shown in Figure \ref{fig:exampleTreeSoftware} in \texttt{phylo3D} format from scratch with the following \textsf{R} commands:

\begin{tcolorbox}[colback=verylightgray, bottom=0.05pt, top=0.05pt, colframe=verylightgray, frame empty]
\begin{lstlisting}
# Create the various attributes of a phylo3D object:
e_mat   <- cbind(c(4,5,6,5,7,7), c(5,6,3,7,1,2))
coords  <- cbind(c(1,3,5,4,4,5,2), c(0,0,0,0,0,0,0), c(3,3,3,0,1,2,2))
lengths <- sapply(1:nrow(e_mat), function(x) stats::dist(coords[e_mat[x,],]))
diams   <- c(0.6,0.2,0.2,0.4,0.2,0.2)
weights <- (diams/2)^2 * lengths * pi
# Construct the final phylo3D object:
tree    <- list(edge = e_mat, tip.label = c("","",""), Nnode = 4, 
                node.coord = coords, edge.weight = weights, edge.diam = diams)
class(tree) <- "phylo3D"
\end{lstlisting}
\end{tcolorbox}

\begin{figure}[htbp]
	\centering
	\begin{tikzpicture}[scale=1, > = stealth]
	\centering
	\node (myplot) at (9.9,0.4) {\includegraphics[scale=0.14]{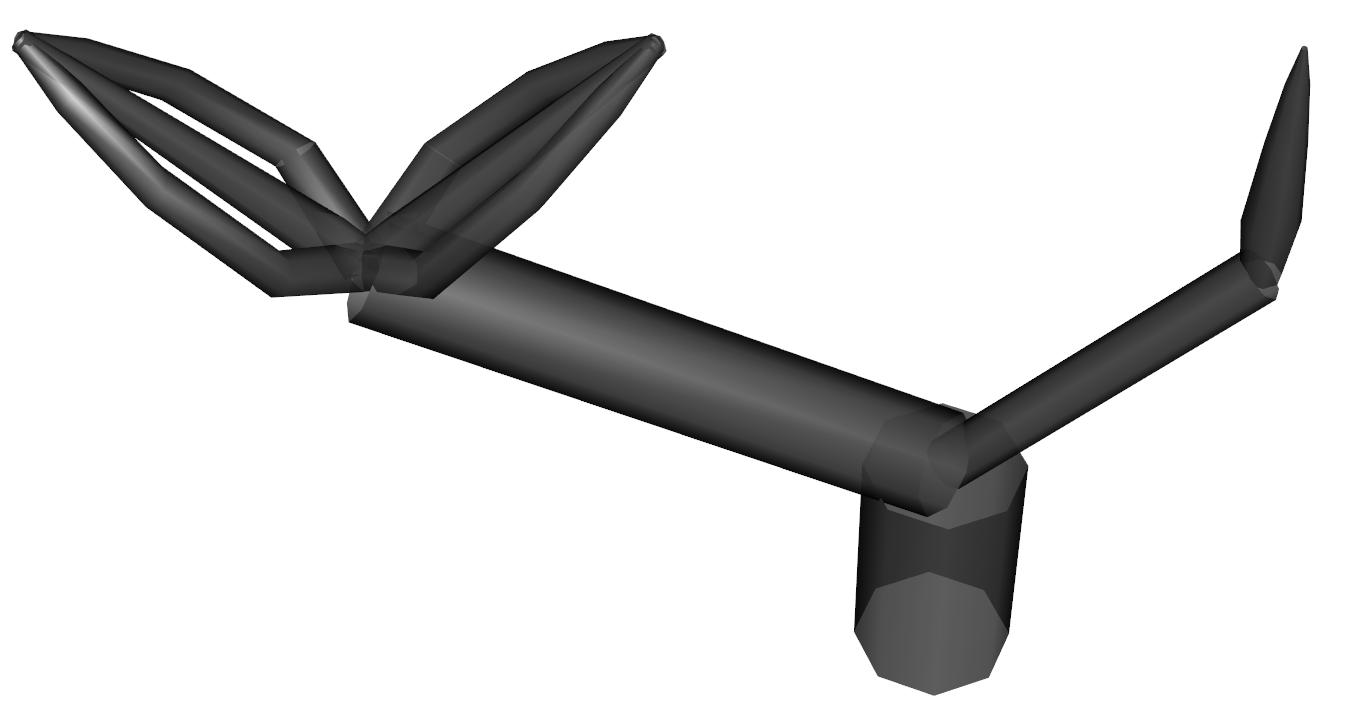}};
    \draw[help lines, color=gray!60, dashed] (-0.5,-0.5) grid (6.9,3.7);
    \draw[->,ultra thick] (-0.5,0)--(7,0) node[right]{$x_1$};
    \draw[->,ultra thick] (0,-0.5)--(0,3.8) node[above]{$x_3$};
    \draw[->,ultra thick] (-0.5,-0.5)--(0.5,0.5) node[above]{\hspace{0.5cm}$x_2$};
	\tikzset{std/.style = {shape=circle, draw, fill=white, minimum size = 0.9cm, scale=0.85}}
	\tikzset{dot/.style = {shape=circle, draw, fill=black, minimum size = 0.2cm, scale=0.5}}
	\node[dot] (1) at (1,3) {1};
	\node[dot] (2) at (3,3) {2};
	\node[dot] (3) at (5,3) {3};
	\node[dot] (4) at (5,2) {6};
	\node[dot] (5) at (2,2) {7};
	\node[dot] (6) at (4,1) {5};
	\node[dot] (7) at (4,0) {4};
	
	\path[-, line width=6mm, right] (7) edge node {$0.09\pi$} (6);
	\path[-, line width=4mm, below] (6) edge node {\hspace{-0.9cm}$0.04\sqrt{5}\pi$} (5);
	\path[-, line width=2mm, below] (6) edge node {\hspace{1.3cm}$0.01\sqrt{2}\pi$} (4);
	\path[-, line width=2mm, right] (4) edge node {$0.01\pi$} (3);
	\path[-, line width=2mm, below] (5) edge node {\hspace{1.4cm}$0.01\sqrt{2}\pi$} (2);
	\path[-, line width=2mm, below] (5) edge node {\hspace{-1.3cm}$0.01\sqrt{2}\pi$} (1);
	
	\node[std] (1) at (1,3) {1};
	\node[std] (2) at (3,3) {2};
	\node[std] (3) at (5,3) {3};
	\node[std] (4) at (5,2) {6};
	\node[std] (5) at (2,2) {7};
	\node[std] (6) at (4,1) {5};
	\node[std] (7) at (4,0) {4};
	\node[align=left] at (10.5,2.8) {\texttt{edge} $=\begin{pmatrix} 4 & 5 \\ 5 & 6 \\ 6 & 3 \\ 5 & 7 \\ 7 & 1 \\ 7 & 2 \end{pmatrix}$ \quad \texttt{node.coord} $=\begin{pmatrix} 1 & 0 & 3\\ 3 & 0 & 3 \\ 5 & 0 & 3 \\ 4 & 0 & 0 \\ 4 & 0 & 1 \\ 5 & 0 & 2 \\ 2 & 0 & 2 \end{pmatrix}$};
    \end{tikzpicture}
	\caption{Example 3D tree (edges marked with their weights) as well as the corresponding edge matrix and node coordinate matrix as required by the \texttt{phylo3D} format. The radius of the edges is 0.1, 0.2, or 0.3 as indicated by the line width. The result of plotting this tree with \texttt{addPhylo3D} and certain edge types from the bird's eye view is shown at the bottom right.}
	\label{fig:exampleTreeSoftware}
\end{figure}

Next, we will have a closer look at how to plot this tree and also include information on the edge types. For example, we could assume that the tree defined above represents a twig of a plant: the two edges $(7,1)$ and $(7,2)$ may represent leaves, $(6,3)$ a bud and the rest of the edges are simply branches which can be approximated with simple cylinders. First, we add the attribute \texttt{edge.type} to the tree and then we can use \texttt{plotPhylo3D} to plot the tree. The resulting plot should look similar to the one depicted in Figure \ref{fig:exampleTreeSoftware} at the bottom right. Note that \texttt{edge.type} only changes the appearance of the edges and not any calculations regarding their weights, centroids, and imbalance. Similarly, we could set \texttt{edge.color}.

\begin{tcolorbox}[colback=verylightgray, bottom=0.05pt, top=0.05pt, colframe=verylightgray, frame empty]
\begin{lstlisting}
tree$edge.type <- c("d","d","b","d","l","l")
treeDbalance::plotPhylo3D(tree)
\end{lstlisting}
\end{tcolorbox}

Along with the functions, the package also provides a data set of ready-to-use 3D models of different plants and roots in \texttt{phylo3D} format. For instance, if we want to use the 3D model of the \enquote{looping} bean with ID 59 or the bean with ID 3, we can access it using the following commands. Note that beans exist for all IDs ranging from 1 to 66 except for 8, 29, and 60 as these beans did not germinate. Moreover, we can produce images similar to the one shown in Figure \ref{fig:visualizeImbalance} with the functions \texttt{add}- or \texttt{plotImbalPhylo3D} and we can use \texttt{makePhylo3DBalanced} to produce results similar to Figure \ref{fig:create_bal}. 

\begin{tcolorbox}[colback=verylightgray, bottom=0.05pt, top=0.05pt, colframe=verylightgray, frame empty]
\begin{lstlisting}
tree2 <- treeDbalance::example3Dtrees$bean59
tree3 <- treeDbalance::example3Dtrees$bean03
treeDbalance::plotImbalPhylo3D(tree2, imbal_type= "mu",
                               max.seclen = 0.2, color.imbal = TRUE)
treeDbalance::addImbalPhylo3D(tree2, offset = c(5,0,0), imbal_type = "M",
                              max.seclen = 0.2, color.imbal = TRUE)
treeDbalance::addPhylo3D(treeDbalance::makePhylo3DBalanced(tree2), 
                        offset = c(10,0,0))
\end{lstlisting}
\end{tcolorbox}

Before we show the usage of the most important functions, we briefly go into the basic idea that stands behind the main structure of the functions of \textsf{treeDbalance}: To compute nearly any tree (im)balance index or any other node values in linear time, it is of necessity to access the ancestor and descendants of a single node in constant time. Furthermore, for efficient recursive calculations, it is often required to address the nodes in a sensible order to ensure that all values have to be calculated at most once. For this the package encompasses the functions \texttt{getDescs(tree)} and \texttt{getChildren(tree,node)}, \texttt{getAncs(tree)} as well as \texttt{getNodeDephts(tree)}. The latter also provides a node sequence in which the nodes are ordered by their depth and which is -- in combination with its reverse sequence -- sufficient to travel efficiently bottom-up or top-down through the tree for most calculations. To avoid having to compute the ancestors and descendants every time a function is called, \textsf{treeDbalance} contains the function \texttt{extendPhylo(tree)}, which simply extends the \texttt{phylo} or \texttt{phylo3D} format and attaches this information in form of the attributes \texttt{node.descs}, \texttt{node.ancs} and \texttt{node.depth}. If the tree is in \texttt{phylo3D} format and as such contains the attributes \texttt{node.coord} and \texttt{edge.weight}, this function will also attach the attribute \texttt{node.subtrCentr}, which contains the information about the centroids and weights of the pending subtrees from the function \texttt{getSubtrCentr(tree)}. Note that it is optional to use this extension of the format since all functions in \textsf{treeDbalance} can also calculate the respectively needed tree attributes themselves, but it may speed up calculations for larger trees. The following commands illustrate the usage of these functions and also show the corresponding output rounded to two decimal places.

\begin{tcolorbox}[colback=verylightgray, bottom=0.05pt, top=0.05pt, colframe=verylightgray, frame empty, breakable]
\begin{lstlisting}
ext_tree <- treeDbalance::extendPhylo(tree)
# The new attributes are now attached. Check this exemplary for node.ancs:
ext_tree$node.ancs["ancestor",] # Output: 7  7  6 NA  4  5  5
# Obtain node orders "top-down" and "bottom-up" with:
ext_tree$node.depth["orderByIncrDepth",]      # Output: 4 5 6 7 3 1 2
rev(ext_tree$node.depth["orderByIncrDepth",]) # Output: 2 1 3 7 6 5 4
# We can now also access the coordinates and weights of the 
# centroids of each pending subtree:
ext_tree$node.subtrCentr
#      centr_x centr_y centr_z subtrweight
# [1,]    1.00       0    3.00        0.00
# [2,]    3.00       0    3.00        0.00
# [3,]    5.00       0    3.00        0.00
# [4,]    3.44       0    1.28        0.73
# [5,]    3.09       0    1.77        0.45
# [6,]    5.00       0    2.50        0.03
# [7,]    2.00       0    2.50        0.09
\end{lstlisting}
\end{tcolorbox}

Now, we have a look at the most significant functions, i.e., the node imbalance statistics and corresponding 3D imbalance indices. $\mathcal{A}$, $\alpha$, $\mathcal{M}$, and $\mu$ can be calculated separately, if wished, for any subdivision of an edge using the functions \texttt{imbalSubdiv\_A\_Index}, \texttt{imbalSubdiv\_alpha\_Index}, \texttt{imbalSubdiv\_M\_Index}, and \texttt{imbalSubdiv\_mu\_Index}, which use the nodes $v$ and $p(v)$, the ratio $x$ at which the edge should be subdivided, the weight of edge $(p(v),v)$ as well as the centroid of $\mathsf{T}_v$ and its weight as input. This makes it possible to observe the node imbalance with respect to the $z$-coordinate of the node, the distance from the root (path length to the root, obtained with \texttt{getDistFromRoot(tree)}) or the distance from the next descendant leaf (path length to the next descendant leaf, obtained with \texttt{getDistFromLeaf(tree)}). The function \texttt{imbalProfile} unites all of this data with the imbalance data in a single matrix, which can be used as a foundation to create figures like Figure \ref{fig:imbalanceProfiles}. We have to specify the maximal section length $s$, which controls how fine we spread edge subdivisions across the edges. For an edge $e=(p,v)$ of length $\ell(e)$, for example, we observe the node $v$ as well as $\lceil\ell(e)/s\rceil-1$ many equidistant edge subdivisions. Here we can see the first few rows of the resulting matrix rounded to two decimal places if we apply this function to our example tree.

\begin{tcolorbox}[colback=verylightgray, bottom=0.05pt, top=0.05pt, colframe=verylightgray, frame empty]
\begin{lstlisting}
treeDbalance::imbalProfile(tree, imbal_type = "mu", max.seclen = 1)
#       z-coord  root-path-len   desc-leaf-path-len  imbalance value   edge
# [1,]     1.00           1.00                 2.41            0.76       1
# [2,]     2.00           2.41                 1.00            0.71       2
# [3,]     1.50           1.71                 1.71            0.26       2
# [4,]     2.00           3.24                 1.41            0.89       4
# ... [11 rows in total] ...
\end{lstlisting}
\end{tcolorbox}

To obtain the 3D imbalance index values of a tree, one does not need to calculate individual node imbalance values. Instead, each index can be computed with its own function. Beforehand, it has to be decided which weighting method should be applied: the edge lengths or the actual edge weights. If no method is given, the actual edge weights are used by default. The calculation or rather estimation of the integrals is implemented using the \texttt{integrate} function of the \textsf{stats} package \cite{RCoreTeam2021}, whose parameters (maximal number of subdivisions and relative tolerance) can optionally be accessed.

\begin{tcolorbox}[colback=verylightgray, bottom=0.05pt, top=0.05pt, colframe=verylightgray, frame empty]
\begin{lstlisting}
treeDbalance::A_Index(tree, weight = "edge_weight")       # [1] 0.3752615
treeDbalance::alpha_Index(tree, weight = "edge_length")   # [1] 0.2071388
treeDbalance::M_Index(tree, weight = "edge_length")       # [1] 0.1953228
treeDbalance::mu_Index(tree, weight = "edge_weight")      # [1] 0.3509838
\end{lstlisting}
\end{tcolorbox}

The pure 3D root imbalance value with respect to a vertical axis and the combined imbalance values (see Remark \ref{rem:root_edge}) can be computed with the function \texttt{combined3DIndex}. More detailed information on the input requirements and output of these and other (auxiliary) functions is given in the manual of \textsf{treeDbalance}.

\newpage
\section{Further properties and proofs} \label{sec:appendix_suppl}
In this section, we provide several propositions and proofs regarding the 3D node imbalance statistics (Section \ref{sec:appendix_node_imbal}) and the 3D imbalance indices (Section \ref{sec:appendix_prop_indices}). For the latter, we feature amongst others the proof of Theorem \ref{thm:indices_great} stating that all introduced 3D imbalance indices meet the criteria stated in Section \ref{sec:desired_properties}.

\subsection{Properties of the 3D node imbalance statistics} \label{sec:appendix_node_imbal}

In Remark \ref{rem:sinus_relation} we discussed the relation between the relative centroid distance $\mu$ (and thereby $\mathcal{M}$) to $\alpha$ and $\mathcal{A}$. Now, we provide the corresponding proposition, proof, and corollary.

\begin{proposition} \label{prop:sinus_relation} Let $v \neq \rho$ be a vertex in a rooted 3D tree $\mathsf{T}=(T,w)$. Then, the following relationship of the (minimal) centroid angle $\alpha$ and $\mathcal{A}$, and the relative centroid distance $\mu$ holds:
\begin{align*}
\sin(\alpha(v))=\sin(\mathcal{A}(v)) &= \mu(v) & \text{and accordingly} & &\alpha(v) = \arcsin(\mu(v)).
\end{align*}
\end{proposition}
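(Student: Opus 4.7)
The plan is to reduce everything to the cross-product formula for the distance from a point to a line and to track carefully which angle is being measured. First, I would dispose of the degenerate case $\mathcal{C}(\mathsf{T}_v)=v$: by definition all three quantities $\mathcal{A}(v)$, $\alpha(v)$, and $\mu(v)$ equal $0$, and $\sin(0)=0=\arcsin(0)$, so the identities hold trivially. From here on I assume $\mathcal{C}(\mathsf{T}_v)\neq v$, so that $d(\mathcal{C}(\mathsf{T}_v),v)>0$; since vertices of a rooted 3D tree are distinct, $|p(v)-v|>0$ as well, so all denominators are safe.

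Next I would use the formula for the distance from a point to a line given in Section~\ref{sec:prelim_math}, namely
\[d(\mathcal{C}(\mathsf{T}_v),g_{v,p(v)})=\frac{|(\mathcal{C}(\mathsf{T}_v)-v)\times(p(v)-v)|}{|p(v)-v|},\]
together with the standard identity $|a\times b|=|a|\,|b|\,\sin(\angle(a,b))$. Substituting $a=\mathcal{C}(\mathsf{T}_v)-v$ and $b=p(v)-v$, the factors $|p(v)-v|$ cancel and I obtain
\[d(\mathcal{C}(\mathsf{T}_v),g_{v,p(v)}) = |\mathcal{C}(\mathsf{T}_v)-v|\cdot\sin\bigl(\angle(\mathcal{C}(\mathsf{T}_v)-v,\,p(v)-v)\bigr).\]

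The only subtlety is that $\mathcal{A}(v)$ is defined with respect to $v-p(v)$ rather than $p(v)-v$. Since reversing a vector changes the angle to its supplement, $\angle(\mathcal{C}(\mathsf{T}_v)-v,\,p(v)-v)=\pi-\mathcal{A}(v)$, and $\sin(\pi-\mathcal{A}(v))=\sin(\mathcal{A}(v))$. Dividing by $d(\mathcal{C}(\mathsf{T}_v),v)=|\mathcal{C}(\mathsf{T}_v)-v|$ then yields $\mu(v)=\sin(\mathcal{A}(v))$.

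Finally, by Definition~\ref{def:3D_mCA}, $\alpha(v)$ equals either $\mathcal{A}(v)$ or $\pi-\mathcal{A}(v)$; either way $\sin(\alpha(v))=\sin(\mathcal{A}(v))$, establishing the chain $\sin(\alpha(v))=\sin(\mathcal{A}(v))=\mu(v)$. Because $\alpha(v)\in[0,\pi/2]$ and $\arcsin$ is the inverse of $\sin$ restricted to this interval, applying $\arcsin$ to both sides of $\mu(v)=\sin(\alpha(v))$ gives $\alpha(v)=\arcsin(\mu(v))$. There is no real obstacle here beyond keeping the orientation of the reference vector straight, so the proof is essentially a one-line application of the sine formulation of the cross product combined with the supplementary-angle identity.
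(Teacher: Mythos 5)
Your proof is correct. It reaches the same identity as the paper but by a slightly different route: the paper argues synthetically, dropping a perpendicular $D$ from $\mathcal{C}(\mathsf{T}_v)$ onto the line $g_{v,p(v)}$ and reading off $\mu(v)$ as opposite-over-hypotenuse in the resulting right triangle, whereas you derive the same quantity algebraically from the cross-product distance formula stated in the preliminaries via $\vert a\times b\vert=\vert a\vert\,\vert b\vert\sin(\angle(a,b))$. The two arguments encode the same trigonometric fact, but your version has the merit of making the orientation issue explicit — the distance formula naturally produces $\angle(\mathcal{C}(\mathsf{T}_v)-v,\,p(v)-v)$ while $\mathcal{A}(v)$ is defined against the reversed vector $v-p(v)$, and you correctly bridge the two with $\sin(\pi-\mathcal{A}(v))=\sin(\mathcal{A}(v))$ rather than relying on a figure. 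The paper handles that same supplementary-angle step only when passing from $\alpha$ to $\mathcal{A}$, and both proofs dispose of the degenerate case $\mathcal{C}(\mathsf{T}_v)=v$ identically. Your closing observation that $\arcsin$ inverts $\sin$ precisely because $\alpha(v)\in[0,\pi/2]$ is also the right justification for the final identity.
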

\begin{proof}
Let $\mathsf{T}$ be a rooted 3D tree with vertex $v\neq \rho$ with parent $p(v)$. If $v=\mathcal{C}(\mathsf{T}_v)$, all three measurements, $\alpha(v)$, $\mathcal{A}(v)$ and $\mu(v)$, are zero and thus the equations above hold. In the case of $v\neq\mathcal{C}(\mathsf{T}_v)$, consider the right-angled triangle created by $g_{v,p(v)}$, $g_{v,\mathcal{C}(\mathsf{T}_v)}$ and a line $D$ orthogonal to $g_{v,p(v)}$, starting in $\mathcal{C}(\mathsf{T}_v)$ and ending in a point on $g_{v,p(v)}$ (cf. Figure \ref{fig:sinus_relation}). Here, we can see that $\mu(v)$ is exactly the fraction which arises from the length of the opposite side $D$ of angle $\alpha(v)$, i.e., $d\left(\mathcal{C}(\mathsf{T}_v), g_{v,p(v)}\right)$, divided by the length of the hypotenuse, i.e., $d(\mathcal{C}(\mathsf{T}_v), v)$, which in turn equals $\sin(\alpha(v))$, which implies $\alpha(v)=\arcsin(\mu(v))$. Furthermore, we know that $\sin(x)=\sin(\pi-x)$ for an $x\in[0,\frac{\pi}{2}]$ based on the symmetry of the sinus function. Thus, $\sin(\mathcal{A}(v))=\sin(\alpha(v))$. 
\end{proof}

\begin{remark}
Of course, there is no way to infer $\mathcal{A}(v)$ from a given $\alpha(v)$ or $\mu(v)$ as there are generally two possibilities for $\mathcal{A}(v)$ (cf. Figure \ref{fig:node_imbal_comparison}). 
\end{remark}

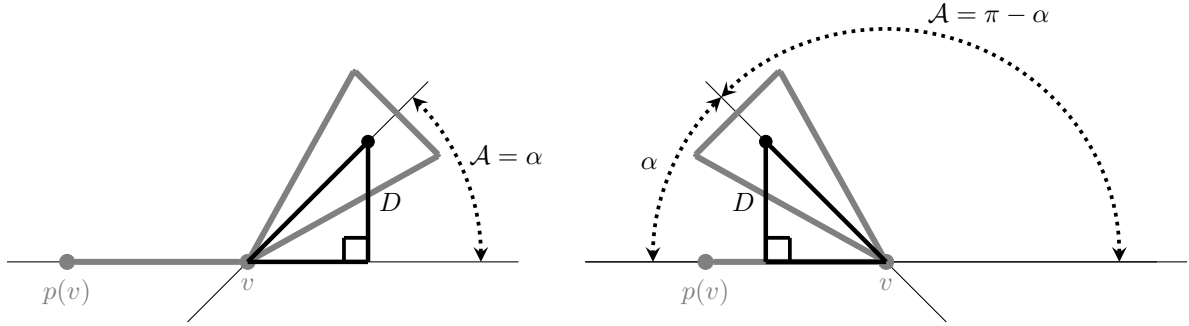
\begin{figure}[ht]
	\centering
	\begin{tikzpicture}[scale=0.8, > = stealth]
	\centering
	\tikzset{std/.style = {shape=circle, draw, gray, fill=gray, minimum size = 0.2cm, scale=0.6}}
	\tikzset{dot/.style = {shape=circle, draw, fill=black, minimum size = 0.2cm, scale=0.5}}
	\tikzset{smalldot/.style = {shape=circle, draw, gray, fill=gray, minimum size = 0.1cm, scale=0.1}}
    \draw[-,thin] (-4,0)--(4.5,0) coordinate (startangle);
    \draw[-,thin] (-1,-1)--(3,3) coordinate (endangle);
	\node[std,label={[gray]below:{$p(v)$}}] (p) at (-3,0) {};
	\node[std,label={[gray]below:{$v$}}] (v) at (0,0) {};
	\node[dot] (c) at (2,2) {};
	\node[smalldot] (a) at (1.77, 3.18) {};
	\node[smalldot] (b) at (3.18, 1.77) {};
	
	\path[-, gray, line width=0.8mm] (p) edge node {} (v);
	\path[-, gray, line width=0.8mm] (v) edge node {} (a);
	\path[-, gray, line width=0.8mm] (v) edge node {} (b);
	\path[-, gray, line width=0.8mm] (a) edge node {} (b);
    \draw[-, line width=0.6mm] (2,0)--(2,2) node[midway,right] {$D$};
    \draw[-, line width=0.6mm] (0,0)--(2,0);
	\draw[-, line width=0.6mm] (0,0)--(2,2);
    \draw[-, line width=0.5mm] (1.6,0.4)--(2,0.4); 
    \draw[-, line width=0.5mm] (1.6,0)--(1.6,0.4);
	\pic [draw, <->, "$\mathcal{A}=\alpha$", dotted, angle eccentricity=1.2, scale=6.2, line width=0.5mm] {angle = startangle--v--endangle};
    \end{tikzpicture} \quad
	\begin{tikzpicture}[scale=0.8, > = stealth]
	\centering
	\tikzset{std/.style = {shape=circle, draw, gray, fill=gray, minimum size = 0.2cm, scale=0.6}}
	\tikzset{dot/.style = {shape=circle, draw, fill=black, minimum size = 0.2cm, scale=0.5}}
	\tikzset{smalldot/.style = {shape=circle, draw, gray, fill=gray, minimum size = 0.1cm, scale=0.1}}
    \draw[-,thin] (-5,0)--(4.5,0) coordinate (startangle);
    \draw[-,thin] (5,0)--(-5,0) coordinate (startangle2);
    \draw[-,thin] (1,-1)--(-3,3) coordinate (endangle);
	\node[std,label={[gray]below:{$p(v)$}}] (p) at (-3,0) {};
	\node[std,label={[gray]below:{$v$}}] (v) at (0,0) {};
	\node[dot] (c) at (-2,2) {};
	\node[smalldot] (a) at (-1.77, 3.18) {};
	\node[smalldot] (b) at (-3.18, 1.77) {};
	
	\path[-, gray, line width=0.8mm] (p) edge node {} (v);
	\path[-, gray, line width=0.8mm] (v) edge node {} (a);
	\path[-, gray, line width=0.8mm] (v) edge node {} (b);
	\path[-, gray, line width=0.8mm] (a) edge node {} (b);
    \draw[-, line width=0.6mm] (-2,0)--(-2,2) node[midway,left] {$D$};
    \draw[-, line width=0.6mm] (0,0)--(-2,0);
	\draw[-, line width=0.6mm] (0,0)--(-2,2);
    \draw[-, line width=0.5mm] (-1.6,0.4)--(-2,0.4); 
    \draw[-, line width=0.5mm] (-1.6,0)--(-1.6,0.4);
	\pic [draw, <->, "$\mathcal{A}=\pi -\alpha$", dotted, angle eccentricity=1.15, scale=6.2, line width=0.5mm] {angle = startangle--v--endangle};
	\pic [draw, <->, "$\alpha$", dotted, angle eccentricity=1.1, scale=6.2, line width=0.5mm] {angle = endangle--v--startangle2};
    \end{tikzpicture}
	\caption{Conceptual visualization of the right-angled triangle used in the proof of Proposition \ref{prop:sinus_relation} in two different constellations. $p(v)$, $v$, and $\mathsf{T}_v$ are depicted in gray, the black dot represents the centroid $\mathcal{C}(\mathsf{T}_v)$. The thin lines indicate $g_{v,p(v)}$ as well as $g_{v,\mathcal{C}(\mathsf{T}_v)}$. The angles $\alpha$ and $\mathcal{A}$ are shown with dotted lines.}
	 \label{fig:sinus_relation}
\end{figure}

Furthermore, we can now see that for interior nodes $v\neq\rho$ in general the sets of possible imbalance values coincide with the ranges stated in the definitions:

\begin{corollary}\label{cor:value_ranges}
For any interior node $v\in\mathring{V}(T)\setminus\{\rho\}$ with $\mathcal{C}(\mathsf{T}_v)\neq v$ of a rooted 3D tree $\mathsf{T}=((V,E),w)$, we can rotate its pending subtree around $v$ such that $v$ can reach any imbalance value $\mathcal{A} \in [0,\pi]$, $\alpha(v) \in [0,\frac{\pi}{2}]$, $\mathcal{M} \in [0,2]$ or $\mu(v) \in [0,1]$. For nodes with $\mathcal{C}(\mathsf{T}_v)=v$, e.g., leaves, the imbalance value is fixed at 0 independent of any rotation around $v$.
\end{corollary}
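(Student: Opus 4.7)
The plan is to exploit the rotational symmetry of the problem around $v$. The key observation is that applying a rotation $R \in \mathrm{SO}(3)$ around the fixed point $v$ to all descendants of $v$ produces a new rooted 3D tree in which the topology, the edge weights, the edge lengths and the vertex $p(v)$ are all unchanged; only the coordinates of the vertices in $\mathsf{T}_v$ are moved. Since the centroid is defined as the weight-normalized sum of edge centers and rotation is an affine map that fixes $v$, the new centroid is simply $v + R(\mathcal{C}(\mathsf{T}_v) - v)$. This follows from Proposition \ref{prop:recursions} (or directly from the formula in Section \ref{sec:prelim_math}), together with the linearity of $R$ and the fact that edge midpoints transform as $R((u+w)/2) = (Ru+Rw)/2$ when $R$ is applied around $v$.

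Next, I would analyze the case $\mathcal{C}(\mathsf{T}_v) \neq v$. Let $r := d(\mathcal{C}(\mathsf{T}_v),v) > 0$. As $R$ ranges over $\mathrm{SO}(3)$, the rotated centroid $v + R(\mathcal{C}(\mathsf{T}_v) - v)$ traces out precisely the sphere of radius $r$ centered at $v$. The angle $\mathcal{A}(v)$ only depends on the direction of the vector from $v$ to the centroid relative to the direction $v - p(v)$, so the achievable values of $\mathcal{A}(v)$ are exactly the angles that the unit vectors on the sphere form with the fixed unit vector $(v-p(v))/|v-p(v)|$. Since this set of angles is the full interval $[0,\pi]$, every value in $[0,\pi]$ can be realized for $\mathcal{A}(v)$. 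The corresponding ranges for the other three statistics then follow at once from the case distinction in Definition \ref{def:3D_mCA} (giving $\alpha \in [0,\pi/2]$), from Proposition \ref{prop:sinus_relation} (giving $\mu = \sin(\mathcal{A}) \in [0,1]$), and from the case distinction in Definition \ref{def:3D_exrelCD} (giving $\mathcal{M} \in [0,2]$).

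Finally, for the degenerate case $\mathcal{C}(\mathsf{T}_v) = v$, note that $R(\mathcal{C}(\mathsf{T}_v) - v) = R(0) = 0$ for every $R \in \mathrm{SO}(3)$, so the centroid of the rotated subtree remains at $v$. By the convention in Definitions \ref{def:3D_CA}, \ref{def:3D_mCA}, \ref{def:3D_relCD}, and \ref{def:3D_exrelCD}, all four imbalance statistics are set to $0$ in this situation, regardless of the rotation chosen. In particular, this case covers leaves, whose pending subtree consists of a single vertex and thus has its centroid at $v$.

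The only conceptually subtle point is justifying that the centroid transforms under rotation as claimed, but this is a direct consequence of the linearity/affinity of rotations and of the weighted-average definition of $\mathcal{C}(\mathsf{T}_v)$; everything else is a routine unwinding of the definitions and an application of Proposition \ref{prop:sinus_relation}.
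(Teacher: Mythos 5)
Your proposal is correct and follows essentially the same route as the paper's proof: achieve any angle $\mathcal{A}(v)\in[0,\pi]$ by rotating $\mathsf{T}_v$ about $v$, transfer the ranges to $\alpha$, $\mu$, and $\mathcal{M}$ via the definitions and Proposition \ref{prop:sinus_relation}, and note that the centroid (and hence the value $0$) is invariant under rotation when $\mathcal{C}(\mathsf{T}_v)=v$. The only difference is that you explicitly justify, via the $\mathrm{SO}(3)$-orbit of the centroid, the step the paper dismisses as "clear for both angle approaches," which is a welcome addition rather than a deviation.
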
 
\begin{proof}
This is clear for both angle approaches and follows for the relative centroid distance because $\mu(v)=\sin(\mathcal{A}(v))$ with sinus being a continuous function with $\sin([0,\pi])=[0,1]$. $\mathcal{M}$ is also continuous with respect to $\mathcal{A}$ with $\mathcal{M}=0$ for $\mathcal{A}=0$ and $\mathcal{M}=2$ for $\mathcal{A}=\pi$ (see Figure \ref{fig:node_imbal_comparison}).
Any node $u$ with $\mathcal{C}(\mathsf{T}_u)=u$ has an imbalance value of zero since $\mu(u)=0$ by definition (and thus also $\mathcal{M}(u)=0$) and since we have $\overrightarrow{u,\mathcal{C}(\mathsf{T}_u)}=0$, both angle approaches are zero as well, i.e., $\mathcal{A}(u)=\alpha(u)=0$. Any rotation of $\mathsf{T}_u$ around $u$ cannot change its centroid $\mathcal{C}(\mathsf{T}_u)$ and therefore has no impact on the node imbalance.
\end{proof}

\subsection{Properties of the 3D imbalance indices} \label{sec:appendix_prop_indices}

In this section, we provide the proof of Theorem \ref{thm:indices_great}, which states that all eight integral-based 3D imbalance indices meet the criteria described in Section \ref{sec:desired_properties}. Moreover, we will show here that they are not only robust to horizontal rotation/mirroring but robust to any rotation/mirroring. However, before we can prove all of this, we will show with the next lemma that all single node imbalance statistics are robust to such tree modifications.

\begin{lemma} \label{lem:robust_node_treechange}
Let $\mathsf{T}=((V,E),w)$ be a rooted 3D tree and let $\mathsf{T}'=((V',T'),w')$ be a tree that can be obtained by shifting, rotating, mirroring, or resizing $\mathsf{T}$. Let $e=(p,v)\in E$ be an arbitrary edge of $\mathsf{T}$ and $e'=(p',v')$ the corresponding edge in $\mathsf{T}'$. Then, for an edge subdivision $s_x=v+(p-v)\cdot x$ with $x \in [0,1]$ as used in the edge imbalance integrals defined in Definition \ref{def:edge_imbalance} and its correspondent $s_x'$, we have $\mathcal{A}_{\mathsf{T},e}(s_x)=\mathcal{A}_{\mathsf{T}',e'}(s_x')$, $\alpha_{\mathsf{T},e}(s_x)=\alpha_{\mathsf{T}',e'}(s_x')$, $\mathcal{M}_{\mathsf{T},e}(s_x)=\mathcal{M}_{\mathsf{T}',e'}(s_x')$ and $\mu_{\mathsf{T},e}(s_x)=\mu_{\mathsf{T}',e'}(s_x')$, respectively.
\end{lemma}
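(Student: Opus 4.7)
The plan is to reduce each of the four statistics to the following invariants: the angle between two non-zero vectors $\angle(a,b)$ and the distance ratio between the perpendicular foot of a point on a line divided by the distance from that point to the line's anchor. Every one of $\mathcal{A}$, $\alpha$, $\mathcal{M}$, and $\mu$ is built from these two ingredients applied to the vectors $\overrightarrow{s_x,\mathcal{C}(\mathsf{T}_{s_x})}$ and $\overrightarrow{p,s_x}$ (with $s_x$ replaced by $v$ at $x=0$). So once I know that (i) the corresponding centroid transforms correctly and (ii) angles and length ratios are preserved by $f$, the lemma drops out.

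First I would establish that the centroid is equivariant: if $f:\mathbb{R}^3\to\mathbb{R}^3$ is a shift, rotation, reflection, or uniform scaling by $\lambda>0$, then $\mathcal{C}(\mathsf{T}'_{v'})=f(\mathcal{C}(\mathsf{T}_v))$. I would use the recursion of Proposition \ref{prop:recursions} combined with the fact that $f$ is affine, so $\mathcal{C}(e)=\tfrac{1}{2}(u+v)\mapsto\tfrac{1}{2}(f(u)+f(v))=f(\mathcal{C}(e))$ for every edge $e=(u,v)$. The key observation for resizing is that all edge weights are scaled by a common factor (since the weight is a geometric quantity like length or volume that scales uniformly with $\lambda$), and this common factor cancels in the numerator and denominator of the centroid formula, yielding $\mathcal{C}(f(\mathsf{T}))=f(\mathcal{C}(\mathsf{T}))$. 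An induction on $|V|$ following the recursion finishes this step. Moreover, because $f$ is affine, the subdivision point $s_x' = v'+(p'-v')x$ equals $f(s_x)$, and $\mathsf{T}'_{s_x'}$ is obtained from $\mathsf{T}_{s_x}$ via $f$, so the same argument gives $\mathcal{C}(\mathsf{T}'_{s_x'}) = f(\mathcal{C}(\mathsf{T}_{s_x}))$.

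Second, I would show that the two building-block quantities are $f$-invariant. For a shift $f(x)=x+t$, differences of image points equal the corresponding differences of preimage points, so angles and distances are unchanged. For a rotation or reflection $f(x)=Mx$ with orthogonal $M$, we have $\langle Ma,Mb\rangle=\langle a,b\rangle$ and $|Ma|=|a|$, so both the angle via $\arccos$ and all Euclidean distances are preserved. For a uniform scaling $f(x)=\lambda x$ with $\lambda>0$, every vector difference is multiplied by $\lambda$, so inner products are scaled by $\lambda^2$ and norms by $\lambda$, making the $\arccos$-argument $\langle a,b\rangle/(|a||b|)$ invariant; distances all scale by $\lambda$, so the ratio $d(\mathcal{C}(\mathsf{T}_{s_x}),g_{s_x,p})/d(\mathcal{C}(\mathsf{T}_{s_x}),s_x)$ is likewise invariant. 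Combining (i) and (ii) along the applicable transformation gives $\mathcal{A}_{\mathsf{T},e}(s_x)=\mathcal{A}_{\mathsf{T}',e'}(s_x')$ and $\mu_{\mathsf{T},e}(s_x)=\mu_{\mathsf{T}',e'}(s_x')$, and since $\alpha$ and $\mathcal{M}$ are defined as fixed functions of $\mathcal{A}$ and $\mu$ (Definitions \ref{def:3D_mCA} and \ref{def:3D_exrelCD}), the corresponding equalities for $\alpha$ and $\mathcal{M}$ follow automatically. One must also separately note that $v=\mathcal{C}(\mathsf{T}_v)$ holds iff $v'=\mathcal{C}(\mathsf{T}'_{v'})$ (so the degenerate clause of each definition triggers on both sides simultaneously), which is immediate from centroid equivariance.

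The main obstacle is the resizing case, because unlike the isometries it changes both coordinates and weights, so one must carefully verify that the weight rescaling cancels throughout the centroid recursion. Once the centroid equivariance is in hand, however, the rest of the proof is essentially bookkeeping, because the definitions of all four node statistics factor through the angle and distance-ratio functionals which are manifestly invariant under the listed transformations.
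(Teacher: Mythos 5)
Your proposal is correct and follows essentially the same route as the paper's proof: a case analysis over the four operations, reducing everything to the invariance of angles and of the relevant distance ratios under shifts, orthogonal maps, and uniform scalings. The only substantive differences are that you make explicit the centroid-equivariance step $\mathcal{C}(\mathsf{T}'_{v'})=f(\mathcal{C}(\mathsf{T}_v))$ (via Proposition \ref{prop:recursions} and affinity of $f$), which the paper's proof treats as implicit, and that you handle $\mu$ and $\mathcal{M}$ directly through the scale-invariance of the distance ratio, whereas the paper derives them from $\mathcal{A}$ via the sine relation of Proposition \ref{prop:sinus_relation} --- both are minor variations, and your added care with the resizing case (common weight factor cancelling in the centroid formula) is a genuine improvement in rigor rather than a change of method.
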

\begin{proof}
Let $\mathsf{T}=((V,E),w)$, $\mathsf{T}'=((V',T'),w')$, $e$, $e'$, $s_x$, and $s_x'$ be as described above. We show that the node imbalance values are robust to every single operation, which entails that this also holds for any combination of such operations. Remember that all four node imbalance statistics of a single node are calculated only with respect to its reference edge $e$ and its pending subtree. For shifting, rotating, and mirroring the equations stated above hold simply because neither angles nor edge lengths are changed with these operations; the relation of the reference edge $e$, the node $s_x$ itself, and $\mathcal{C}(\mathsf{T}_{s_x})$ might be mirrored but stay the same with regard to $\mathcal{A}$, $\alpha$, $\mathcal{M}$, and $\mu$. Resizing the tree does affect edge lengths but not angles, thus $\mathcal{A}$ and $\alpha$ will remain the same for $s_x$ and $s_x'$. Moreover, because of the relation of $\mu$, $\mathcal{M}$ and $\mathcal{A}$ established in Proposition \ref{prop:sinus_relation}, we also have $\mu_{\mathsf{T},e}(s_x)=\sin{(\mathcal{A}_{\mathsf{T},e}(s_x))}=\sin{(\mathcal{A}_{\mathsf{T}',e'}(s_x'))}=\mu_{\mathsf{T}',e'}(s_x')$ as well as $\mathcal{M}_{\mathsf{T},e}(s_x)=\sin{(\mathcal{A}_{\mathsf{T},e}(s_x))}=\sin{(\mathcal{A}_{\mathsf{T}',e'}(s_x'))}=\mathcal{M}_{\mathsf{T}',e'}(s_x')$ for $\mathcal{A}\leq \frac{\pi}{2}$ and $\mathcal{M}_{\mathsf{T},e}(s_x)=2-\sin{(\mathcal{A}_{\mathsf{T},e}(s_x))}=2-\sin{(\mathcal{A}_{\mathsf{T}',e'}(s_x'))}=\mathcal{M}_{\mathsf{T}',e'}(s_x')$ for $\mathcal{A}> \frac{\pi}{2}$.
\end{proof}

Before we can finally get to the proof of Theorem \ref{thm:indices_great}, we need to prove Theorem \ref{thm:min_imbalindices} about the three equivalent characterizations of the minimal trees. Recall that in case $\mathcal{C}\left(\mathsf{T}_v\right)\neq v$ we use \enquote{behind} as a short hand for $\mathcal{A}(v)<\pi/2$ (see Remark \ref{rem:behind_front}). Similarly, we use \enquote{directly behind} if $\mathcal{A}(v)=0$, i.e., $\mathcal{C}\left(\mathsf{T}_v\right)=v+(v-p(v))\cdot x$ for an $x>0$, which is equivalent to $\mathcal{C}\left(\mathsf{T}_v\right)$ being both \enquote{behind} $v$ as well as $\mathcal{C}\left(\mathsf{T}_v\right)=v+(v-p(v))\cdot x$ for any $x\in\mathbb{R}\setminus \{0\}$.

\begingroup
\def\thetheorem{\ref{thm:min_imbalindices}}
\begin{theorem}
Let $\mathsf{T}=((V,E),w)$ be a rooted 3D tree with root $\rho$, let $i=\mathcal{A}$, $\alpha$, $\mathcal{M}$, or $\mu$, and let $\widetilde{I}^w$ and $\widetilde{I}^\ell$ denote the 3D imbalance indices derived from $i$. Then, the following characterizations are equivalent:
\begingroup
\renewcommand\labelenumi{\theenumi)}
\begin{enumerate}
    \item $\widetilde{I}^w(\mathsf{T})=0$,
    \item $\widetilde{I}^\ell(\mathsf{T})=0$,
    \item $i_{\mathsf{T},e_v}(v)=0$ \quad $\forall \ v\in\mathring{V}\setminus\{\rho\}$.
\end{enumerate}
\endgroup
\end{theorem}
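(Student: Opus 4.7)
The plan is to reduce both (1) and (2) to a uniform pointwise vanishing condition on the integrands appearing in the edge imbalance integrals, and then to transport that condition from the nodes $v$ to every subdivision point along $e_v$ using the recursive centroid formula. Writing $s_x := v + (p(v)-v)\,x$, I would first show that each of (1) and (2) is equivalent to
\[
i_{\mathsf{T},e_v}(s_x) = 0 \quad \text{for every } v\in V\setminus\{\rho\} \text{ and every } x\in[0,1].
\]
This is because $\widetilde{I}^w$ and $\widetilde{I}^\ell$ are sums of non-negative terms $w(e_v)\cdot i_{\mathsf{T}}(e_v)$ and $\ell(e_v)\cdot i_{\mathsf{T}}(e_v)$ with strictly positive weights, and each $i_{\mathsf{T}}(e_v)$ is the integral of the continuous non-negative function $x\mapsto i_{\mathsf{T},e_v}(\gamma_{e_v}(x))$ on $[0,1]$. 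Hence both indices vanish iff every edge imbalance integral does, and by continuity plus non-negativity of the integrand this is in turn equivalent to the displayed pointwise condition. Setting $x=0$ (so $s_x=v$) immediately yields (3), so the implications (1) $\Rightarrow$ (3) and (2) $\Rightarrow$ (3) are free.

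For the converse (3) $\Rightarrow$ pointwise vanishing I would split on the edge type. For a pending edge $e_l$ to a leaf $l$, the pending subtree $\mathsf{T}_{s_x}$ is just the segment $(s_x,l)$, whose centroid $(s_x+l)/2$ trivially lies on the line through $p(l)$ and $l$, so $i_{\mathsf{T},e_l}(s_x)=0$ without any hypothesis. For an internal edge $e_v$ with $v\in\mathring V\setminus\{\rho\}$, Corollary \ref{cor:recursions_subdiv} expresses the centroid as a convex combination
\[
\mathcal{C}(\mathsf{T}_{s_x}) \;=\; \frac{w(\mathsf{T}_v)\,\mathcal{C}(\mathsf{T}_v) \;+\; w(e_v)\,x\cdot \tfrac{s_x+v}{2}}{w(\mathsf{T}_v) + w(e_v)\,x}.
\]
Hypothesis (3) forces $\mathcal{C}(\mathsf{T}_v)$ to lie on the line through $p(v)$ and $v$ (for $i\in\{\alpha,\mu\}$ either side of $v$ is allowed; for $i\in\{\mathcal{A},\mathcal{M}\}$ the stronger requirement is that $\mathcal{C}(\mathsf{T}_v) = v + (v-p(v))\,t_0$ for some $t_0\geq 0$). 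The midpoint $(s_x+v)/2$ lies on that same line by construction, and any convex combination of two collinear points is again on the line, so $\mathcal{C}(\mathsf{T}_{s_x})$ is on it. This gives $\alpha_{\mathsf{T},e_v}(s_x)=\mu_{\mathsf{T},e_v}(s_x)=0$ at once.

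The one step that requires a bit more care, and which I expect to be the main obstacle, is preserving the orientation for $\mathcal{A}$ and $\mathcal{M}$, since these statistics detect not only collinearity but also that the centroid sits \emph{behind} $s_x$ along the edge. Using $v-s_x=(v-p(v))\,x$, I would write both displacements relative to $s_x$ as
\[
\mathcal{C}(\mathsf{T}_v) - s_x \;=\; (v-p(v))(t_0+x), \qquad \tfrac{s_x+v}{2} - s_x \;=\; (v-p(v))\,\tfrac{x}{2},
\]
each a non-negative multiple of $v-p(v)$. The set of such non-negative multiples is convex, so their weighted average yields $\mathcal{C}(\mathsf{T}_{s_x}) - s_x = (v-p(v))\,\tau$ with $\tau\geq 0$, meaning $\mathcal{C}(\mathsf{T}_{s_x})$ is directly behind $s_x$ along $e_v$. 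Hence $\mathcal{A}_{\mathsf{T},e_v}(s_x) = \mathcal{M}_{\mathsf{T},e_v}(s_x) = 0$. Combined with the pending edge case and the boundary $x=0$ (which is exactly (3), leaves satisfying it trivially since $\mathcal{C}(\mathsf{T}_v)=v$ there), this establishes the pointwise vanishing condition and closes the circle of equivalences.
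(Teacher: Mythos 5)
Your proposal is correct and follows essentially the same route as the paper's proof: reduce both index conditions to the vanishing of every edge imbalance integral (hence, by continuity and non-negativity of the integrand, to pointwise vanishing at every subdivision point), and then use the centroid recursion of Corollary \ref{cor:recursions_subdiv} to propagate $i(v)=0$ from the node $v$ to all $s_x$, treating the collinearity case ($\alpha$, $\mu$) and the ``directly behind'' orientation case ($\mathcal{A}$, $\mathcal{M}$) separately. Your explicit computation writing both displacements from $s_x$ as non-negative multiples of $v-p(v)$ is just a slightly more formal rendering of the paper's ``lies between two points that are both directly behind $s_x$'' argument.
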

\addtocounter{theorem}{-1}
\endgroup

\begin{proof}
First of all, we need to show 1) is equivalent to 2). Both $\widetilde{I}^\ell(\mathsf{T})$ and $\widetilde{I}^w(\mathsf{T})$ are based on a weighted mean of the same edge imbalance integrals. There respective weights differ, but since both $w(e)>0$ and $\ell(e)>0$ $\forall e\in E(\mathsf{T})$ in any a rooted 3D tree $\mathsf{T}$, we can immediately conclude that both $\widetilde{I}^\ell(\mathsf{T})$ and $\widetilde{I}^w(\mathsf{T})$ are minimal, i.e., 0, if and only if all edge imbalance integrals $i_\mathsf{T}(e)$ are zero.\\
The next step is to show that the edge imbalance integrals $i_\mathsf{T}(e)$ are zero if and only if characterization 3) holds. For the first implication note that for edge imbalance integrals to be zero (recall that the functions inside the integral are continuous and non-negative) all node and edge subdivision imbalance values have to be zero. This includes the node imbalance values of the target nodes $v$ of the edges $(p(v),v)$, i.e., all nodes (except the root) have a node imbalance value of $0$.\\
For the other implication, let $\mathsf{T}=((V,E),w)$ be a rooted 3D tree with $i_{\mathsf{T},e_v}(v)=0$ \quad $\forall \ v\in\mathring{V}\setminus\{\rho\}$. Since all leaves have a minimal node imbalance value as well, this is equivalent to $i(v)=0$ for all edges $e=(p,v)\in E$. Now, given any such edge $e=(p,v)\in E$, we show that the node imbalance value for any edge subdividing node $s_x=v+(p-v)\cdot x$ with $x\in (0,1)$ is zero.\\
For $i=\alpha$ or $\mu$ we know $i(v)=0$ if and only if $\mathcal{C}(\mathsf{T}_v)$ is on the infinite line running through $p$ and $v$. Since the $\mathcal{C}(\mathsf{T}_{s_x})$ is only $\mathcal{C}(\mathsf{T}_v)$ shifted on this exact line towards $\frac{s_x+v}{2}$ (see Corollary \ref{cor:recursions_subdiv}), we can conclude $i(s_x)=0$ as well.
For $i=\mathcal{A}$ or $i=\mathcal{M}$ we know $i(v)=0$ if and only if $\mathcal{C}(\mathsf{T}_v)$ is on the infinite line running through $p$ and $v$ and located \enquote{directly behind} $v$ from the perspective of $p$. Analogous to the case of $\alpha$ and $\mu$, we can argue that $\mathcal{C}(\mathsf{T}_{s_x})$ is located on this line. However, we now have to prove that it also lies \enquote{behind} $s_x$ from the perspective of $p$. Exploiting the centroid recursion of Corollary \ref{cor:recursions_subdiv} again, we know that $\mathcal{C}(\mathsf{T}_{s_x})$ lies somewhere between $\mathcal{C}(\mathsf{T}_v)$, which is located \enquote{directly behind} $v$ and thus also \enquote{directly behind} $s_x$, and $\mathcal{C}((s_x,v))$, the middle between $v$ and $s_x$, which also lies \enquote{directly behind} $s_x$. Thus, the same applies to $\mathcal{C}(\mathsf{T}_{s_x})$. Therefore, we can again conclude $i(s_x)=0$.\\
Since all edge subdivisions only produce imbalance values of zero, all edge imbalance integrals are also zero, which completes the proof.
\end{proof}

With the help of Lemma \ref{lem:robust_node_treechange} as well as Theorem \ref{thm:min_imbalindices} we are now in the position to prove Theorem \ref{thm:indices_great}:

\begingroup
\def\thetheorem{\ref{thm:indices_great}}
\begin{theorem}
$\widetilde{\mathcal{A}}^{w}$, $\widetilde{\alpha}^{w}$, $\widetilde{\mathcal{M}}^{w}$, and $\widetilde{\mu}^{w}$ meet all of the following properties: robustness to tree shifting, (horizontally) rotating and mirroring, and resizing (Definition \ref{def:robust_treechange}), robustness to edge subdivision (Definition \ref{def:robust_edgesubdiv}), proportionality to weight (Definition \ref{def:proportional}), local proportionality to weight (Definition \ref{def:robust_local}), sensitivity to linearity (Definition \ref{def:sensitive_long_edge}) as well as sensitivity to node imbalance (Definition \ref{def:sensitive_node_imbal}).\\
Analogously, $\widetilde{\mathcal{A}}^{\ell}$, $\widetilde{\alpha}^{\ell}$, $\widetilde{\mathcal{M}}^{\ell}$, and $\widetilde{\mu}^{\ell}$ meet all of the same properties, except for proportionality to length (Definition \ref{def:proportional}) and local proportionality to length (Definition \ref{def:robust_local}) instead of proportionality to weight.
\end{theorem}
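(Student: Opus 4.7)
The plan is to verify the six properties one at a time, exploiting the fact that all eight indices share the same structure: a weighted mean (with weights $w(e_v)$ or $\ell(e_v)$) of the edge imbalance integrals $i_\mathsf{T}(e_v)$. By Corollary \ref{cor:recursions_subdiv} each integrand depends only on $\mathcal{C}(\mathsf{T}_v)$, $w(\mathsf{T}_v)$, $w(e_v)$, and the geometry of $e_v$. I will argue the $w$-case; the $\ell$-case follows by replacing $w(e)$ with $\ell(e)$ everywhere, since the integrands themselves do not depend on the choice of weighting scheme.

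For robustness to shifting, rotating, mirroring, and resizing (Definition \ref{def:robust_treechange}) I would invoke Lemma \ref{lem:robust_node_treechange}: every integrand is preserved, and for resizing the rescaled edge weights in the numerator cancel against the rescaled $w(\mathsf{T})$ in the denominator. Because the lemma covers arbitrary rotations and mirrorings (not only horizontal ones), Remark \ref{rem:root_edge} is justified simultaneously. For robustness to edge subdivision (Definition \ref{def:robust_edgesubdiv}), I would split $e=(p(v),v)$ at $s_x$ into colinear edges $e_1,e_2$ of weights $(1-x)w(e)$ and $x\,w(e)$, and then prove
\[w(e_1)\,i_\mathsf{T}(e_1) + w(e_2)\,i_\mathsf{T}(e_2) = w(e)\,i_\mathsf{T}(e)\]
by a linear change of variables along each subedge. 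The key point is that the inserted vertex $s_x$ carries no weight, so the pending-subtree centroid and weight at any interior point of $e$ are unchanged by the subdivision (Corollary \ref{cor:recursions_subdiv}); hence the two new integrands stitch together to the original. All other edges and the total $w(\mathsf{T})$ are unaffected, so the index is preserved.

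Proportionality to weight (Definition \ref{def:proportional}) is immediate from the definition, since identifying the roots $\rho_1,\ldots,\rho_k$ leaves every pending subtree of a non-root vertex untouched; the defining sum for $\widetilde{I}^w(\mathsf{T})$ then splits into the prescribed weighted combination over the $\mathsf{T}_i$. For local proportionality (Definition \ref{def:robust_local}), the critical observation is that replacing $\mathsf{T}_v$ with $\mathsf{T}'_v$ having the same root, centroid, and weight preserves $\mathcal{C}(\mathsf{T}_u)$ and $w(\mathsf{T}_u)$ for every ancestor $u$ of $v$ (by iterating Proposition \ref{prop:recursions} upward); hence $i_\mathsf{T}(e_u)$ is unchanged. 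The integrand along $e_v$ itself is likewise unchanged, because it depends only on $\mathcal{C}(\mathsf{T}_v)$, $w(\mathsf{T}_v)$, and $w(e_v)$. Only the contributions of edges strictly inside $\mathsf{T}_v$ versus $\mathsf{T}'_v$ change, and recombining them yields the claimed scaling factor $w(\mathsf{T}_v)/w(\mathsf{T})$.

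Sensitivity to node imbalance (Definition \ref{def:sensitive_node_imbal}) follows at once from Theorem \ref{thm:min_imbalindices} together with continuity and non-negativity of the integrands: the weighted mean vanishes iff every integrand vanishes identically, i.e., iff $i(v)=0$ for every $v\neq\rho$ and $i(s_x)=0$ for every edge-subdivision point. For sensitivity to linearity (Definition \ref{def:sensitive_long_edge}), on a single-edge tree $\mathsf{T}_{(\rho,l)}$ every pending-subtree centroid lies on the edge, so the integrand is identically zero. For the limit $\lambda\to\infty$, the elongated edge $e^\lambda$ has weight $(1+\lambda)w(e)\to\infty$, and for every fixed $x\in(0,1]$ Corollary \ref{cor:recursions_subdiv} shows $\mathcal{C}(\mathsf{T}_{s_x})$ converges to the midpoint of $[s_x, v^{\text{new}}]$, which lies on the line of $e^\lambda$; hence $i(s_x)\to 0$ pointwise and dominated convergence (with constant dominator $\pi$, $\pi/2$, $2$, or $1$) gives $i_\mathsf{T}(e^\lambda)\to 0$. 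Since every other edge weight stays bounded while the normalizer $w(\mathsf{T}^{\nearrow\lambda\cdot e})$ diverges, the contribution of all other edges is asymptotically negligible and the index tends to $0$. The main obstacles are the subdivision invariance, where one must carefully carry out the change of variables and check that the weightless inserted vertex leaves every relevant centroid intact, and the linearity limit, which combines a delicate pointwise vanishing argument on the dominant edge with the size comparison needed to kill all remaining terms; the other properties are routine once the weighted-mean decomposition is in hand.
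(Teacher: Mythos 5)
Your proposal is correct and follows essentially the same route as the paper's proof: the same six-way case split, the same reliance on Lemma \ref{lem:robust_node_treechange} for the rigid motions, the splitting of the edge integral (using that the inserted vertex is weightless and the widths are preserved) for subdivision invariance, the weighted-mean decomposition for the two proportionality properties, and Theorem \ref{thm:min_imbalindices} plus continuity for sensitivity to node imbalance. The only cosmetic difference is in the linearity argument, where you package the pointwise vanishing of the integrand on $(0,1]$ via dominated convergence, whereas the paper runs an explicit $\varepsilon$-argument with a quantitative bound $d(\mathcal{C}(\mathsf{T}_u),g_{v,p(v)})<\tfrac{1}{s+1}\,d(\mathcal{C}(\mathsf{T}_v),g_{v,p(v)})$ on the tail of the elongated edge; both rest on the same observation that the elongated edge's weight dominates and pulls the centroid onto the reference line.
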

\addtocounter{theorem}{-1}
\endgroup

\begin{proof} 
Let $\mathsf{T}=((V,E),w)$ be a rooted 3D tree. As the single node is a trivial case, we consider only trees $\mathsf{T}$ which contain at least one edge. We will show that all eight indices meet their respective properties. For this, let $\widetilde{I}$ denote any of the eight integral-based 3D imbalance indices and similarly let $\widetilde{I}^\ell$ denote any such length- and $\widetilde{I}^w$ any such edge weight-weighted index. If not otherwise indicated, $i$ denotes the corresponding node imbalance statistic, i.e., $i=\mathcal{A}$, $\alpha$, $\mathcal{M}$, and $\mu$, for a given 3D imbalance index.

\emph{a) Robustness to tree shifting, (horizontally) rotating and mirroring, and resizing:} Let $\mathsf{T}'=((V',T'),w')$ be a tree that can be obtained by shifting, rotating, mirroring, or resizing $\mathsf{T}$. We will show that the imbalance indices are robust to every single operation, i.e., shifting, rotating, mirroring, and resizing, which entails that this also holds for any combination of such.\\
In Lemma \ref{lem:robust_node_treechange} we already established that the individual node imbalance values of edge subdivisions are robust to these operations, i.e., $i_{\mathsf{T},e}(s_x)=i_{\mathsf{T}',e'}(s_x')$ for $i=\mathcal{A}$, $\alpha$, $\mathcal{M}$, and $\mu$. This directly leads to the fact that the edge imbalance $i_{\mathsf{T}}(e)=i_{\mathsf{T}'}(e')$ for any $e\in E$ and corresponding $e' \in E'$ and $i=\mathcal{A}$, $\alpha$, $\mathcal{M}$, and $\mu$ is also robust to these operations. Furthermore, we know that the edge lengths and weights do not change at all for shifting, rotating, and mirroring, and their proportions do not change when resizing the tree. Therefore, $\widetilde{I}(\mathsf{T})=\widetilde{I}(\mathsf{T}')$.

\emph{b) Robustness to edge subdivision:} Let $\mathsf{T}^{\dot{e}}$ be the tree that arises from subdividing an edge $e=(p(v),v)\in E(\mathsf{T})$ with a new node $s_x=p(v)+(v-p(v)) \cdot x^\ast$ with $x^\ast \in (0,1)$ in $\mathsf{T}$. Recall that for this, we delete the edge $e_v$, instead add $e_1=(p(v),s_x)$ and $e_2=(s_x,v)$ and set their weights to $w(e_1)=w(e)\cdot(1-x^\ast)$ and $w(e_2)=w(e)\cdot x^\ast$. All other parts remain unchanged, i.e., all other edge imbalance integrals and their weights and lengths, the total edge length and edge weight as well. Furthermore, it is arbitrary which reference edge we choose for the node imbalance of a subdivision node $s$ on $e$, i.e., $i_{T,e}(s)=i_{T^{\dot{e}},e_1}(s)=i_{T^{\dot{e}},e_2}(s)$, because the infinite (reference) lines running through these edges are equal. Thus, we have
\begin{align*}
\widetilde{I}^\ell(\mathsf{T})-\widetilde{I}^\ell(\mathsf{T}^{\dot{e}})=& \frac{1}{\ell(\mathsf{T})} \cdot \!\sum_{v\in V(\mathsf{T})\setminus\{\rho\}}{\!\ell_{\mathsf{T}}((p(v),v)) \cdot i_{\mathsf{T}}((p(v),v)) }- \frac{1}{\ell(\mathsf{T}^{\dot{e}})} \cdot \!\sum_{v\in V(\mathsf{T}^{\dot{e}})\setminus\{\rho^{\dot{e}}\}}{\!\ell_{\mathsf{T}^{\dot{e}}}((p(v),v)) \cdot i_{\mathsf{T}^{\dot{e}}}((p(v),v))}\\
=&\frac{1}{\ell(\mathsf{T})} \cdot \left(\ell_{\mathsf{T}}(e)\cdot i_{\mathsf{T}}(e) - \left(\ell_{\mathsf{T}^{\dot{e}}}(e_1)\cdot i_{\mathsf{T}^{\dot{e}}}(e_1)+\ell_{\mathsf{T}^{\dot{e}}}(e_2)\cdot i_{\mathsf{T}^{\dot{e}}}(e_2)\right)\right)\\
=&\frac{1}{\ell(\mathsf{T})} \cdot \left(\ell(e)\cdot i_{\mathsf{T}}(e) - \left(\ell(e_1)\cdot i_{\mathsf{T}^{\dot{e}}}(e_1)+\ell(e_2)\cdot i_{\mathsf{T}^{\dot{e}}}(e_2)\right)\right)\\
=&\frac{1}{\ell(\mathsf{T})} \cdot \left( \ell(e)\cdot \frac{1}{\ell(e)} \int \limits_{\gamma_e }\!i_{\mathsf{T},e}(s)\,\mathrm{d}s 
- \left(\ell(e_1)\cdot \frac{1}{\ell(e_1)} \int \limits_{\gamma_{e_1} }\!i_{\mathsf{T}^{\dot{e}},e_1}(s)\,\mathrm {d} s+\ell(e_1)\cdot \frac{1}{\ell(e_1)} \int \limits_{\gamma_{e_2} }\!i_{\mathsf{T}^{\dot{e}},e_2}(s)\,\mathrm {d} s\right)\right)\\
=&\frac{1}{\ell(\mathsf{T})} \cdot \left(\left(\int \limits_{\gamma_{e_1} }\!i_{\mathsf{T},e_1}(s)\,\mathrm {d} s + \int \limits_{\gamma_{e_2} }\!i_{\mathsf{T},e_2}(s)\,\mathrm {d} s \right) -\left(\int \limits_{\gamma_{e_1} }\!i_{\mathsf{T},e_1}(s)\,\mathrm {d} s + \int \limits_{\gamma_{e_2} }\!i_{\mathsf{T},e_2}(s)\,\mathrm {d} s \right)\right) = 0. 
\end{align*}
Therefore, $\widetilde{I}^\ell(\mathsf{T})=\widetilde{I}^\ell(\mathsf{T}^{\dot{e}})$. Analogously, it can be shown that $\widetilde{I}^w(\mathsf{T})=\widetilde{I}^w(\mathsf{T}^{\dot{e}})$.

\emph{c) Proportionality to length or weight:}
Let $\mathsf{T}_1$, $\mathsf{T}_2$,..., $\mathsf{T}_k$ with $k\in \mathbb{N}_{\geq1}$ be rooted 3D trees with the same root coordinates such that $\mathsf{T}$ is the rooted 3D tree that arises from joining all $k$ rooted 3D trees by identifying their roots $\rho_1$, $\rho_2$ up to $\rho_k$. We know that the edge imbalance integrals are only dependent on their corresponding edge as well as its pending subtree, thus we know that for any $s,t \in \{1,...,k\}$ with $s\neq t$ the edge imbalance integrals in subtree $\mathsf{T}_s$ of $\mathsf{T}$ are independent of $\mathsf{T}_t$. Now, the proportionality follows from the indices' definition as a weighted mean over these edge imbalance integrals:
\begin{align*}
\widetilde{I}^\ell(\mathsf{T})=&\frac{1}{\sum\limits_{v\in V\setminus\{\rho\}}{\ell((p(v),v))}} \sum_{v\in V\setminus\{\rho\}}{\left(\ell((p(v),v)) \cdot i_{\mathsf{T}}((p(v),v)) \right)}\\
=&\frac{1}{\sum\limits_{t=1}^{k}{\sum\limits_{v\in V(\mathsf{T}_t)\setminus\{\rho\}}}{\ell((p(v),v))}} \sum\limits_{t=1}^{k}{\left(\frac{\ell\left(\mathsf{T}_t\right)}{\ell\left(\mathsf{T}_t\right)} 
\sum_{v\in V(\mathsf{T}_t)\setminus\{\rho\}}{\left(\ell((p(v),v))\cdot i_{\mathsf{T}_t}((p(v),v))  \right)}\right)}\\
=&\frac{1}{\sum\limits_{t=1}^{k}{\ell\left(\mathsf{T}_t\right)}} \cdot \sum_{t=1}^{k}{\ell\left(\mathsf{T}_t\right)\widetilde{I}^\ell\left(\mathsf{T}_t\right).}
\end{align*}
Analogously, it can be shown that $\widetilde{I}^w(\mathsf{T})=\frac{1}{\sum\limits_{t=1}^{k}{w\left(\mathsf{T}_t\right)}} \cdot \displaystyle\sum_{t=1}^{k}{w\left(\mathsf{T}_t\right)\widetilde{I}^w\left(\mathsf{T}_t\right)}$.

\emph{d) Local proportionality to length or weight:} For the given rooted 3D tree $\mathsf{T}=((V,E),w)$ and one of its nodes $v \in V(\mathsf{T})$ let $\mathsf{T}'=((V',E'),w')$ be the tree that arises by exchanging $\mathsf{T}_v$ with a rooted 3D tree $\mathsf{T}_v'$ which has the same centroid, the same root $v$ as $\mathsf{T}_v$ as well as the same weight $w(\mathsf{T}_v)=w(\mathsf{T}_v')$, which implies $w(\mathsf{T})=w(\mathsf{T}')$. Since the edge $(p(v),v)$, $\mathcal{C}(\mathsf{T}'_v)$ and $w(\mathsf{T}_v')$ remain unchanged, we can again argue that only the edge imbalance integrals for $e_u=(p(u),u)\in E$ with $u\in V(\mathsf{T}_v)\setminus\{v\}$ are affected by this exchange because they are exchanged with the corresponding integrals for $e_{u'}'=(p(u'),u') \in E'$ with $u'\in V(\mathsf{T}'_v)\setminus\{v\}$, and all other edge imbalance integrals, specifically the one belonging to $e_v=(p(v),v)$, are unaffected. Let furthermore $\ell(\mathsf{T}_v)=\ell(\mathsf{T}_v')$, which implies $\ell(\mathsf{T})=\ell(\mathsf{T}')$. Now, we have 
\begin{align*}
\left\vert \widetilde{I}^\ell(\mathsf{T})- \widetilde{I}^\ell(\mathsf{T}')\right\vert  =& \left\vert{\frac{1}{\ell(\mathsf{T})} \sum_{u\in V(\mathsf{T})\setminus\{\rho\}}{\left(\ell(e_u) \cdot i_{\mathsf{T}}(e_u) \right)} -\frac{1}{\ell(\mathsf{T}')} \sum_{u'\in V(\mathsf{T}')\setminus\{\rho\}}{\left(\ell(e_{u'}') \cdot i_{\mathsf{T}'}(e_{u'}') \right)} }\right\vert\\
=& \frac{1}{\ell(\mathsf{T})} \left\vert  
\sum_{u\in V(\mathsf{T}_v)\setminus\{v\}}{\left(\ell(e_u) \cdot i_{\mathsf{T}}(e_u) \right)} -  \sum_{u'\in V(\mathsf{T}'_v)\setminus\{v\}}{\left(\ell(e_{u'}') \cdot i_{\mathsf{T}'}(e_{u'}') \right)}  \right\vert\\
=& \frac{1}{\ell(\mathsf{T})} \left\vert \frac{\ell(\mathsf{T}_v)}{\ell(\mathsf{T}_v)} 
\sum_{u\in V(\mathsf{T}_v)\setminus\{v\}}{\left(\ell(e_u) \cdot i_{\mathsf{T}}(e_u) \right)} - \frac{\ell(\mathsf{T}'_v)}{\ell(\mathsf{T}'_v)} \sum_{u'\in V(\mathsf{T}'_v)\setminus\{v\}}{\left(\ell(e_{u'}') \cdot i_{\mathsf{T}'}(e_{u'}') \right)}  \right\vert\\
=&\frac{\ell(\mathsf{T}_v)}{\ell(\mathsf{T})}\cdot\left\vert \widetilde{I}^\ell(\mathsf{T}_v)- \widetilde{I}^\ell(\mathsf{T}'_v)\right\vert.
\end{align*}
Analogously, we can show $\left\vert \widetilde{I}^w(\mathsf{T})- \widetilde{I}^w(\mathsf{T}')\right\vert = \frac{w(\mathsf{T}_v)}{w(\mathsf{T})}\cdot\left\vert \widetilde{I}^w(\mathsf{T}_v)- \widetilde{I}^w(\mathsf{T}'_v)\right\vert$ (albeit without requiring $\ell(\mathsf{T}_v)=\ell(\mathsf{T}_v')$).

\emph{e) Sensitivity to linearity:} From Theorem \ref{thm:min_imbalindices} we already know that all eight 3D imbalance indices assess single edges or pending edges as balanced, i.e., they are assigned the index value $0$. An infinite elongation of a pending edge would thus also imply that the total imbalance index of a tree converges to zero as the pending edge's weight and thus its proportion in the imbalance index would grow infinitely as well. This is why we now turn our attention to interior edges.\\
Let $e=(p(v),v) \in \mathring{E}(\mathsf{T})$ be one of the interior edges in our given rooted 3D tree $\mathsf{T}=((V,E),w)$ and let $\mathsf{T}^{\nearrow \lambda \cdot e}$ denote the tree that arises from elongating the single edge $e$ with the factor $\lambda \in \mathbb{R}_{>1}$ by adding  $(v-p(v))\cdot \lambda $ to all nodes in $T_v$ and setting its weight to $\lambda w(e)$. Since all eight 3D imbalance indices are based on a weighted mean over the edge imbalance integrals, we already know that by elongating a single edge infinitely (and thereby increasing its weight infinitely as well), the total index value will converge towards this specific edge imbalance integral. Thus, it remains to show that this edge imbalance integral will converge to zero for $\lambda  \to \infty$. Similar to Subsection \emph{b) Robustness to edge subdivision} of this proof, where we split the integral, we can argue that the edge imbalance value is a weighted mean of the imbalance value over its subsections, where each subsection has its length as weight. Thus, as the edge will elongate infinitely, we already know that we can ignore the imbalance of any bounded section of $e$, e.g., the first section near $v$. We will now specify this bounded region as well as a region for which we know that it only contains nodes with an imbalance value lower than a certain threshold.\\
For this, let $\lambda^\ast=\frac{w(\mathsf{T}_v)}{w(e)}$ be the elongation factor which makes the elongated edge weigh exactly as much as the pending subtree $\mathsf{T}_v$ and let $\ell^\ast=\ell(e)\cdot\lambda^\ast$ denote the corresponding edge length. We now elongate $e$ infinitely further. 

\begin{figure}[ht]
	\centering
	\begin{tikzpicture}
	\tikzset{std/.style = {shape=circle, draw, fill=white, minimum size = 0.8cm, scale=1}}
	\tikzset{dot/.style = {shape=circle, draw, fill=black, minimum size = 0.2cm, scale=0.6}}
    \pgfdeclarehorizontalshading{grad2}{1in}{
        color(0cm)=(black!20!white);
        color(1.5cm)=(white);
        color(2cm)=(white)}
    \draw[draw=none,shading=grad2,shading angle=0] (0,-2) rectangle (3.5,4);
	\node[gray, right] at (0,3.7) {\enquote{behind} $u$ $\longrightarrow$};
    
	\node[std] (p) at (-7,0) {$p(v)$};
	\coordinate (temp) at (-5.5,0) {};
	\node[std] (u) at (0,0) {$u$};
	\node[std] (v) at (5,0) {$v$};
	\node[dot, scale=0.9] (c) at (1,2) {};
	\node[above] at (1,2.1) {\hspace{0.6cm}\large$\mathcal{C}(\mathsf{T}_v)$};
	\node[dot] (e) at (2.5,0) {};
	\node[below] at (2.5,-0.1) {\large$\mathcal{C}((u,v))$};
	\coordinate (a) at (-2.63,4.1) {};
	\coordinate (b) at (-2.9,3.68) {};
	\node (Tv) at (-1.3,3.7) {\large$\mathsf{T}_v$};
	\node[dot, scale=0.9] (cu) at (2,0.6667) {};
	\node[right] at (2,0.66667) {\large$\mathcal{C}(\mathsf{T}_u)$};
	
	\path[dotted, line width=1.4mm] (p) edge node {} (temp);
	\path[-, line width=1.4mm] (temp) edge node {} (u);
	\path[-, line width=1.4mm] (u) edge node {} (v);
	\path[-, line width=0.4mm] (v) edge node {} (a);
	\path[-, line width=0.4mm] (v) edge (b);
	\path[-, line width=0.4mm] (a) edge (b);
    \draw[<->, line width=0.3mm] (c)--(1,0.03) node [midway, left] (TextNode) {$d$};
    \draw[<->, line width=0.3mm] (cu)--(2,0.03) node [midway, left] (TextNode) {$\frac{1}{3}d$\hspace{0.2cm}};
	\path[dotted, thin] (c) edge node {} (e);
	
    \draw[decoration={brace,mirror,raise=2pt},decorate]
    (0,-0.7) -- node[below=6pt] {$\ell^\ast$} (2.5,-0.7);
    \draw[decoration={brace,mirror,raise=2pt},decorate]
    (2.5,-0.7) -- node[below=6pt] {$\ell^\ast$} (5,-0.7);
    \draw[decoration={brace,mirror,raise=2pt},decorate]
    (0,-1.2) -- node[below=6pt] {$2 \cdot w(\mathsf{T}_v)$} (5,-1.2);
    \end{tikzpicture}
	\caption{Example visualizing the relation of $u$ and $\mathcal{C}(\mathsf{T}_u)$ for $s=2$ with $d(u,v)=2\ell^\ast$ and $d(u,v)> d(\mathcal{C}(\mathsf{T}_v),v)$. From a mechanical or structural engineering perspective on the position of the centroid $\mathcal{C}(\mathsf{T}_u)$, we have double the force which pulls $\mathcal{C}(\mathsf{T}_u)$ towards $\mathcal{C}((u,v))$ than the force $w(\mathsf{T}_v)$ which pulls it away, namely towards $\mathcal{C}(\mathsf{T}_v)$. In other words, only one of three equal forces draws the centroid $\mathcal{C}(\mathsf{T}_u)$ away from $\mathcal{C}((u,v))$ on the line $g_{u,v}$. Because of the intercept theorem, this results in $\mathcal{C}(\mathsf{T}_u)$ having a distance to the line of $\frac{1}{3}$ times $d=d(\mathcal{C}(\mathsf{T}_v), g_{v,p(v)})$.}
	 \label{fig:proof_sens_straight}
\end{figure}
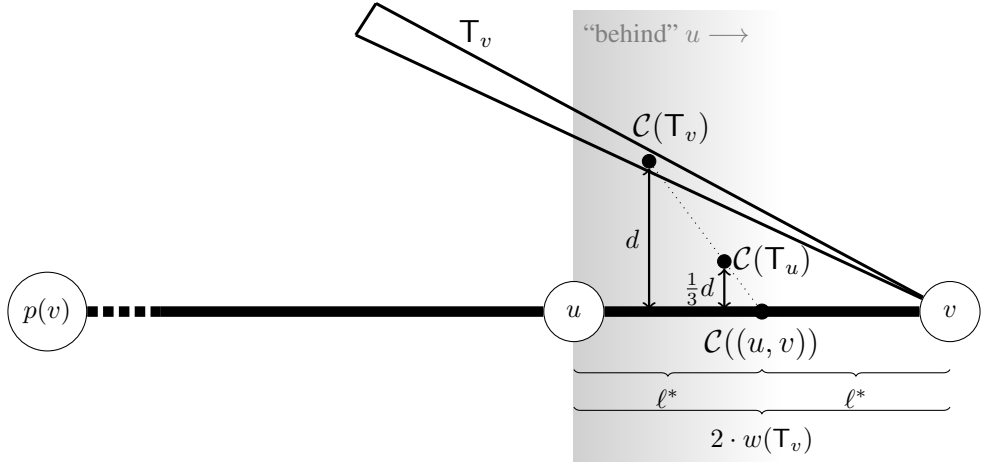

Let $u$ be an edge subdivision with $d(u,v)> d(\mathcal{C}(\mathsf{T}_v),v)$ and $d(u,v)\geq s \cdot \ell^\ast$ for an $s\in \mathbb{N}_{\geq 1}$. Then, we know the following about the node imbalance of $u$ (see Figure \ref{fig:proof_sens_straight}): Firstly, we have $\mathcal{A}(u)<\frac{\pi}{2}$ (and thus also $\alpha(u)<\frac{\pi}{2}$ and $\mu(u), \mathcal{M}(u)<1$) because $\mathcal{C}(\mathsf{T}_u)$ lies somewhere between $\mathcal{C}(\mathsf{T}_v)$ and $\frac{u+v}{2}$. This implies that $\mathcal{A}(u)=\alpha(u)=\arcsin{(\mu(u))}=\arcsin{(\mathcal{M}(u))}=\arcsin{\left(\frac{d(\mathcal{C}(\mathsf{T}_u), g_{u,p(u)})}{d(\mathcal{C}(\mathsf{T}_u),u)}\right)}$. Secondly, we know $d(\mathcal{C}(\mathsf{T}_u), g_{u,p(u)})=d(\mathcal{C}(\mathsf{T}_u), g_{v,p(v)}) < \frac{1}{s+1} \cdot d(\mathcal{C}(\mathsf{T}_v), g_{v,p(v)})$ as the edge $(u,v)$ draws the centroid $\mathcal{C}(\mathsf{T}_u)$ towards the line $g_{v,p(v)}$ with a weight of at least $s$-times $w(\mathsf{T}_v)$. \\
We can observe that the numerator of $\mu(u)=\frac{d(\mathcal{C}(\mathsf{T}_u), g_{u,p(u)})}{d(\mathcal{C}(\mathsf{T}_u),u)}$ is decreasing and approaching $0$ and the denominator increasing (towards infinity) the farther $u$ is placed from $v$. Thus, we can conclude that $\displaystyle\lim_{d(u,v) \to \infty}i(u)=0$ because $\displaystyle\lim_{d(u,v) \to \infty}\mu(u)=\lim_{d(u,v) \to \infty}\mathcal{M}(u)=0$ and thus also $\displaystyle\lim_{d(u,v) \to \infty}\mathcal{A}(u)=\lim_{d(u,v) \to \infty}\alpha(u)=\lim_{d(u,v) \to \infty}\arcsin(\mu(u))=0$. This entails that the edge imbalance integral converges to zero, because given any $\varepsilon>0$ we can find an edge subdivision $u^\ast$ such that $\vert i(u)-0 \vert =i(u)<\varepsilon$ for all edge subdivisions $u$ with $d(u,v)>d(u^\ast,v)$. In other words, we can find a section that has imbalance values that are as low as we wish, and this section has a proportionally infinitely big impact on the final edge imbalance integral. This results in the desired convergence:
\begin{align*}
\lim_{\lambda  \to \infty}\widetilde{I}(\mathsf{T}^{\nearrow \lambda \cdot e})= \lim_{\lambda  \to \infty} i_{\mathsf{T}^{\nearrow \lambda \cdot e}}(e) = 0
\end{align*}

\emph{f) Sensitivity to node imbalance:} From Theorem \ref{thm:min_imbalindices} we know that all eight 3D imbalance indices are zero if all possible nodes and edge subdivisions have a node imbalance of zero. The other direction holds as well, since all node imbalance statistics $i=\mathcal{A}$, $\alpha$, $\mathcal{M}$, and $\mu$ are continuous, which implies that for any node or edge subdivision $u$ in $\mathsf{T}$ with $i(u)>0$ the corresponding edge imbalance integral would also be larger than $0$. This in combination with the fact that all edges have a positive length and weight leads to $\widetilde{I}(\mathsf{T})>0$. \\
With this final property shown, this proof is now complete.
\end{proof}

Now, we will provide the proof of Proposition \ref{prop:recursions} which yielded recursions to compute the weight and centroid of a rooted 3D tree $\mathsf{T}$ from the weights and centroids of its maximal pending subtrees $\mathsf{T}_1,\ldots,\mathsf{T}_k$. 

\begingroup
\def\thetheorem{\ref{prop:recursions}}
\begin{proposition}
Let $\mathsf{T}=((V,E),w)$ be a rooted 3D tree. If $\mathsf{T}$ consists of only one vertex $\rho$, we have weight $w(\mathsf{T})=0$ and centroid $\mathcal{C}(\mathsf{T})=\rho$. If $\mathsf{T}$ consists of only one edge $(\rho,v)$, we have $w(\mathsf{T})=w(((\rho,v)))$ and $\mathcal{C}(\mathsf{T})=\frac{\rho+v}{2}$. If $\mathsf{T}$ consists of more than one vertex, let $\mathsf{T}_1,\ldots,\mathsf{T}_k$ be the maximal pending subtrees rooted at the children $v_1,\ldots,v_k$ of $\rho$ with $k \in \mathbb{N}_{\geq 1}$ and let $e_1,\ldots,e_k$ be the edges $(\rho,v_1),\ldots,(\rho,v_k)$. Then, the weight and centroid of $\mathsf{T}$ can be computed using the following recursions:
\begin{align*}
    w(\mathsf{T})&= \sum_{j=1}^{k} {\left(w(\mathsf{T}_j)+w(e_j)\right)} \qquad\quad \text{and} &
    \mathcal{C}(\mathsf{T})&=  \frac{ \displaystyle\sum_{j=1}^{k} {\left( w(\mathsf{T}_j)\cdot \mathcal{C}(\mathsf{T}_j) \ + \ w(e_j) \cdot \mathcal{C}(e_j)\right)}} {\displaystyle\sum_{j=1}^{k} {\left(w(\mathsf{T}_j)+w(e_j)\right)}}. 
\end{align*}
In particular, the recursions are independent of the order of the subtrees $\mathsf{T}_1,\ldots,\mathsf{T}_k$.
\end{proposition}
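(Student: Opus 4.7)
}

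The plan is a direct verification from the definitions of $w(\mathsf{T})$ and $\mathcal{C}(\mathsf{T})$, using a partition of the edge set $E(\mathsf{T})$ according to the maximal pending subtrees. The two base cases are immediate: if $\vert V\vert=1$, then $E(\mathsf{T})=\emptyset$, so $w(\mathsf{T})=\sum_{e\in\emptyset}w(e)=0$, and the centroid reduces to $\rho$ by convention (the tree is just its root). If $\mathsf{T}$ has a single edge $(\rho,v)$, then both $w(\mathsf{T})=w((\rho,v))$ and $\mathcal{C}(\mathsf{T})=\mathcal{C}((\rho,v))=\tfrac{1}{2}(\rho+v)$ follow from the definitions.

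For the inductive step, I would observe that the edge set decomposes disjointly as
\[
E(\mathsf{T})=\{e_1,\ldots,e_k\}\ \dot\cup\ \bigsqcup_{j=1}^{k} E(\mathsf{T}_j),
\]
because every edge of $\mathsf{T}$ either emanates from the root (one of the $e_j$) or lies entirely inside exactly one maximal pending subtree $\mathsf{T}_j$. Summing edge weights over this partition yields
\[
w(\mathsf{T})=\sum_{e\in E(\mathsf{T})}w(e)=\sum_{j=1}^{k}w(e_j)+\sum_{j=1}^{k}\sum_{e\in E(\mathsf{T}_j)}w(e)=\sum_{j=1}^{k}\bigl(w(e_j)+w(\mathsf{T}_j)\bigr),
\]
which is the first recursion.

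For the centroid, plugging the same partition into the definition gives
\[
\mathcal{C}(\mathsf{T})=\frac{1}{w(\mathsf{T})}\left(\sum_{j=1}^{k}w(e_j)\,\mathcal{C}(e_j)+\sum_{j=1}^{k}\sum_{e\in E(\mathsf{T}_j)}w(e)\,\mathcal{C}(e)\right).
\]
For each $j$ with $\vert V(\mathsf{T}_j)\vert\geq 2$, the inner sum equals $w(\mathsf{T}_j)\,\mathcal{C}(\mathsf{T}_j)$ directly from the definition of $\mathcal{C}(\mathsf{T}_j)$; for each $j$ with $\vert V(\mathsf{T}_j)\vert=1$ (i.e.\ $v_j$ is a leaf), the inner sum is empty and so is $w(\mathsf{T}_j)\,\mathcal{C}(\mathsf{T}_j)=0\cdot v_j=0$, so the identity still holds. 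Combining with the already-proven weight recursion yields the stated formula. The independence from the ordering of $\mathsf{T}_1,\ldots,\mathsf{T}_k$ is then immediate from the commutativity of addition in both $\mathbb{R}$ and $\mathbb{R}^3$.

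The only subtlety, and thus the only place where care is needed, is the boundary case of a maximal pending subtree consisting of a single leaf $v_j$: here one has to interpret the term $w(\mathsf{T}_j)\,\mathcal{C}(\mathsf{T}_j)$ as the zero vector (since $w(\mathsf{T}_j)=0$), which is consistent with the edge-sum rewriting above; no other cases pose difficulties.
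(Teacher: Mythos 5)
Your proof is correct and follows essentially the same route as the paper's: partition $E(\mathsf{T})$ into the root edges $e_1,\ldots,e_k$ and the edge sets of the maximal pending subtrees, then split the defining sums for $w(\mathsf{T})$ and $\mathcal{C}(\mathsf{T})$ accordingly. If anything, your explicit treatment of the boundary case where some $\mathsf{T}_j$ is a single leaf (so $w(\mathsf{T}_j)=0$ and the term $w(\mathsf{T}_j)\,\mathcal{C}(\mathsf{T}_j)$ must be read as the zero vector) is slightly more careful than the paper's argument, which rewrites each inner sum by multiplying with $\tfrac{w(\mathsf{T}_j)}{w(\mathsf{T}_j)}$ and thus tacitly assumes $w(\mathsf{T}_j)>0$.
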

\addtocounter{theorem}{-1}
\endgroup

\begin{proof}
The weight of a rooted 3D tree $\mathsf{T}$ is defined as $w(\mathsf{T})=\sum\limits_{e\in E(\mathsf{T})} w(e)$ and by splitting the sum we get
\[w(\mathsf{T})=\sum\limits_{e\in E(\mathsf{T}_1)} w(e)+\ldots+\sum\limits_{e\in E(\mathsf{T}_k)} w(e) +\sum\limits_{j=1}^{k} w(e_j)
    =\sum_{j=1}^{k} {\left(w(\mathsf{T}_j)+w(e_j)\right)}.\]
Similarly, the centroid formula $\mathcal{C}(\mathsf{T})=\frac{\sum\limits_{e\in E(\mathsf{T})} \mathcal{C}(e)w(e)}{\sum\limits_{e\in E(\mathsf{T})} w(e)}$ can be transformed into
\begin{align*}
    \mathcal{C}(\mathsf{T})&=\frac{\sum\limits_{e\in E(\mathsf{T}_1)}{\mathcal{C}(e) w(e)}+\ldots+\sum\limits_{e\in E(\mathsf{T}_k)}{\mathcal{C}(e) w(e)} \quad + \quad \sum\limits_{j=1}^{k} (w(e_j)\cdot\mathcal{C}(e_j))}{w(\mathsf{T})}\\[6pt]
    &=\frac{\frac{w(\mathsf{T}_1)}{w(\mathsf{T}_1)}\sum\limits_{e\in E(\mathsf{T}_1)}{\mathcal{C}(e) w(e)}+\ldots+\frac{w(\mathsf{T}_k)}{w(\mathsf{T}_k)}\sum\limits_{e\in E(\mathsf{T}_k)}{\mathcal{C}(e) w(e)} \quad + \quad \sum\limits_{j=1}^{k} (w(e_j)\cdot\mathcal{C}(e_j))}{\displaystyle\sum_{j=1}^{k} {\left(w(\mathsf{T}_j)+w(e_j)\right)}}\\[6pt]
    &=\frac{ \displaystyle\sum_{j=1}^{k} {\left( w(\mathsf{T}_j)\cdot \mathcal{C}(\mathsf{T}_j) \ + \ w(e_j) \cdot \mathcal{C}(e_j)\right)}} {\displaystyle\sum_{j=1}^{k} {\left(w(\mathsf{T}_j)+w(e_j)\right)}},
\end{align*}
which completes the proof.
\end{proof}

Next, we will lay out the proof of Proposition \ref{prop:node_imbal_time} which stated that we can obtain node imbalance values in linear time.

\begingroup
\def\thetheorem{\ref{prop:node_imbal_time}}
\begin{proposition}
For any rooted 3D tree $\mathsf{T}$ with $n$ leaves and $m$ interior nodes, all four 3D node imbalance statistics $\mathcal{A}$, $\alpha$, $\mathcal{M}$, and $\mu$ can be computed in time $O(m+n-1)$ for all $m+n-1$ non-root vertices.
\end{proposition}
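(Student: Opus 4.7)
The plan is to show the bound in two stages: first a bottom-up precomputation of the subtree weights and centroids in linear time, and then a constant-time per-vertex evaluation of each node imbalance statistic from those precomputed quantities.

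First I would fix a traversal order on $V(\mathsf{T})$ that visits every vertex after all of its descendants (for example, a post-order DFS or a reverse-BFS ordered by depth). Such an order can be produced in $O(m+n)$ time from the edge list, since each edge is inspected a constant number of times. Processing vertices in this order, I would compute the pair $(w(\mathsf{T}_v), \mathcal{C}(\mathsf{T}_v))$ using the recursions of Proposition \ref{prop:recursions}: for a leaf $v$, we initialize $w(\mathsf{T}_v)=0$ and $\mathcal{C}(\mathsf{T}_v)=v$; for an interior vertex $v$ with children $v_1,\ldots,v_k$ and incoming edges $e_{v_j}=(v,v_j)$, we combine the already-computed pairs for the $\mathsf{T}_{v_j}$ together with the $O(1)$ edge centroids $\mathcal{C}(e_{v_j})=\tfrac{1}{2}(v+v_j)$ and edge weights $w(e_{v_j})$. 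The work done at $v$ is proportional to its out-degree, so summing over all vertices the total work is proportional to $\sum_{v\in V}\deg^+(v) = |E| = m+n-1$, giving $O(m+n-1)$ for the precomputation.

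Once $(w(\mathsf{T}_v),\mathcal{C}(\mathsf{T}_v))$ is available for every $v\in V\setminus\{\rho\}$, evaluating each node imbalance statistic at $v$ takes only $O(1)$ arithmetic operations on three-dimensional vectors. Concretely, from $v$, $p(v)$ and $\mathcal{C}(\mathsf{T}_v)$ we form the two vectors $a=\mathcal{C}(\mathsf{T}_v)-v$ and $b=v-p(v)$, and then compute $\mathcal{A}_{\mathsf{T}}(v)=\arccos(\langle a,b\rangle/(|a|\,|b|))$ (or assign $0$ if $a=0$); the minimal centroid angle $\alpha_{\mathsf{T}}(v)$ is obtained by the case distinction in Definition \ref{def:3D_mCA}; $\mu_{\mathsf{T}}(v)=|a\times b|/(|a|\,|b|)$ follows directly from the distance-to-line formula in the preliminaries; and $\mathcal{M}_{\mathsf{T}}(v)$ is obtained from $\mu_{\mathsf{T}}(v)$ and the sign of $\pi/2-\mathcal{A}_{\mathsf{T}}(v)$ by Definition \ref{def:3D_exrelCD}. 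Each of these operations is $O(1)$, so the total cost over the $m+n-1$ non-root vertices is $O(m+n-1)$.

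Combining the two stages, the whole computation runs in $O(m+n-1)$, as claimed. The only subtle point is guaranteeing the post-order traversal and the recursion of Proposition \ref{prop:recursions} are both genuinely linear; this is not really an obstacle but it is the part I would want to state carefully, since naively recomputing $\mathcal{C}(\mathsf{T}_v)$ from scratch at every vertex would blow up to quadratic time. Using the recursion is essential here, and the correctness of the recursion has already been established in Proposition \ref{prop:recursions}, so no further work is needed for that component.
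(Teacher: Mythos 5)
Your proposal is correct and follows essentially the same route as the paper's proof: a bottom-up traversal that computes the subtree weights and centroids via the recursions of Proposition \ref{prop:recursions} in linear time, followed by a constant-time evaluation of each of $\mathcal{A}$, $\alpha$, $\mathcal{M}$, and $\mu$ per vertex from the precomputed data. The only difference is that you spell out the explicit $O(1)$ formulas (e.g., $\mu(v)=\vert a\times b\vert/(\vert a\vert\,\vert b\vert)$, which correctly matches the point-to-line distance formula) and the out-degree accounting, which the paper leaves implicit.
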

\addtocounter{theorem}{-1}
\endgroup

\begin{proof}
We can assume a data structure in which the ancestor and children of a vertex can be found in constant time as well as a node order in which it is assured that all descendants of a node appear before the node is given\footnote{For more information how this data structure and such a node order (e.g., nodes sorted by decreasing depth) can be created for the common \texttt{phylo} format \cite{paradis_definition_2020} in \textsf{R} \cite{RCoreTeam2021}, we refer to Algorithm 1 in the appendix of \cite{kersting_measuring_2021}.}.\\
At first, following the above-mentioned node order, we can calculate the weights and centroids of the pending subtree $\mathsf{T}_v$ for each vertex $v\in V(\mathsf{T})\setminus\{\rho\}$. This can be done in time $O(n+m-1)$ by using the recursions in Proposition \ref{prop:recursions}. Then, it only requires constant time to compute any of the imbalance values $\mathcal{A}$, $\alpha$, $\mathcal{M}$, and $\mu$ for a single node, because their simple formulas require only information that is already at hand, like the parent or the node itself, or readily calculated, like the subtree centroids. This results in a computation time of $O(n+m-1)$ in total.
\end{proof}

\subsection{Investigation of the maximal index values}  \label{sec:appendix_maxim}

Last but not least, we will showcase the investigation of the maximal index values:

To approach which maximal index values are possible and which trees reach these, we consider the simplest tree which can have a relevant imbalance: a two-edge path graph $\mathsf{T}$ that consists of only two edges $e_v=(\rho,v)$ and $e_l=(v,l)$, i.e., one edge from the root to the only other interior node $v$ and one from this interior node to the single leaf $l$ (see Figure \ref{fig:exampleSimpleMean} for examples). Although this tree type is simplistic, we can use the edge $e_v=(\rho,v)$ to model any imbalance situation that can occur for any interior edge $(p(u),u)$ (and its respective edge imbalance integral) inside a more complex tree $\mathsf{T}^\star$. We can do this for four reasons:\\
Firstly, all of our imbalance indices compute the imbalance of each edge separately, which means that we can calculate the imbalance of an edge independent of the imbalance of other edges. Secondly, the edge imbalance only depends on the edge's position and weight relative to the weight and centroid $\mathcal{C}(\mathsf{T}_u^\star)$ of $u$'s pending subtree (see Corollary \ref{cor:recursions_subdiv}). This means that we do not need the complete tree but only the knowledge about the position and weight of the current edge and subtree. Therefore, we can compute this edge imbalance just as well with the help of a two-edge path graph: We use the given edge $(p(u),u)\in E(\mathsf{T}^\star)$ as our root edge, i.e., $\rho=p(u)$ and $v=u$ and create the leaf edge $(v,l)$ with $l=v+2(\mathcal{C}(\mathsf{T}_u^\star)-v)$ such that it runs through the subtree centroid and has a length of double the distance between $u$ and its pending subtree's centroid. Then, we set $w((v,l))=w(\mathsf{T}_u^\star)$ and have thus ensured that both $\mathcal{C}(\mathsf{T}_v)=\mathcal{C}(\mathsf{T}_u^\star)$ and $w(\mathsf{T}_v)=w(\mathsf{T}_u^\star)$.\\
Thirdly, we can presuppose that our 3D imbalance indices are robust to resizing, shifting, and even to any rotation (see Theorem \ref{thm:indices_great} and Remark \ref{rem:root_edge}), thus we can further simplify this two-edge path graph $\mathsf{T}$ by moving, resizing, and rotating it such that $\rho=(0,0,1)^T$, $v=(0,0,0)^T$ and thereby $\ell((\rho,v))=1$. Last but not least, we can scale both edge weights with the same factor such that $w((\rho,v))=1$, because all eight 3D imbalance indices are a weighted mean of edge imbalance integrals and as such only need the proportion of these weights to stay the same.

\begin{figure}[htbp]
	\centering
    \begin{tikzpicture}[scale=0.85, > = stealth]
	\centering
    \draw[help lines, color=gray!60, dashed] (1.4,0.6) grid (17.6,4.8);
	\tikzset{std/.style = {shape=circle, draw, fill=white, minimum size = 0.8cm, scale=0.7}}
	\tikzset{dot/.style = {shape=circle, draw, fill=black, minimum size = 0.2cm, scale=0.5}}
	\node[align=left] at (2.6,3.4) {$\mathsf{T}_1$};
	\node[dot] (5) at (2.5,2) {};
	\node[std] (6) at (2,2) {$v$};
	\node[std] (7) at (2,4) {$\rho$};
	
	\path[-, line width=0.6mm, above] (7) edge node {} (6);
	\path[-, line width=0.6mm, above] (6) edge node {} (5);
	\node[align=left] at (4.6,3.4) {$\mathsf{T}_2$};
	\node[dot] (5) at (7.5,2) {};
	\node[std] (6) at (4,2) {$v$};
	\node[std] (7) at (4,4) {$\rho$};
	
	\path[-, line width=0.6mm, above] (7) edge node {} (6);
	\path[-, line width=0.6mm, above] (6) edge node {} (5);
	\node[align=left] at (9.6,3.4) {$\mathsf{T}_3$};
	\node[dot] (5) at (11,4) {};
	\node[std] (6) at (9,2) {$v$};
	\node[std] (7) at (9,4) {$\rho$};
	
	\path[-, line width=0.6mm, above] (7) edge node {} (6);
	\path[-, line width=0.6mm, above] (6) edge node {} (5);
	\node[align=left] at (12.6,3.4) {$\mathsf{T}_4$};
	\node[dot,scale=1.5] (5) at (14.5,1) {};
	\node[std] (6) at (12,2) {$v$};
	\node[std] (7) at (12,4) {$\rho$};
	
	\path[-, line width=0.6mm, above] (7) edge node {} (6);
	\path[-, line width=1.8mm, above] (6) edge node {} (5);
	\node[align=left] at (15.6,3.4) {$\mathsf{T}_5$};
	\node[dot] (5) at (17,3) {};
	\node[std] (6) at (15,2) {$v$};
	\node[std] (7) at (15,4) {$\rho$};
	
	\path[-, line width=0.6mm, above] (7) edge node {} (6);
	\path[-, line width=0.3mm, above] (6) edge node {} (5);
    \end{tikzpicture}
	\caption{All of the depicted trees have the simplest possible non-3D topology which is still interesting with regard to imbalance. Although simple, we can use this structure to explore the imbalance of any edge in a bigger tree by changing the angle between inner and pending edges as well as their length and weight proportions. Interior nodes are shown as circles, and leaves are depicted as small black dots.}
	\label{fig:exampleSimpleMean}
\end{figure}
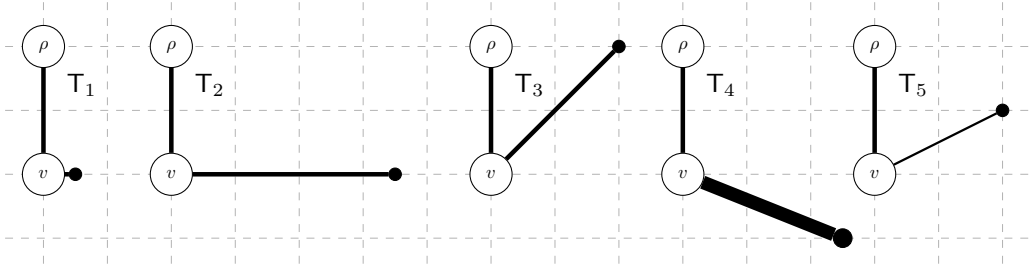

Now, at this stage, we only need to vary three parameters to simulate any edge imbalance: the angle $\mathcal{A}(v)$ between inner and pending edge, the pending edge length $L=\ell((v,l))$ and its weight, for which we decided to specify the edge width $W$ instead, i.e., $w((v,l))=W\cdot L$. This allows us to interpret everything more easily: the pending edge is exactly $L$ times as long as the inner edge and $W$ times as thick. Several examples of this are depicted in Figure \ref{fig:exampleSimpleMean}. Since $i_{e_l}=0$, the formulas for the imbalance values now also simplify drastically:
\[\widetilde{I}^{w}(\mathsf{T}) = \frac{1}{1+W\cdot L} \cdot i_{e_v}\qquad \text{and} \qquad \widetilde{I}^{\ell}(\mathsf{T}) = \frac{1}{1+L} \cdot i_{e_v} \qquad \text{with} \qquad
    i_{e_v}=\int_{0}^{1} i_{\mathsf{T},e_v}((0,x)^T) \,\mathrm{d}x\]
for edge subdivision $s_x=v+(p-v)\cdot x=(0,x)^T$ with $x \in [0,1]$. As $\mathcal{M}$ and $\mu$ are so closely related to the other two statistics, we will here restrict ourselves to only explore the maximization with regard to $i=\mathcal{A}$ and $\alpha$. Their simplified formulas are
\begin{align*}
A_{\mathsf{T},e_v}(s_x)=& \angle(\mathcal{C}(\mathsf{T}_{s_x})-s_x, v-p(v)) = \angle(c^T-(0,x)^T, (0,-1)^T) \\
\alpha_{\mathsf{T},e_v}(s_x)=&  \begin{cases}A_{\mathsf{T},e_v}(s_x) &\text{ if $A_{\mathsf{T},e_v}(s_x)\leq \frac{\pi}{2}$ } \\
\pi-A_{\mathsf{T},e_v}(s_x) &\text{ else}\end{cases}
\end{align*}
with
\begin{align*}
c=(c_1,c_2)\coloneqq \mathcal{C}(\mathsf{T}_{s_x})=& \frac{1}{x+W\cdot L}\left(W\cdot L \cdot \mathcal{C}(\mathsf{T}_{v})+ x \cdot \left(0,\frac{x}{2}\right)^T \right)\\
=&\frac{1}{x+W\cdot L}\left(W\cdot L \cdot \begin{pmatrix} \cos(\mathcal{A}) & -\sin(\mathcal{A}) \\ \sin(\mathcal{A}) & \cos(\mathcal{A}) \end{pmatrix} \cdot \left(0,-\frac{L}{2}\right)^T+ x \cdot \left(0,\frac{x}{2}\right)^T \right).
\end{align*}

We simulated this using the \texttt{optim} and \texttt{optimize} functions of the \textsf{stats} package in \textsf{R} (an \textsf{R}-script containing the implementations of these simplified functions and the optimization runs is made available at \url{https://github.com/SophieKersting/SupplementaryMaterial/blob/main/maximizeALW.R}). 

The first major observation in general was that with sufficiently large $L$ and $W$ and appropriate $\mathcal{A}(v)$ we could maximize the edge imbalance $i_{e_v}$ for both $\mathcal{A}$ and $\alpha$ (due to their relation this then also holds for $\mathcal{M}$ and $\mu$). Furthermore, we observed:
\begin{itemize}
    \item For the \enquote{full swing} approaches $\mathcal{A}$ and $\mathcal{M}$ the edge imbalance $i_{e_v}$ is maximized for a given $W$ by setting $\mathcal{A}(v)=\pi$ and choosing $L$ sufficiently large such that the centroid of the subtree $\mathcal{C}(\mathsf{T}_{s_x})$ for any edge subdivision with $x \in [0,1]$ lies somewhere at $(0,c_2)^T$ with $c_2>1$. Since the $c_2$-coordinate for $x=1$ is the smallest for all $x \in [0,1]$, namely $\frac{1/2\cdot 1+L/2 \cdot W\cdot L}{1+W\cdot L}$, we can use this to construct a lower bound for $L$:
    \begin{alignat}{2}
    & \qquad&\frac{1/2\cdot 1+L/2 \cdot W\cdot L}{1+W\cdot L} &> 1 \nonumber \\
    & \Leftrightarrow \qquad &L^2\cdot W \cdot \frac{1}{2} + \frac{1}{2} &> L\cdot W +1 \nonumber \\
    & \Leftrightarrow \qquad &L^2\cdot W - 2\cdot L\cdot W &>  1  \nonumber\\
    & \Leftrightarrow \qquad &L^2- 2\cdot L &>  \frac{1}{W} \nonumber\\
    & \Leftrightarrow \qquad &(L- 1)^2&>  \frac{1}{W}+1 \nonumber\\
    & \Leftrightarrow \qquad &L&>  \sqrt{\frac{1}{W}+1}+1 \label{eq:bound_L}
    \end{alignat}
    These equivalences hold because $L$ and $W>0$, which is why $-\underbrace{\sqrt{\frac{1}{W}+1}}_{>1}+1<0$, and only the positive square root remains as a solution for the lower bound of $L$.
    If we, for example, set $W=1$, then we can choose $L>\sqrt{\frac{1}{W}+1}+1\approx 2.414$ and $\mathcal{A}(v)=\pi$ such that the resulting edge imbalance $i_{e_v}$ is maximal, i.e., $\pi$ for $i=\mathcal{A}$ and $2$ for $i=\mathcal{M}$. However, even smaller $L$ are already sufficient for there to be no computational difference between the real value $i_{e_v}$ and the supremum $\pi$ or $2$, respectively.
    \item For the \enquote{sideways swing} approaches $\mu$ and $\alpha$ the investigation yielded that the larger $L$ and $W$ are, with $\mathcal{A}(v)$ being slightly larger than $\frac{\pi}{2}$, the closer we can get to a maximal edge imbalance value of $\frac{\pi}{2}$ for $\alpha$ and $1$ for $\mu$. This holds because for such high values $\mathcal{C}(\mathsf{T}_{s_x})$ lies at roughly a right angle for all edge subdivisions. For example, we can reach $\alpha_{e_v}=1.570742$, which agrees with $\frac{\pi}{2}$ up to four decimal places, when using $\mathcal{A}(v)=1.570926$, $L=10,000$, and $W=1,000$.
\end{itemize} 

Thus, in a second step we are more interested in which angle $\mathcal{A}(v)$ maximizes the edge imbalance integral $i_{e_v}$ for a given $L$ and $W$. In Figure \ref{fig:maxAngles} we can see the maximal edge imbalance value $i_{e_v}$ ($i=\mathcal{A}$ or $\alpha$) for 100 values of $L$ ranging between $0.01$ and $3.5$ as well as the maximizing angle $\mathcal{A}(v)$, all for a fixed width $W=1$. The procedure was repeated for different values of $W$, which yielded similar results, but slightly shifted. These are the most interesting observations:
\begin{itemize}
    \item For $\mathcal{A}$ we found that $\mathcal{A}(v)=\pi$ is the maximizing angle if $L\geq 1$ independent of $W$. The smaller $W$ the higher the maximizing angle $\mathcal{A}(v)$ for $L$ close to $0$. In accordance with the lower bound of $L$ featured in Equation \eqref{eq:bound_L}, $\mathcal{A}_{e_v}$ reaches $\pi$ for a smaller $L$ if $W$ is increasing.
    \item For $\alpha$ we can observe that the maximizing angle $\mathcal{A}(v)$ is not monotonically increasing with $L$ and instead has a local minimum close to zero and a local maximum for $L\approx 1.339$ for $W=1$. The local maximum moves to the left with creasing $W$, e.g., $L\approx 1.429$ for $W=0.5$ and $L\approx 1.127$ for $W=2$. Furthermore, $\alpha_{e_v}$ is converging significantly more slowly towards its supremum for increasing $L$ than $\mathcal{A}_{e_v}$. However, similar to $\mathcal{A}$, $\alpha_{e_v}$ increases faster for higher $W$.
\end{itemize}

\begin{figure}[ht]
	\centering
	\begin{tikzpicture}[scale=0.8]
	\node (myplot) at (0,0) {\includegraphics[scale=0.57]{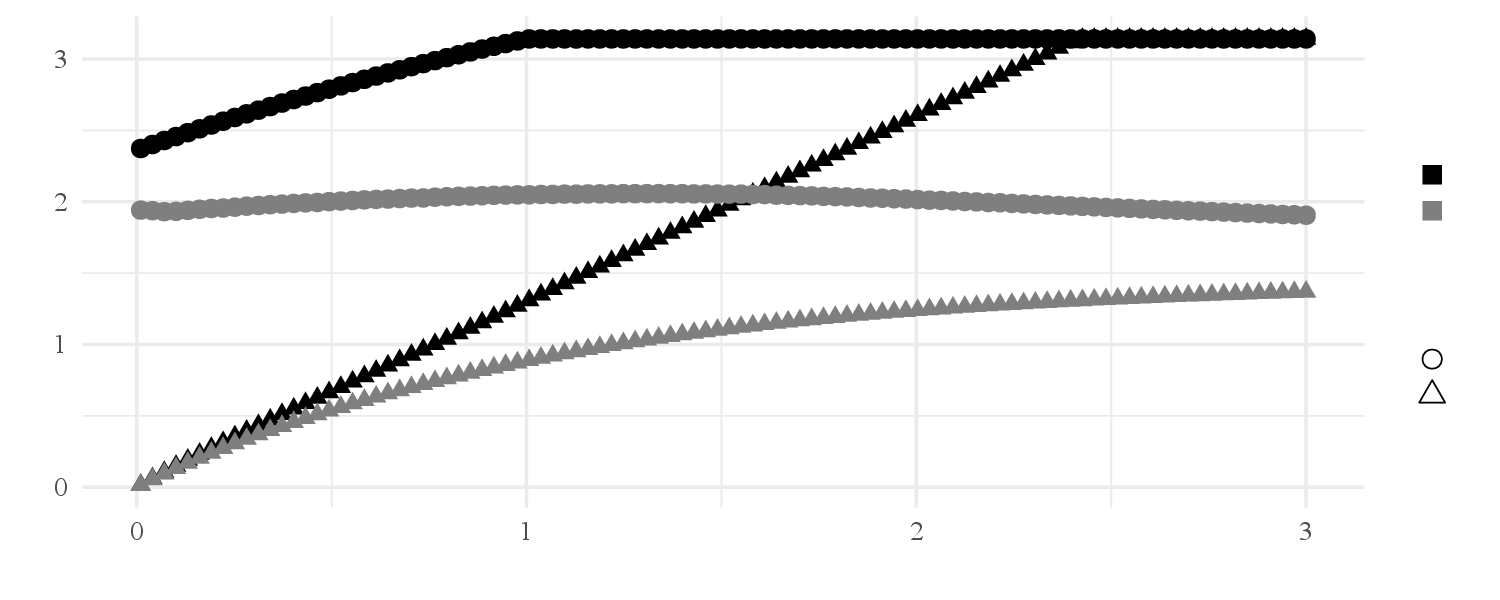}};
	\node[rotate=90] (y) at (-9,0.3) {Angle $\mathcal{A}(v)$ and edge imbalance $i_{e_v}$};
	\node (x) at (0,-3.3) {pending edge length $L$};
	\node[right] (A) at (8.4,1.55) {$\mathcal{A}$};
	\node[right] (alpha) at (8.4,1.05) {$\alpha$};
	\node[right] (A) at (8.4,-0.7) {$\mathcal{A}(v)$};
	\node[right] (ie) at (8.4,-1.2) {$i_{e_v}$};
	\end{tikzpicture}
	\caption{Simulating the angle which maximizes $\mathcal{A}_{e_v}$ and $\alpha_{e_v}$ for a fixed $W=1$.}
	\label{fig:maxAngles}
\end{figure}

\subsection{Further measurements} \label{sec:further_measures}

The four node imbalance approaches can also have other usages.
For instance, all these node measurements give us the chance to explore how they are related to other factors. It is not uncommon to look at the relationship of several values or measurements when exploring the shapes of plants, e.g., the ratio of crown-width-to-tree-height for trees \cite{kunz_neighbour_2019} or root length density profiles in which the density of the roots is analyzed with regard to the $x_3$-axis, i.e., the depth in the soil \cite{landl_new_2017}. In our case, we could create profiles that depict how the average imbalance value changes with the height ($z$- or $x_3$-coordinate) of the nodes. Similarly, we could portray node imbalance in relation to the path length from the root to the node or the path length from the node to its nearest descendant leaf. This would give us insight into how the position of a branch in the tree or the type of branch matters with respect to its imbalance. For example, it is already known that the angle at which a new root section grows, i.e., the angle between an edge $(v,u)$ and its ``parent'' edge $(p(v),v)$, varies between different classes of roots \cite{morris_shaping_2017}. It is an interesting research question if the same holds for the centroid angle and the other node imbalance statistics.\\ 
In Figure \ref{fig:imbalanceProfiles} we can see the distinctly different profiles of two different beans. The data was produced with the help of the function \texttt{imbalProfile} provided by our package \textsf{treeDbalance} (the usage is described in Appendix \ref{sec:software}) which subdivides the edges according to a maximal edge length (which is set by the user, in this instance, it was 0.05 cm), and computes the profile data for each such edge subdivision and all regular non-root nodes. Moreover, the figure depicts a moving average of the imbalance values over the reference data.

\begin{figure}[ht]
	\centering
	\begin{tikzpicture}
	\node (myplot) at (0,0) {\includegraphics[width=0.98\textwidth]{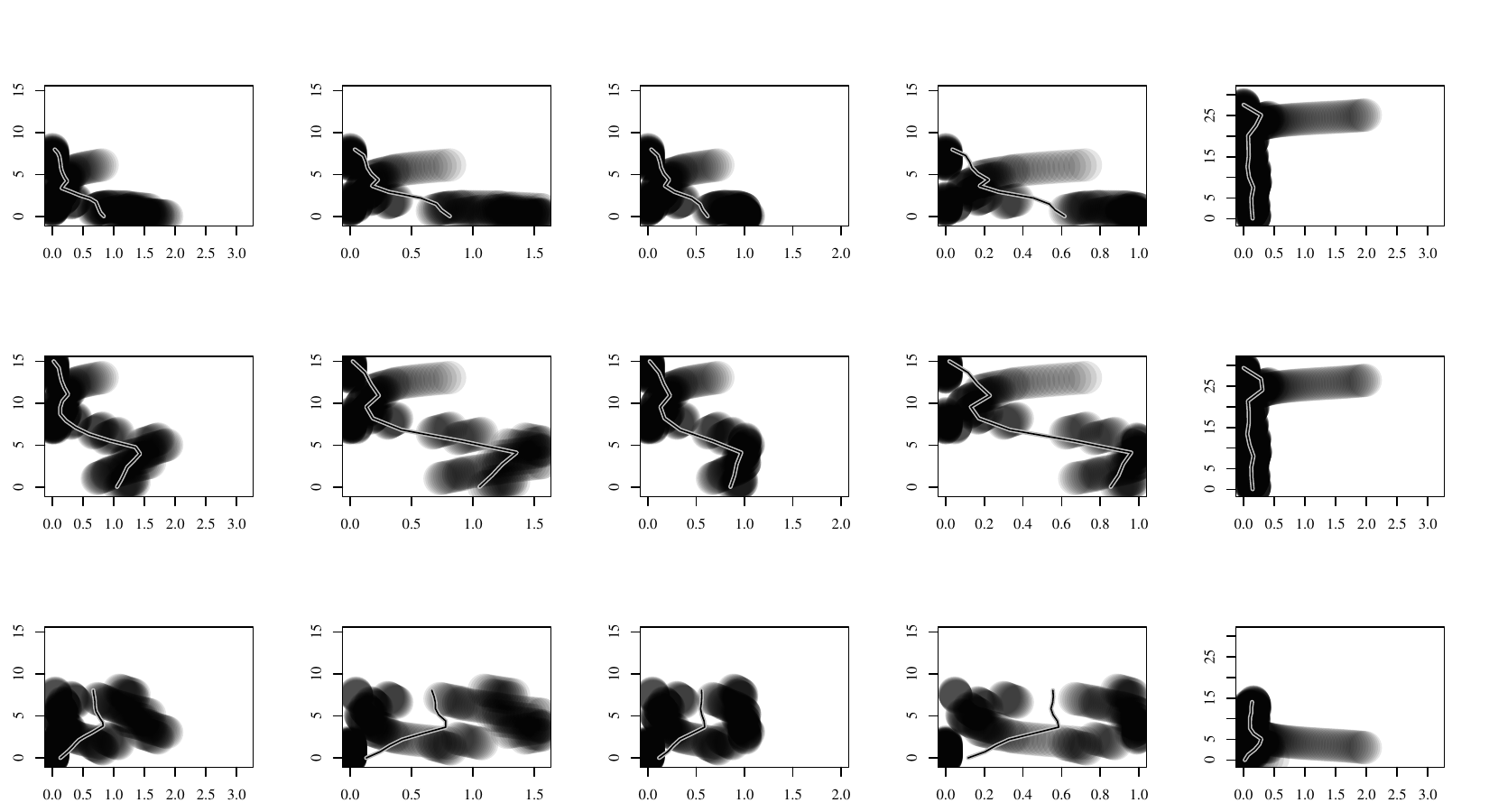}};
	\node[below] (zcoord) at (0,4.1) {Profiles regarding $x_3$-coordinate};
	\node[below] (rootpath) at (0,1.2) {Profiles regarding root path length};
	\node[below] (leafpath) at (0,-1.7) {Profiles regarding descendant leaf path length};
	\node (id) at (-6,3.2) {\small ID 59};
	\node (id) at (-2.75,3.2) {\small ID 59};
	\node (id) at (0.5,3.2) {\small ID 59};
	\node (id) at (3.75,3.2) {\small ID 59};
	\node (id) at (7,2.7) {\small ID 10};
	\node (id) at (-6,0.25) {\small ID 59};
	\node (id) at (-2.75,0.25) {\small ID 59};
	\node (id) at (0.5,0.25) {\small ID 59};
	\node (id) at (3.75,0.25) {\small ID 59};
	\node (id) at (7,-0.25) {\small ID 10};
	\node (id) at (-6,-2.7) {\small ID 59};
	\node (id) at (-2.75,-2.7) {\small ID 59};
	\node (id) at (0.5,-2.7) {\small ID 59};
	\node (id) at (3.75,-2.7) {\small ID 59};
	\node (id) at (7,-3.2) {\small ID 10};
	\node (A) at (-6.5,1.4) {\footnotesize$\mathcal{A}$};
	\node (alpha) at (-3.25,1.4) {\footnotesize$\alpha$};
	\node (M) at (0,1.4) {\footnotesize$\mathcal{M}$};
	\node (mu) at (3.25,1.4) {\footnotesize$\mu$};
	\node (A) at (6.5,1.4) {\footnotesize$\mathcal{A}$};
	\node (A) at (-6.5,-1.5) {\footnotesize$\mathcal{A}$};
	\node (alpha) at (-3.25,-1.5) {\footnotesize$\alpha$};
	\node (M) at (0,-1.5) {\footnotesize$\mathcal{M}$};
	\node (mu) at (3.25,-1.5) {\footnotesize$\mu$};
	\node (A) at (6.5,-1.5) {\footnotesize$\mathcal{A}$};
	\node (A) at (-6.5,-4.4) {\footnotesize$\mathcal{A}$};
	\node (alpha) at (-3.25,-4.4) {\footnotesize$\alpha$};
	\node (M) at (0,-4.4) {\footnotesize$\mathcal{M}$};
	\node (mu) at (3.25,-4.4) {\footnotesize$\mu$};
	\node (A) at (6.5,-4.4) {\footnotesize$\mathcal{A}$};
	\end{tikzpicture}
	\caption{Imbalance profiles for the \enquote{looping} bean with ID 59 and the taller slightly curved bean with ID 10. The $x$-axis contains the imbalance values while the $y$-axis contains the reference data. The corresponding node imbalance statistic is indicated on the $x$-axis, respectively. Each node or edge subdivision is represented by one transparent point, the line represents a moving average of the imbalance values.}
	\label{fig:imbalanceProfiles}
\end{figure}

Comparing the course of the moving averages could also be a way to analyze the similarities and differences of several beans. When it comes to machine learning, we could also use the imbalance profile data as a whole as a basis for differentiating growth patterns.

Given imbalance data based on tightly distributed edge subdivisions as in the profiles, we could also use the fraction of such nodes or edge subdivisions $v$ for which $\mathcal{A}(v)>\alpha(v)$, i.e., $\mathcal{A}(v)> \frac{\pi}{2}$, as an indicator of which portion of the tree is extremely imbalanced in the \enquote{full swing}-perspective.

\end{document}